%% file: main.tex
\documentclass[prl,twocolumn,superscriptaddress,floatfix]{revtex4-2}

\usepackage{amsmath,amssymb,amsthm}
\usepackage{braket}
\usepackage{graphicx}
\usepackage{booktabs}
\usepackage{hyperref}
\usepackage{color}
\usepackage{enumitem}

\newcommand{\supplM}{Appendix}

\graphicspath{{figures/}}

\newcommand{\inp}[2]{\langle #1 | #2 \rangle}
\newcommand{\outp}[2]{| #1 \rangle \langle #2 |}

\newcommand{\proj}[1]{| #1 \rangle \langle #1 |}
\newcommand{\Complex}{\mathbb{C}}
\newcommand{\E}{\mathbb{E}}
\newcommand{\tr}{\operatorname{tr}}

\newtheorem{theorem}{Theorem}
\newtheorem{lemma}[theorem]{Lemma}
\newtheorem{proposition}[theorem]{Proposition}
\newtheorem{corollary}[theorem]{Corollary}

\theoremstyle{definition}
\newtheorem{definition}[theorem]{Definition}
\newtheorem{example}[theorem]{Example}
\theoremstyle{remark}
\newtheorem{remark}[theorem]{Remark}

\begin{document}

\title{Quantum Advantage for Coordinated Frequency Selection Against Distributed Jammers}

\author{Stephanie Wehner}
\email{s.d.c.wehner@tudelft.nl}
\affiliation{QuTech, Delft University of Technology, 2628~CJ Delft, The Netherlands}
\affiliation{Quantum Computer Science, EEMCS, Delft University of Technology,
Lorentzweg 1, 2628 CJ Delft, The Netherlands}
\affiliation{Kavli Institute of Nanoscience, Delft University of Technology, Lorentzweg 1, 2628 CJ Delft, The Netherlands}

\date{\today}

\begin{abstract}
Consider two parties who want to agree on a common frequency band for communication in the presence of independent
jammers. Such jammers block a different subset of bands at each site, where each party can observe only its own set of unjammed  bands.
Yet, they must agree on a common band without communicating. We first establish the optimal classical strategy, maximizing
the probability they output a common frequency band in a single shot. We proceed to show that sharing an entangled pair of local dimension~$d$
allows the parties to coordinate strictly better, provided both the
number of safe bands~$d$ and the spectrum size~$n$ are sufficiently large. 
We study explicit quantum strategies offering a pathway to near-term demonstrations, 
including an explicit strategy for $d = 2$
that outperforms the classical optimum for all spectrum sizes, achieving a $5.4\%$ advantage asymptotically (in $n$) with
just one shared Bell pair. Our approach is based on a general framework for constructing quantum strategies from
classical spreading sequences via symmetric orthonormalization that may be of independent interest, and opens the door to concrete applications of quantum networks for cognitive radio and spread-spectrum communication.
\end{abstract}

\maketitle

%%% INTRODUCTION (unnumbered, flowing) %%%

A quantum network can be used to distribute entanglement between distant
quantum memories, creating a resource that persists until
needed~\cite{Wehner2018, Kimble2008}.  Entanglement that is stored at distant quantum nodes may subsequently be 
measured to obtain non-local correlations that are strictly stronger than what is possible classically, an effect known as Bell 
non-locality~\cite{Brunner2014,Bell1964}.
Recent work has begun to use non-local correlations for practical applications for low-latency coordination. Here, measurements are chosen depending on local observations such as customer demands or the behaviour of financial markets. In turn, the outcomes of such measurements are
used to make decisions, such as which servers to route customer demands to, or what market actions to take. When entanglement has been pre-shared, this can allow a (near) instantaneous form of coordination that exploits the fundamental concept of Bell-nonlocality to beat any classical method.  Example applications include load balancing~\cite{DaSilva2025,Arun2025}, distributed control~\cite{Nguyen2013}, rendezvous problems on graphs~\cite{Mironowicz2023,Viola2024,Tucker2024}, high-frequency trading~\cite{Telepathy2024}, decentralized team coordination~\cite{DeshpandeKulkarni2023}, and entanglement-assisted communication over multiple-access channels~\cite{Leditzky2020,Notzel2020}. We refer to~\cite{DingXu2026} for a recent overview of coordination
applications of nonlocality, and to~\cite{LatencyConstraints2025}
for an extension where some communication is allowed in addition to
nonlocal correlations.

In this work, we introduce and analyze a new coordination problem and show that
entanglement provably helps: \emph{spectrum allocation under
adversarial jamming}.  Consider $n$ frequency bands shared between two
spatially separated parties, Alice and Bob, who wish to communicate on
a common band.  Each site faces an independent jammer that blocks $k$
of the $n$ bands.  Alice and Bob each observe their local set of $d =
n - k$ safe (unjammed) bands, but cannot see which bands are safe at
the other site.  They wish to select a common frequency band without communicating during the process. 
That is, Alice and Bob succeed if and only if they choose the \emph{same} band and
it is safe at \emph{both} sites.  We call this the $(n,k)$-jamming
game (Fig.~\ref{fig:game}).

We focus on the single-shot success probability
\begin{equation}\label{eq:omega}
    \omega = \frac{1}{|\mathcal{S}|^2}
    \sum_{x,y \in \mathcal{S}} \sum_{c \in x \cap y} P(c,c \mid x,y)\,,
\end{equation}
where $\mathcal{S} = \{S \subseteq [n] : |S| = d\}$ with $[n]=\{0,\ldots,n-1\}$ collects all
possible safe sets, and $P(a,b \mid x,y)$ is the probability that
Alice outputs~$a$ and Bob outputs~$b$ given safe sets~$x$ and~$y$. 
We here take $x$ and $y$ to be chosen uniformly at random, modelling unknown adversarial jamming.
Our single-shot performance metric $\omega$ is applicable to
fast-fading environments where the jamming pattern changes
independently between rounds, so maximizing $\omega$ is equivalent to
minimizing the expected time to
rendezvous~\cite{Bian2006}, i.e. the number of rounds until the parties
first communicate successfully.  The jamming game is closely related to the blind rendezvous problem
in cognitive radio~\cite{Shin2010, Lin2011, BlindRendezvousSurvey2022},
with the adversarial jammers playing the role of primary users. We remark that quantum game
theory has been applied to cooperative spectrum
sharing~\cite{Zabaleta2017}, but in a fundamentally different setting where players apply unitaries to a shared state controlled by a referee,
whereas here the game itself is classical and the advantage arises from
pre-shared entanglement measured locally.

Let us now consider different strategies that the two parties might employ in order to maximize the success probability~\eqref{eq:omega}.
A \emph{classical strategy ($cl$)} consists of shared randomness, also known as local-hidden variables~\cite{Brunner2014}.
When maximizing the probability $\omega_{cl}$ to win using a classical strategy, we can assume a deterministic strategy (Eq.(25) in~\cite{Brunner2014}), which can be expressed as functions
$f,g\colon \mathcal{S} \to [n]$ for that Alice ($f(x)=a$) and Bob ($g(y)=b$) use to map a safe set to a choice of frequency band. A \emph{quantum strategy} additionally uses a shared entangled state
$\ket{\Psi}$ and local measurements: upon receiving safe set~$x$,
Alice measures her half of $\ket{\Psi}$ with a measurement depending on $x$ to choose a channel, and Bob
does the same with his half.  The resulting correlations can be written as $P(a,b|x,y) =
\bra{\Psi} M_a^x \otimes N_b^y \ket{\Psi}$, in terms of measurement operators $\{M_a^x\}_a$ for Alice, and $\{N_b^y\}_b$ for Bob.

We now first establish the optimal classical winning probability (Theorem~1).
We then present a general framework for constructing quantum strategies from seed vectors
via L\"owdin orthonormalization which may be of independent interest. Using random seed vectors, we show 
entanglement provides an advantage for sufficiently large $d$ and $n$ (Theorem~2). Finally, we turn to explicit, deterministic seed frame
constructions, demonstrating near-term feasibility (Theorem 3). Full technical details are provided in~\supplM, software for the numerical calculations, as well as /using Lean formalized proofs can be found at~\cite{JammingNumerics}.

%%% FIGURE 1: Game illustration %%%
\begin{figure}[t]
    \centering
    \includegraphics[width=\columnwidth]{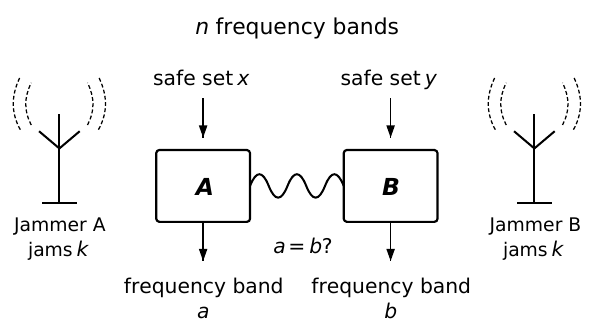}
    \caption{The $(n,k)$-jamming game. Alice and Bob each observe $d = n-k$
    safe frequency bands (set of unjammed bands).  They output a choice of frequency band $a$ and $b$ respectively, and succeed if they agree ($a=b$).  The jammers act
    independently and uniformly at random.}
    \label{fig:game}
\end{figure}

%%% FIGURE 3: Heatmap %%%
\begin{figure*}[tp]
    \centering
    \includegraphics[width=\textwidth]{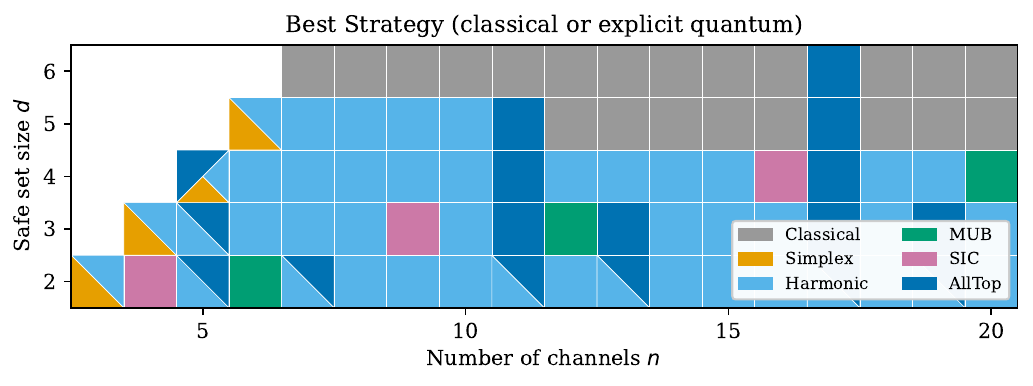}
    \caption{Best explicit quantum strategy at each $(n,d)$ point.  Cells
    are colored by the frame achieving the highest winning probability.
    The harmonic frame dominates the interior, while structured frames
    (MUB, SIC, simplex) are optimal when they exist. %Cells marked ``None'' indicate no quantum advantage from
   %explicit frames.
   }
    \label{fig:heatmap}
\end{figure*}

%%% TABLE: Frame comparison %%%
\begin{table*}[tp]
\centering
\small
\begin{tabular}{ccc|c|ccccc|c}
\toprule
$n$ & $d$ & $k$ & $\omega_{cl}$ & Simplex & Harmonic & MUB & SIC & AllTop & Opt.\ \\
\midrule
3 & 2 & 1 & 0.5556 & \textbf{0.5833} & 0.5833 & -- & -- & -- & 0.5833 \\
4 & 2 & 2 & 0.3889 & -- & 0.4119 & -- & \textbf{0.4167} & -- & 0.4167 \\
5 & 2 & 3 & 0.3000 & -- & 0.3184 & -- & -- & 0.3184 & \textbf{0.3230} \\
6 & 2 & 4 & 0.2444 & -- & 0.2595 & \textbf{0.2644} & -- & -- & 0.2644 \\
\midrule
4 & 3 & 1 & 0.6250 & \textbf{0.6451} & 0.6451 & -- & -- & -- & 0.6451 \\
5 & 3 & 2 & 0.4600 & -- & 0.4767 & -- & -- & 0.4767 & \textbf{0.4808} \\
6 & 3 & 3 & 0.3650 & -- & 0.3783 & -- & -- & -- & \textbf{0.3852} \\
\midrule
5 & 4 & 1 & 0.6800 & \textbf{0.6909} & 0.6909 & -- & -- & 0.6909 & 0.6909 \\
6 & 4 & 2 & 0.5200 & -- & 0.5290 & -- & -- & -- & \textbf{0.5333} \\
\midrule
6 & 5 & 1 & 0.7222 & \textbf{0.7260} & 0.7260 & -- & -- & -- & 0.7260 \\
7 & 6 & 1 & 0.7551 & 0.7537 & 0.7537 & -- & -- & -- & \textbf{0.7537} \\
\bottomrule
\end{tabular}
\caption{Quantum winning probabilities for different seed frame constructions.
Bold indicates the best value for each $(n,d)$.
A dash (--) means the frame does not exist for those parameters.
The last column (Opt.)\ reports the numerically optimized seed frame~\cite{JammingNumerics}.
At $(n,d) = (7,6)$, even the optimized quantum value falls below~$\omega_c$.}
\label{tab:frame-comparison}
\end{table*}
\smallskip
%%% SECTION I: CLASSICAL VALUE %%%
\textit{Classical winning probability:}
In order to claim a quantum advantage, we first establish how well a classical strategy can perform at the jamming game.

\textbf{Theorem 1}---\textit{The maximum success
probability attainable by any classical strategy is}
\begin{equation}\label{eq:classical}
    \omega_{cl}(n,k)
    = \frac{1}{\binom{n}{k}^2}
      \sum_{i=0}^{k} \binom{n\!-\!1\!-\!i}{k-i}^{\!2}\,.
\end{equation}

We establish our result in three steps (see \supplM\ for details):

Step 1 ($f=g$):
First of all, note that in~\eqref{eq:omega} for deterministic classical strategies $P(c,c|x,y)=1$ if and only if $f(x) = g(y) = c$.
Since outputting an unsafe band never wins, we may assume $f(x) \in x$ and $g(y) \in y$,
so that $\omega_{cl}$ reduces to counting pairs
$|\{(x,y) : f(x) = g(y)\}|$.
For a fixed output~$c$, the number of such pairs is
$|f^{-1}(c)|\,|g^{-1}(c)|$, where $f^{-1}(c) = \{x : f(x) = c\}$
and similarly for~$g$. Thus summing over~$c$ and normalizing,
\begin{align}
  \omega_{cl} = \frac{1}{|\mathcal{S}|^2}
  \sum_{c \in [n]} |f^{-1}(c)|\,|g^{-1}(c)|\,.
\end{align}
By Cauchy--Schwarz,
$\sum_c |f^{-1}(c)|\,|g^{-1}(c)|
\leq \sqrt{\sum_c |f^{-1}(c)|^2}\,\sqrt{\sum_c |g^{-1}(c)|^2}$,
with equality when the two vectors are proportional.
Since both satisfy
$\sum_c |f^{-1}(c)| = \sum_c |g^{-1}(c)| = |\mathcal{S}|$,
proportionality forces $|g^{-1}(c)| = |f^{-1}(c)|$ for all~$c$,
so maximizing $\omega_{cl}$ allows choosing $g = f$.

Step 2 (Saturation): Setting $g = f$, it remains to maximize 
$\sum_c |f^{-1}(c)|^2$ subject to $f(x) \in x$. Since each band~$c$ 
belongs to exactly $\binom{n-1}{k}$ safe sets, we have 
$|f^{-1}(c)| \leq \binom{n-1}{k}$. If the largest $|f^{-1}(c)|$ is 
below this bound, there exists a safe set~$x$ with $c \in x$ but 
$f(x) \neq c$. Changing $f(x)$ to~$f(x)=c$ moves~$x$ from a smaller to a 
larger $f^{-1}$, which by convexity of $t \mapsto t^2$ strictly increases $\sum_c |f^{-1}(c)|^2$. 
Hence at optimum, the largest $|f^{-1}(c)|$ must equal 
$\binom{n-1}{k}$.

Step 3 (Greedy strategy): W.l.o.g. let band $c=1$  reach this value.
The  remaining safe sets (those not containing $1$) form the input space 
of a $(n\!-\!1, k\!-\!1)$-jamming game on bands $\{2, \ldots, n\}$. 
The strategy $f^*(x) = \min(x)$ (outputting the smallest band in~$x$) 
achieves this: it saturates channel $1$ and recursively solves each 
sub-game. Induction on~$k$ shows that $f^*$ is optimal and yields 
Eq.~\eqref{eq:classical}.

It is instructive to consider Theorem~1 for fixed~$d$ and large~$n$,
where the classical value decays as
$\omega_{cl}^* \sim d^2\big/\!\left((2d\!-\!1)\,n\right)$:
coordination becomes harder as safe sets become sparse subsets of a
growing spectrum. This is the baseline that quantum strategies will
need to beat.

%%% SECTION II: QUANTUM STRATEGIES %%%
\smallskip
\textit{Framework:}
We now present a general framework of constructing quantum strategies for the jamming game. 
First, since the jamming game is a synchronous game~\cite{PaulsenSeverini2016}, we know that the optimal shared state is a maximally entangled state
\begin{equation}
\ket{\Psi_d} = \frac{1}{\sqrt{d}} \sum_{j=0}^{d-1} \ket{j}\ket{j}\ .
\end{equation}
Motivated by the desire to allow an implementation using minimal hardware requirements, we choose a minimal local dimension $d =
n-k$, where $d$ is the number of measurement outcomes. We now let Alice's measurements be given by the bases
$\mathcal{A}_x = \big\{\ket{v_x^c} \mid c \in x\ \}$ for safe set $x \in \mathcal{S}$ with $|x| = d$.
We then choose for Bob the same bases as for Alice up to conjugation $\mathcal{B}_y = \big\{\ket{w_y^c} \mid c \in y\ \}$, where $\ket{w_s^c} = \ket{v_s^c}^*$. Using the identity $\bra{\Psi_d} A \otimes B\ket{\Psi_d} = \tr(AB^T)/d$~\cite{NielsenChuang2010}, we can then write the quantum success probability~\eqref{eq:omega} as 
\begin{equation}\label{eq:omega-q}
\omega_q  = \frac{1}{d |S|^2} \sum_{c \in [n]} \left[\sum_{\substack{x,y\\c \in x \cap y}}  \left| \inp{v_x^c}{v_y^c} \right|^2\right]\ ,
\end{equation}
where $\ket{v_s^c} = 0$ for all $c \notin s$ is the zero vector, allowing us to rearrange the sums.
We see that for any fixed $c$, the sum in brackets becomes large when the
measurement vectors for the \emph{same} band~$c$ across
\emph{different} safe sets point in a similar direction.  The challenge
in choosing Alice and Bob's measurement basis is thus to make these cross-context vectors 
as aligned as possible, while satisfying the requirements for a measurement basis, i.e. that the vectors
are orthonormal within each safe set.
Our construction resolves this tension in two physically motivated
steps:

\emph{Seed vectors.}  We assign one \emph{seed vector}
$\ket{\phi_c} \in \Complex^d$ to each channel $c \in [n]$. These can be thought of
 as ``ideal'' directions that encode channel identity.  Because
$n > d$, these $n$ vectors cannot all be orthogonal, so we choose a
frame (an overcomplete set spanning~$\Complex^d$) whose elements are
as uniformly spread as possible.

\emph{L\"owdin orthonormalization.}  Within each safe set~$x$
(containing exactly $d$ channels), the $d$ seed vectors can thus be
non-orthogonal. If the vectors span~$\Complex^d$, we orthonormalize them using
the L\"owdin (symmetric) procedure~\cite{Lowdin1950}, originally
developed in quantum chemistry to construct orthonormal molecular
orbitals from overlapping atomic orbitals while preserving their
spatial character as much as possible:
\begin{equation}\label{eq:lowdin}
    \ket{v_x^c} = \sum_{c' \in x}
    \!\left[G_x^{-1/2}\right]_{\!c',c}\ket{\phi_{c'}},
    \quad c \in x\,,
\end{equation}
where $G_x$ is the Gram matrix with entries $[G_x]_{c,c'} =
\inp{\phi_c}{\phi_{c'}}$.  Among all orthonormalization procedures,
L\"owdin's is the one that keeps each output vector closest to its
input~\cite{Lowdin1950,Mayer2003}, i.e. it distorts the seed directions as
little as possible.  This maximal preservation of the seed geometry is
what drives cross-context alignment and hence large $\omega_q$ in Eq.~\eqref{eq:omega-q}.
When the seed vectors within a safe set are not linearly independent
(so that $G_x$ is singular), we replace $G_x^{-1/2}$ by the
Moore--Penrose pseudoinverse $(G_x^+)^{1/2}$, yielding the Pretty
Good Measurement (PGM)~\cite{Belavkin1975,HausladenWootters1994}.
Figure~\ref{fig:bloch} illustrates this construction on the Bloch
sphere for~$d = 2$.

%%% FIGURE 2: Bloch sphere %%%
\begin{figure}[t]
    \centering
    \includegraphics[width=\columnwidth]{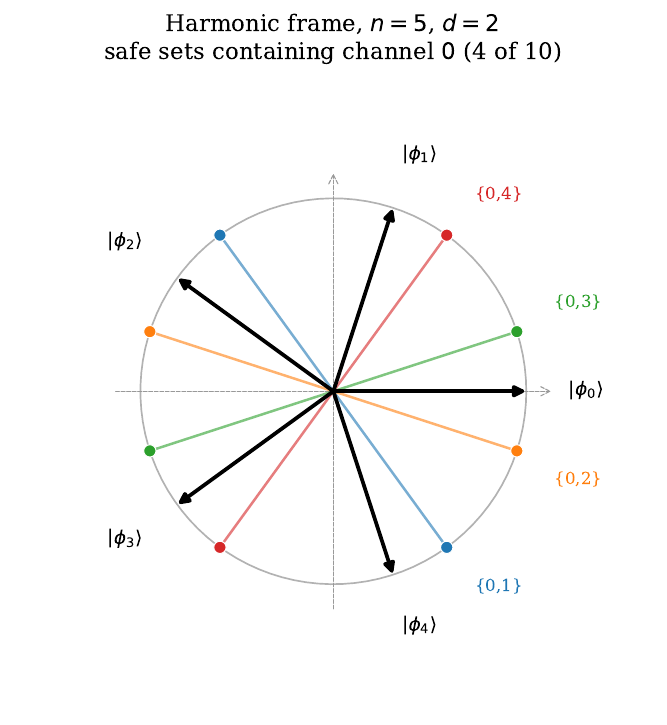}
    \caption{Bloch-sphere visualization of the quantum strategy for the
    harmonic frame with $n = 5$, $d = 2$.  Black arrows: seed vectors
    $\ket{\phi_c}$.  Colored lines: measurement bases (pairs of antipodal
    L\"owdin vectors) for six safe sets.  The orthonormalization rotates
    each pair toward the seed vectors while maintaining orthogonality,
    preserving cross-set alignment.}
    \label{fig:bloch}
\end{figure}

\emph{Quantum advantage for large~$d$:}
We now examine strategies that beat any classical strategy, by choosing appropriate seed vectors.
We first prove that entanglement
helps for all sufficiently large safe-set sizes, provided $n$ is large enough.  Drawing
seed vectors independently from the Haar measure on the unit sphere
in~$\Complex^d$, we establish:

\textbf{Theorem 2.}---\textit{There exists $d_0$ such that for all $d \geq d_0$, there
exists $N(d)$ such that for all $n \geq N(d)$,}
\begin{equation}\label{eq:advantage}
    \E[\omega_q] > \omega_{cl}^*(n,\, n\!-\!d)\,,
\end{equation}
\textit{where the expectation is taken over the Haar measure.
The asymptotic advantage ratio satisfies}
\begin{equation}\label{eq:ratio}
    \lim_{n \to \infty}
    \frac{\E[\omega_q]}{\omega_c^*}
    = \frac{(2d\!-\!1)\,\mathcal{L}_1}{d}\,,
\end{equation}
\textit{where $\mathcal{L}_1 = \alpha_d^2 + (1 - \alpha_d)^2/(d-1)$.}

Here, $\mathcal{L}_1$ is the expected squared L\"owdin overlap $\E[|\inp{v_X^c}{v_Y^c}|^2]$ between measurement vectors for the same channel~$c$ from two safe sets $X, Y$ sharing only that channel ($|X \cap Y| = 1$), averaged over the Haar measure on seed vectors. The alignment parameter is $\alpha_d := \E[\langle 0|S_X^{-1/2}|0\rangle^2]$, where $X$ is any safe set of size~$d$ containing channel~$0$, $S_X = \sum_{c \in X} \proj{\phi_c}$, and the expectation is over the $d-1$ Haar-random seed vectors in $X \setminus \{0\}$. $\alpha_d$ measures how much of the original seed direction survives L\"owdin orthonormalization on average. Note that since the expected value beats classical, we can conclude there exists at least one quantum strategy that beats classical. We establish our result in three steps (see \supplM\ for details):

Step 1 (Concentration): For fixed~$d$ and large~$n$,
the winning probability is dominated by pairs of safe sets sharing
a single element. This gives
$\E[\omega_q] \approx (d/n)\,\mathcal{L}_1$,
where $\mathcal{L}_1 = \E[|\inp{v_X^c}{v_Y^c}|^2]$ is the expected squared L\"owdin overlap for safe sets $X, Y$ sharing a single channel~$c$ (i.e., $|X \cap Y| = 1$), averaged over the Haar measure on seed vectors.

Step 2 (Computing $\mathcal{L}_1$): For safe sets sharing a
single element, the remaining seed vectors are drawn independently
from the Haar measure. Decomposing each L\"owdin vector into a
component along the seed direction (of magnitude~$\alpha_d$) and a
perpendicular remainder, and averaging gives
$\mathcal{L}_1 = \alpha_d^2 + (1 - \alpha_d)^2/(d-1)$.
The parameter~$\alpha_d$ equals the success probability of the
Pretty Good Measurement for~$d$ equiprobable Haar-random pure states;
Montanaro~\cite{Montanaro2007} computed $\alpha_2 = 5/6$
and showed $\liminf_{d\to\infty} \alpha_d \geq [8/(3\pi)]^2 \approx 0.72$.

Step 3 (Quantum advantage): For fixed~$d$ and large~$n$,
the classical value scales as
$\omega_c^* = d^2\!/((2d\!-\!1)n) + O(n^{-2})$
(see~\supplM).
Combining with $\E[\omega_q] \approx (d/n)\,\mathcal{L}_1$
from Step~1, the fixed-$d$ asymptotic advantage ratio
$\E[\omega_q]/\omega_c^*$ approaches
$(2d\!-\!1)\,\mathcal{L}_1/d$.
Studying this sequence of ratios as $d \to \infty$,
Montanaro's lim inf bound on~$\alpha_d$ gives
$(2d\!-\!1)\,\mathcal{L}_1/d \geq 2[8/(3\pi)]^4 \approx 1.038 > 1$
in the large-$d$ limit, so quantum advantage holds
for all sufficiently large~$d$.

We remark that the advantage ratio ($\omega_q/\omega_c$) decreases with~$d$ but remains above~$1$
also for every finite $d$ tested ($d = 2, \ldots, 50$), persisting in the limit of infinitely many safe
channels:
\begin{center}
\small
\begin{tabular}{c|ccccc}
$d$ & 2 & 3 & 4 & 5 & $\infty$ \\
\hline
$\omega_q/\omega_c$ & 1.083 & 1.075 & 1.069 & 1.065 & ${\geq}\,1.038$
\end{tabular}
\end{center}

%%% EXPLICIT STRATEGIES %%%
\smallskip
\emph{Harmonic frame:} We now turn to explicit, deterministic seed
vectors that can be defined for all $(n,k)$.
The \emph{harmonic frame} in $\Complex^d$ consists of $n$
vectors from the discrete Fourier transform: $\ket{h_c} =
d^{-1/2}\sum_{j=0}^{d-1} e^{2\pi i\,cj/n}\ket{j}$.  For $d =n-k= 2$,
each seed vector is a qubit state on the equator of the Bloch sphere,
evenly spaced in azimuthal angle, and requires sharing only a single Bell pair.  We obtain:

\textbf{Theorem 3}---\textit{For all $n \geq 3$, the harmonic
strategy satisfies $\omega_q^{\rm harm}(n) > \omega_c^*(n,n\!-\!2)$, with}
\begin{equation}\label{eq:harmonic-ratio}
    \lim_{n \to \infty}
    \frac{\omega_q^{\rm harm}}{\omega_c^*}
    = \frac{3}{4} + \frac{3}{\pi^2} \approx 1.054\,.
\end{equation}
We establish this theorem (see \supplM) by noting that for $d = 2$, each safe set contains two bands, so the Gram matrix
is $2 \times 2$ and the L\"owdin overlaps can be computed in closed
form. The equal spacing of the harmonic frame then reduces the sum
in Eq.~\eqref{eq:omega-q} to a trigonometric series, yielding the
exact ratio in Eq.~\eqref{eq:harmonic-ratio}.

\smallskip
\emph{Simplex frame:} To understand the role of frame geometry, we
study the regular simplex: $d\!+\!1$ equiangular vectors in
$\Complex^d$ with pairwise inner product $|\inp{s_i}{s_j}|^2 =
1/d^2$. This construction gives a game with $n = d + 1$ bands and
only $k = 1$ jammed band.

\textbf{Theorem 4 (Simplex).}---\textit{For the $(d\!+\!1, 1)$-jamming
game with simplex seed vectors,}
\begin{equation}\label{eq:simplex}
    \omega_q^{\mathrm{simplex}} = \frac{1 + (d\!-\!1)\,
    |\inp{v_X^c}{v_Y^c}|^2}{d+1}\,,
\end{equation}
\textit{where $|\inp{v_X^c}{v_Y^c}|^2$ is the squared L\"owdin
overlap between any two safe sets differing in one element.}

Because the simplex is maximally symmetric, all cross-context
overlaps are equal, reducing the computation of $\omega_q$ to a single
parameter that can be computed in closed form (see \supplM).
However, the simplex reveals a \emph{crossover}: it beats the
classical strategy only for $d \leq 5$.  For $d \geq 6$, the safe
sets overlap so heavily ($d$ out of $d+1$ bands) that classical
coordination becomes efficient and the quantum advantage vanishes.
This shows that being a ``good'' frame for some $d$ does not guarantee advantage
for all games.

\emph{Other frames.} We also evaluate three further families of seed
vectors, computing the winning probabilities numerically via
Eq.~\eqref{eq:omega-q}~\cite{JammingNumerics}:
\emph{SIC-POVMs}~\cite{Renes2004}, consisting of $d^2$ equiangular
vectors in~$\Complex^d$ with $|\inp{\phi_a}{\phi_b}|^2 = 1/(d+1)$
for all $a \neq b$ (known to exist in all dimensions up to~53 and
conjectured for all~$d$);
\emph{mutually unbiased bases} (MUBs)~\cite{WoottersFields1989},
$d\!+\!1$ orthonormal bases of~$\Complex^d$ such that
$|\inp{\phi_a}{\phi_b}|^2 = 1/d$ for all vectors from different bases,
providing $d(d\!+\!1)$ seed vectors for prime-power~$d$; and
\emph{Alltop frames}~\cite{Alltop1980}, built from cubic phase
sequences $(\ket{\phi_c})_j = d^{-1/2}\,e^{2\pi i\,cj^3/n}$ with
$|\inp{\phi_a}{\phi_b}|^2 = 1/d$ for $a \neq b$, existing for any
prime $n \geq 5$ and any $d \leq n$.
Figure~\ref{fig:heatmap} summarizes which frame performs best at
each $(n,d)$ point.  MUBs achieve the largest advantage among
explicit constructions ($8.2\%$ at $d = 2$), while the harmonic
frame (available for all $(n,k)$) dominates the interior
of the parameter space.  Upper bounds computed using the SDP
hierarchies~\cite{NPA2008, DLTW2008} (level $1$) for the two smallest games,
$(n,d) = (3,2)$ and $(4,2)$ show that our strategies are
optimal, while for higher values there is a gap.
We remark that some of these constructions (e.g.\ MUBs and AllTop sequences) indeed have their origins in classical spread-spectrum communication~\cite{WinklerChirp1962,SpringerSpread2000,Alltop1980}.

\smallskip
\textit{Discussion:}
One might also wonder whether the explicit strategies examined are
indeed optimal. Numerical optimization over all seed vectors
in~$\Complex^d$ (using the limited-memory BFGS algorithm with box constraints (L-BFGS-B)~\cite{Byrd1995} with multiple random
restarts~\cite{JammingNumerics}) shows that the simplex, SIC, and MUB
frames each attain the optimal seed-based value at their respective
parameter points, as does the harmonic frame for $k = 1$
(Table~\ref{tab:frame-comparison}). For general $(n,d)$, the harmonic
frame is suboptimal but forms an explicit construction available for
all parameter values. 

One might also wonder whether our Ansatz of choosing seed vectors already forms a significant restriction, in the sense that much 
better quantum strategies might otherwise be found. Using numerical optimization
over all rank-1 projective measurements without any seed-vector
structure produces winning probabilities that match the seed-based
optimum to within optimization tolerance for all $(n,d)$ pairs tested (see~\supplM).
This strongly suggests that optimal measurements naturally take the
form of a shared seed frame processed through L\"owdin
orthonormalization.

Our framework of constructing quantum strategies
from seed vectors via L\"owdin orthonormalization opens the door to study general
quantum-enhanced spectrum allocation, and may prove useful for other
coordination games where distributed parties must make compatible
decisions from local observations. 

Since the $d = 2$ harmonic strategy
requires only a single shared Bell pair, a proof-of-concept is
within reach of current quantum network
hardware~\cite{Stolk2024, Krutyanskiy2023}. It would be interesting
to explore whether entanglement-enhanced coordination can benefit
real-world use cases such as anti-jamming frequency
hopping~\cite{Popper2010}, ultra-reliable low-latency communication
in next-generation wireless networks~\cite{Bennis2018}, or
distributed frequency management in satellite
constellations~\cite{LEOsurvey2025}.

\acknowledgments SW thanks Jeroen Grimbergen and Sounak Kar for useful comments on an earlier version of this manuscript. 
SW was supported by an NWO VICI grant. Claude Code (Opus 4.6 and 4.7) was used to assist this research including the software used for the numerical computations, plotting,
unifying editing, converting handrawn ideas to figures, and numerical exploration of conjectures during the research process. Gemini was used to search the literature. Aristotle (harmonic.fun) and Claude Code (Opus 4.7) were used in converting latex proofs to Lean 4.

\bibliography{jamming}

%%%%%%%%%%%%%%%%%%%%%%%%%%%%%%%%%%%%%%%%%%%%%%%%%%%%%%%%%%%%%%%%%%%%%
%%% supplMEMENTARY MATERIAL (single-column appendix)
%%%%%%%%%%%%%%%%%%%%%%%%%%%%%%%%%%%%%%%%%%%%%%%%%%%%%%%%%%%%%%%%%%%%%

\onecolumngrid
\appendix
\setcounter{secnumdepth}{3}

%\vspace{1em}
%\begin{center}
%\rule{\textwidth}{0.5pt}\\[1em]
%{\Large\textbf{supplementary  Material}}\\[0.5em]
%{\large Quantum Advantage for Coordinated Frequency Selection Against Distributed Jammers}\\[0.5em]
%Stephanie Wehner\\[1em]
%\rule{\textwidth}{0.5pt}
%\end{center}
%\vspace{1em}

In this Appendix, we provide the full technical details underpinning the results presented in the main text. 

\input{classicalBound}
\input{generalQuantum}

\input{randomFrames}

\input{exampleStrategies}

\end{document}

%% file: classicalBound.tex
\section{\texorpdfstring{Classical Value of $(n,k)$-Jamming Games}{Classical Value of (n,k)-Jamming Games}}

In this section, we present the detailed derivation of the maximum probability $\omega_c(n,k)$ to win the $(n,k)$-jamming game using any classical strategy. In keeping with the established literature~\cite{Brunner2014} we also call $\omega_{cl}(n,k)$ the \emph{classical value} of the game. Note that when studying $\omega_{cl}$ we can take $f(x) \in x$ and $g(y) \in y$, 
since otherwise Alice and Bob would return a losing answer. We will refer to a strategy of this form as \emph{valid}.
 
\begin{theorem}\label{thm:classical-value}
The classical value of the $(n,k)$-jamming game is
\begin{align}\label{eq:classical-value}
\omega_c(n,k) = \frac{1}{\binom{n}{k}^2} \sum_{i=0}^{k} \binom{n-1-i}{k-i}^2.
\end{align}
\end{theorem}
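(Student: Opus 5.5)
We will reduce to deterministic strategies and then compare fiber-size vectors by majorization. This avoids the saturation and induction steps sketched in the main text.

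\emph{Reduction.} As in the main text, it suffices to consider deterministic valid strategies $f,g\colon\mathcal{S}\to[n]$ with $f(x)\in x$ and $g(y)\in y$. Write $a_c=|f^{-1}(c)|$ and $b_c=|g^{-1}(c)|$. Then
\[
\omega=\binom{n}{k}^{-2}\sum_c a_c b_c .
\]
By Cauchy--Schwarz and $\sqrt{AB}\le\max(A,B)$ we get
\[
\sum_c a_cb_c\le \max\Bigl(\sum_c a_c^2,\ \sum_c b_c^2\Bigr).
\]
So it is enough to show that every valid $f$ satisfies
\[
\sum_c |f^{-1}(c)|^2\le \sum_{i=0}^{k}\binom{n-1-i}{k-i}^2 .
\]
Lower bound: the strategy $f=g=\min$ attains this value. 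The sets with $\min(x)=i$ are obtained by choosing the remaining $d-1$ elements from the $n-1-i$ bands above $i$. This gives $\binom{n-1-i}{d-1}=\binom{n-1-i}{k-i}$ sets, and the count vanishes for $i>k$.

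\emph{Key step (majorization).} Sort the fiber sizes of an arbitrary valid $f$ decreasingly as $a_{(1)}\ge a_{(2)}\ge\cdots$, and let $c_1,\dots,c_j$ be the labels of the $j$ largest fibers. Every $x$ in the union of these fibers contains its own output, so it meets $\{c_1,\dots,c_j\}$. The number of $d$-sets meeting a fixed $j$-set is $\binom{n}{d}-\binom{n-j}{d}$. Hence
\[
\sum_{t\le j}a_{(t)}\le \binom{n}{d}-\binom{n-j}{d}.
\]
For the greedy strategy, the union of the fibers of $0,\dots,j-1$ consists of exactly the sets meeting $\{0,\dots,j-1\}$. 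So its partial sums equal $\binom{n}{d}-\binom{n-j}{d}$ and attain these bounds. Both vectors sum to $|\mathcal{S}|$, and the greedy fiber sizes are already non-increasing in $i$. Therefore the greedy vector majorizes every feasible fiber vector.

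\emph{Conclusion.} Since $t\mapsto t^2$ is convex, $\sum_c t_c^2$ is Schur-convex, so the greedy vector maximizes it. Combined with the reduction, this gives Eq.~\eqref{eq:classical-value}.

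The only step needing care is the majorization claim: the partial-sum bound from the union argument, and the verification that greedy attains it with a non-increasing ordering. Everything else is routine binomial bookkeeping.
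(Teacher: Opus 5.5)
Your proof is correct, and it takes a genuinely different route from the paper. The paper argues locally and recursively. A saturation lemma (an exchange argument) shows that at an optimum the largest fiber must be a full star $X_c$ of size $\binom{n-1}{k}$. Induction on $k$ then uses the fact that the safe sets outside a saturated star form exactly the $(n-1,k-1)$ game. You replace both steps with one global statement. The union of the $j$ largest fibers of any valid $f$ lies among the $d$-sets meeting the $j$-set of their labels, which gives the partial-sum bounds $\binom{n}{d}-\binom{n-j}{d}$. The strategy $\min$ attains these bounds simultaneously for every $j$, so its fiber vector majorizes every feasible one. This yields a stronger conclusion: $\min$ maximizes every Schur-convex function of the fiber vector, such as any $\sum_c\varphi(|f^{-1}(c)|)$ with $\varphi$ convex, not just the sum of squares. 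It also makes rigorous the majorization intuition the paper only mentions in passing. The price is reliance on the Hardy--Littlewood--P\'olya/Karamata inequality, whereas the paper's argument is self-contained and mirrors the recursive description of the greedy strategy.

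Your reduction step is also tighter than the paper's Step~1. The paper claims that for a fixed $f$ the best response $g$ has the same allocation. This is false as stated, because the Cauchy--Schwarz bound itself depends on $b$. For example, in the $(4,2)$ game a valid $f$ with $a=(2,2,1,1)$ scores $10$ against $g=f$ but $11$ against a valid $g$ with $b=(3,2,1,0)$. The lemma's conclusion is still true, and your inequality $\sum_c a_cb_c\le\max\bigl(\sum_c a_c^2,\sum_c b_c^2\bigr)$ is exactly what proves it.
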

Our proof proceeds in three steps:
\begin{enumerate}
    \item First, we show that optimal strategies have the form $(f,f)$, where Alice and Bob use the same deterministic function $f$ to produce their answers. 
    \item Second, we show that any optimal aligned strategy must \emph{saturate} certain constraints, assigning the maximum possible number of safe sets to the smallest channel.
    \item Third, we construct an explicit greedy strategy and prove its optimality by induction.
\end{enumerate}

\subsection{Aligned Strategies are Optimal}
A deterministic strategy consists of functions $f: \mathcal{S} \to [n]$ for Alice, and $g: \mathcal{S}\to [n]$ for Bob, where the optimal classical value of a game is attained by such deterministic functions.  We now first show that choosing $g=f$ attains the optimal classical value. More formally, 
\begin{lemma}\label{lem:aligned}
Among all deterministic strategy pairs $(f, g)$, the optimal winning probability is achieved by an aligned pair $(f, f)$.
\end{lemma}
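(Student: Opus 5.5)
The plan is to write the winning count of a valid deterministic pair as an inner product of two preimage-count vectors and then compare it with the two aligned strategies $(f,f)$ and $(g,g)$. We do not try to show that the optimizer is forced to be aligned. We only show that one of $(f,f)$ or $(g,g)$ does at least as well as $(f,g)$.

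\emph{Step 1: reduction to preimage counts.} As noted above, we may restrict to valid pairs with $f(x)\in x$ and $g(y)\in y$. For such a pair, $\sum_{c\in x\cap y}P(c,c|x,y)=1$ exactly when $f(x)=g(y)$, and is $0$ otherwise. Grouping the pairs $(x,y)$ by the common output $c$, we get
\begin{align}
|\mathcal{S}|^2\,\omega(f,g)=\sum_{c\in[n]} F_c\,G_c=\inp{F}{G},
\end{align}
where $F_c=|f^{-1}(c)|$ and $G_c=|g^{-1}(c)|$ are viewed as vectors in $\mathbb{R}^n$. In the same way, $|\mathcal{S}|^2\omega(f,f)=\|F\|^2$ and $|\mathcal{S}|^2\omega(g,g)=\|G\|^2$. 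The pairs $(f,f)$ and $(g,g)$ are again valid deterministic strategies.

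\emph{Step 2: comparison.} By Cauchy--Schwarz and then AM--GM,
\begin{align}
\inp{F}{G}\le \|F\|\,\|G\|\le \max\big(\|F\|^2,\|G\|^2\big).
\end{align}
Hence $\omega(f,g)\le\max\big(\omega(f,f),\,\omega(g,g)\big)$. Apply this to an optimal pair $(f,g)$, which exists because the strategy set is finite. Whichever of $(f,f)$ and $(g,g)$ attains the maximum is an aligned pair achieving the optimum, which proves Lemma~\ref{lem:aligned}.

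The only delicate point is to avoid the tempting claim that the Cauchy--Schwarz equality case plus equal sums forces $F=G$. That is not needed, and it would presuppose that the pair is already tight. The $\max$ argument sidesteps this entirely. I expect no real obstacle beyond stating the validity reduction carefully. Non-valid outputs contribute nothing, and redirecting them to any element of the safe set cannot decrease the success probability, so the restriction to valid pairs loses no generality.
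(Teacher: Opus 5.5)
Your proof is correct, and its closing step differs from the paper's in a way that matters. Both arguments start from the product formula $W(f,g)=\sum_c|f^{-1}(c)|\,|g^{-1}(c)|$ and apply Cauchy--Schwarz. The paper then invokes the equality case: equality requires proportional allocation vectors, and equal $\ell^1$-norms force them to coincide. From this it concludes that for any $f$ the number of wins is maximized by a $g$ with the same allocation, so one may take $g=f$.

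That inference does not follow, because the bound $\|F\|\,\|G\|$ itself depends on $g$. The per-$f$ claim is in fact false. For $n=4$, $d=2$, let $f$ map $\{0,1\}\mapsto1$, $\{0,2\},\{0,3\}\mapsto0$, $\{1,2\},\{2,3\}\mapsto2$, $\{1,3\}\mapsto3$. Then $F=(2,1,2,1)$ and $W(f,f)=10$. The valid $g$ mapping $\{0,1\},\{0,2\},\{0,3\}\mapsto0$, $\{1,2\},\{2,3\}\mapsto2$, $\{1,3\}\mapsto1$ has $G=(3,1,2,0)$ and $W(f,g)=11>W(f,f)$.

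Your chain $\langle F,G\rangle\le\|F\|\,\|G\|\le\max(\|F\|^2,\|G\|^2)$ proves exactly what the lemma asserts: the optimum over pairs is attained by an aligned pair. It needs no best-response claim and no equal-norm observation. In the example, $W(g,g)=14\ge 11$, which is also the greedy optimum for $(4,2)$.

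The paper's route, if it worked, would give the stronger statement that every best response is aligned. Your route shows that this is unnecessary, since the rest of the proof of Theorem~\ref{thm:classical-value} only maximizes $\sum_c|f^{-1}(c)|^2$ over a single valid $f$. Your handling of invalid outputs, redirecting them into the safe set, is also fine and matches the paper's reduction to valid strategies.
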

We establish this lemma in two steps. First, we show that the structure of the jamming game implies that the number of wins takes a simple product form. Second, we apply the Cauchy--Schwarz inequality to conclude that aligned strategies, i.e. $g=f$, are optimal.

\begin{proposition}\label{prop:win-simplifies}
The $(n,k)$-jamming game is synchronous: a valid strategy pair $(f, g)$ wins on input $(x, y)$ if and only if $f(x) = g(y)$. Consequently, the number of winning input pairs is
\begin{align}\label{eq:win-product}
W(f, g) := |\{(x, y) \in \mathcal{S} \times \mathcal{S} : (f,g) \text{ wins on } (x,y)\}| = \sum_{c \in [n]} |f^{-1}(c)| \cdot |g^{-1}(c)|.
\end{align}
\end{proposition}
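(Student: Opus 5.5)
The plan is to unpack the winning condition for a valid deterministic pair and then count winning input pairs by grouping them according to the common output.

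\emph{Step 1: the win condition.} By definition of the $(n,k)$-jamming game, a pair $(f,g)$ wins on $(x,y)$ exactly when $f(x)=g(y)=c$ for some $c\in x\cap y$. For the ``only if'' direction, a win by definition requires the two outputs to coincide, so $f(x)=g(y)$. For the ``if'' direction, suppose $f(x)=g(y)=:c$. Validity gives $f(x)\in x$ and $g(y)\in y$, so $c\in x$ and $c\in y$, hence $c\in x\cap y$ and the pair wins. This is exactly the synchronous structure: the only constraint is agreement, because validity already enforces safety at both sites.

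\emph{Step 2: counting.} By Step 1,
\[
W(f,g)=|\{(x,y)\in\mathcal{S}\times\mathcal{S}: f(x)=g(y)\}|.
\]
Partition this set according to the common value $c\in[n]$. The blocks for distinct $c$ are disjoint, since $f(x)$ takes a single value. The block for a fixed $c$ is precisely $f^{-1}(c)\times g^{-1}(c)$, whose cardinality is $|f^{-1}(c)|\,|g^{-1}(c)|$. Summing over $c\in[n]$ yields Eq.~\eqref{eq:win-product}.

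There is no real obstacle here. The only point needing care is to state explicitly that validity is what collapses the condition ``same band and safe at both sites'' to mere equality of outputs. It is also worth remarking that restricting to valid strategies loses no generality: if $f(x)\notin x$ or $g(y)\notin y$, the pair already loses on every input involving $x$ (respectively $y$), so such outputs can only reduce $W$.
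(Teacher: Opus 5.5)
Your proof is correct and follows the paper's argument: validity of $f$ and $g$ reduces the win condition to $f(x)=g(y)$, and partitioning the winning pairs by the common output $c$ gives the blocks $f^{-1}(c)\times g^{-1}(c)$. Your closing remark that invalid outputs are weakly dominated is the same observation the paper makes at the start of the section when it restricts attention to valid strategies.
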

\begin{proof}
Note that the winning condition requires
\begin{enumerate}
    \item[(i)] $f(x) = g(y)$ (Alice and Bob output the same channel), and
    \item[(ii)] $f(x) \in x \cap y$ (the common output is safe at both sites).
\end{enumerate}

We first claim that (ii) follows from (i) together with validity. Indeed, suppose $f(x) = g(y) = c$. By validity of $f$, we have $c = f(x) \in x$. By validity of $g$, we have $c = g(y) \in y$. Hence $c \in x \cap y$, establishing (ii).
Therefore, the winning condition simplifies to $f(x) = g(y)$. The number of winning pairs is thus
\begin{align}
W(f, g) = |\{(x, y) : f(x) = g(y)\}| = \sum_{c \in [n]} |f^{-1}(c)| \cdot |g^{-1}(c)|,
\end{align}
where the second equality follows by partitioning according to the common output value $c$.
\end{proof}

\begin{proposition}\label{prop:aligned-optimal}
Among all valid strategy pairs $(f,g)$, the number of wins $W(f,g)$ is maximized by an aligned pair $(f, f)$.
\end{proposition}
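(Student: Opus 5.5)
Starting from Proposition~\ref{prop:win-simplifies}, for any valid pair $(f,g)$ we write $W(f,g)=\sum_{c}F_c G_c$ with $F_c:=|f^{-1}(c)|$ and $G_c:=|g^{-1}(c)|$. The plan is to avoid any appeal to proportionality in the equality case of Cauchy--Schwarz. Instead we compare $W(f,g)$ directly with $W(f,f)$ and $W(g,g)$, both of which are valid aligned pairs.

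The key step is the elementary bound
\begin{align}
F_cG_c \le \tfrac12\left(F_c^2+G_c^2\right),
\end{align}
which is just $(F_c-G_c)^2\ge 0$. Summing over $c\in[n]$ gives
\begin{align}
W(f,g)\le \tfrac12\bigl(W(f,f)+W(g,g)\bigr)\le \max\{W(f,f),W(g,g)\}.
\end{align}
Hence for every valid pair there is an aligned valid pair, either $(f,f)$ or $(g,g)$, that wins at least as many input pairs. One could equally use Cauchy--Schwarz, $W(f,g)\le\sqrt{W(f,f)W(g,g)}\le\max\{W(f,f),W(g,g)\}$, which gives the same conclusion.

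Since the set of valid strategy pairs is finite, a maximizer $(f^\star,g^\star)$ of $W$ exists. Applying the bound to this maximizer shows that one of $(f^\star,f^\star)$ or $(g^\star,g^\star)$ also attains the maximum. This proves the proposition.

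No step is a real obstacle. The only subtlety is to avoid the main text's claim that equality forces $G_c=F_c$. That claim is unnecessary, because we only need some aligned pair to be at least as good, not that every optimal pair is aligned. The AM--GM/max argument gives exactly this and handles ties cleanly.
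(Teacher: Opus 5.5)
Your proof is correct, and it differs from the paper's at the one step that matters. The paper writes $a_c=|f^{-1}(c)|$ and $b_c=|g^{-1}(c)|$ and applies Cauchy--Schwarz. It notes that equality forces $a_c=b_c$, since proportional nonnegative vectors with the same $\ell^1$ norm $|\mathcal{S}|$ coincide. From this it concludes that for each fixed $f$ the number of wins is maximized by a $g$ with the same allocation.

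That equality-case fact is true, so you misplace the problem slightly when you call it a claim to be avoided. What fails is the inference, because the bound $\sqrt{\sum_c a_c^2}\sqrt{\sum_c b_c^2}$ itself varies with $b$. Best responses in fact need not be aligned. In the $(4,2)$ game, take $f$ mapping $\{1,2\},\{1,3\}\mapsto 1$, $\{2,3\},\{2,4\}\mapsto 2$, $\{3,4\}\mapsto 3$, $\{1,4\}\mapsto 4$. It has allocation $(2,2,1,1)$ and $W(f,f)=10$, while the greedy $g$ has allocation $(3,2,1,0)$ and gives $W(f,g)=11$.

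Your comparison against both aligned pairs, $W(f,g)\le\frac12\bigl(W(f,f)+W(g,g)\bigr)\le\max\{W(f,f),W(g,g)\}$, supplies exactly the missing ingredient. Together with the existence of a maximizer it proves the global statement cleanly; your Cauchy--Schwarz variant is the paper's inequality followed by this max step. It also recovers what the paper was aiming for. At a maximizer $(f^\star,g^\star)$, both $W(f^\star,f^\star)$ and $W(g^\star,g^\star)$ are at most $W(f^\star,g^\star)$. Hence $\frac12\sum_c(F_c-G_c)^2=\frac12\bigl(W(f^\star,f^\star)+W(g^\star,g^\star)\bigr)-W(f^\star,g^\star)\le 0$, so every optimal pair has identical allocations.
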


\begin{proof}
Let $a_c = |f^{-1}(c)|$ and $b_c = |g^{-1}(c)|$. By Eq.~\eqref{eq:win-product}, $W(f,g) = \sum_{c} a_c b_c$.
Both vectors $(a_c)_{c \in [n]}$ and $(b_c)_{c \in [n]}$ lie in $\mathbb{R}_{\geq 0}^n$ and satisfy $\sum_c a_c = \sum_c b_c = |\mathcal{S}|$. By the Cauchy--Schwarz inequality,
\begin{align}
\sum_c a_c b_c \leq \sqrt{\sum_c a_c^2} \cdot \sqrt{\sum_c b_c^2},
\end{align}
with equality if and only if $(a_c)$ and $(b_c)$ are proportional. Since both vectors have the same $\ell^1$-norm $|\mathcal{S}|$, proportionality implies $a_c = b_c$ for all $c$.

Therefore, for any feasible allocation $(a_c)$ achievable by some strategy $f$, the number of wins is maximized when $g$ has the same allocation. The choice $g = f$ achieves this, giving $W(f, f) = \sum_c a_c^2$.
\end{proof}

Lemma~\ref{lem:aligned} follows immediately from Propositions~\ref{prop:win-simplifies} and~\ref{prop:aligned-optimal}.

Note that the product structure in Eq.~\eqref{eq:win-product} is special to the jamming game. In a general nonlocal game, the winning condition may depend on the input pair $(x, y)$ in ways that do not factor through output equality. For such games, aligned strategies may be suboptimal.

\subsection{Saturation at Optimum}

By Lemma~\ref{lem:aligned}, finding the optimal classical strategy reduces to maximizing $W(f,f) = \sum_c |f^{-1}(c)|^2$ over valid strategies $f$. We now establish a key property of optimal strategies.
For each channel $c \in [n]$, define the \emph{star} at $c$ as
\begin{align}\label{eq:star-def}
X_c := \{x \in \mathcal{S} : c \in x\},
\end{align}
the collection of safe sets containing channel $c$. Since a safe set is an $(n-k)$-subset of $[n]$, and requiring $c \in x$ leaves $n-k-1$ elements to be chosen from the remaining $n-1$ channels, we have
\begin{align}\label{eq:star-size}
|X_c| = \binom{n-1}{n-k-1} = \binom{n-1}{k}.
\end{align}
Note that this quantity is independent of $c$, reflecting the symmetry of the problem under channel permutations.

A valid strategy can only assign $f(x) = c$ a safe set $x$ to a channel $c$ if $c \in x$. Hence
\begin{align}\label{eq:capacity-constraint}
|f^{-1}(c)| \leq |X_c| = \binom{n-1}{k} \quad \text{for all } c \in [n].
\end{align}
We call this the \emph{capacity constraint} for channel $c$.

We now establish the following saturation lemma. Intuitively, it follows from the 
principle that convex maximization rewards inequality:
Concentrating mass on a single channel (up to its capacity) always increases the sum of squares $\sum_c |f^{-1}(c)|^2$, regardless of how the remaining mass is distributed.

\begin{lemma}[Saturation]\label{lem:saturation}
Let $f$ be an optimal strategy, and let $d_c = |f^{-1}(c)|$ be the induced allocation. Then
\begin{align}
\max_{c \in [n]} d_c = \binom{n-1}{k}.
\end{align}
That is, some channel must be assigned exactly all safe sets containing it.
\end{lemma}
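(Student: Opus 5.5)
We argue by contradiction using a single local exchange that increases $W(f,f)=\sum_c d_c^2$. Let $f$ be optimal (a valid strategy maximizing $W(f,f)$; by Lemma~\ref{lem:aligned} this is the relevant objective) and choose $c^\ast\in\arg\max_c d_c$. By the capacity constraint~\eqref{eq:capacity-constraint}, $d_{c^\ast}\le\binom{n-1}{k}$. Suppose the inequality is strict, $d_{c^\ast}<\binom{n-1}{k}=|X_{c^\ast}|$. Since $f^{-1}(c^\ast)\subseteq X_{c^\ast}$, there is a safe set $x\in X_{c^\ast}$ with $f(x)=c'\neq c^\ast$.

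Define $f'$ by $f'(x)=c^\ast$ and $f'=f$ elsewhere. This is valid because $c^\ast\in x$. The allocation changes only in two entries: $d'_{c^\ast}=d_{c^\ast}+1$ and $d'_{c'}=d_{c'}-1$. A direct computation gives
\begin{align}
W(f',f')-W(f,f) = (d_{c^\ast}+1)^2-d_{c^\ast}^2+(d_{c'}-1)^2-d_{c'}^2 = 2(d_{c^\ast}-d_{c'})+2 .
\end{align}
Since $c^\ast$ is a maximizer, $d_{c^\ast}\ge d_{c'}$, so the difference is at least $2>0$. This contradicts the optimality of $f$. Hence $d_{c^\ast}=\binom{n-1}{k}$, and the lemma follows.

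The only subtle point is the tie case $d_{c^\ast}=d_{c'}$. Strict convexity of $t\mapsto t^2$ still gives a strict gain there, and the $+2$ term above handles it explicitly, so no separate case split is needed. It is also worth noting that the argument requires $c^\ast$ to be a maximizer rather than an arbitrary channel. Moving mass toward a smaller bin could decrease the objective, which is why we fix $c^\ast$ as the argmax at the outset. Beyond this, no step presents a real obstacle.
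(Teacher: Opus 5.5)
Your proof is correct and follows the paper's argument: the paper sorts the allocation so that $d_1$ is the maximum where you pick $c^\ast\in\arg\max_c d_c$, and both reassign one safe set $x\in X_{c^\ast}$ to $c^\ast$ and compute the gain $2(d_{c^\ast}-d_{c'}+1)\ge 2$, contradicting optimality. Nothing is missing.
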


\begin{proof}
Since the input distribution and winning condition are invariant under permutation of channel labels (each star has the same size $\binom{n-1}{k}$ by Eq.~\eqref{eq:star-size}), we may relabel channels so that the allocation is sorted in decreasing order: $d_1 \geq d_2 \geq \cdots \geq d_n$. We show that $d_1 = \binom{n-1}{k}$.

Suppose by contradiction that $d_1 < \binom{n-1}{k} = |X_1|$. Then there exists a safe set $x \in X_1$ (i.e., $1 \in x$) that is not assigned to channel~1, i.e. $f(x) \neq 1$. Let $c = f(x) \geq 2$ be its current assignment.

Consider the modified strategy $f'$ that reassigns $x$ from channel $c$ to channel $1$:
\begin{align}
f'(y) = \begin{cases} 1 & \text{if } y = x, \\ f(y) & \text{otherwise.} \end{cases}
\end{align}
This is valid since $1 \in x$. The new allocation satisfies $d_1' = d_1 + 1$, $d_c' = d_c - 1$, and $d_{c'}' = d_{c'}$ for $c' \notin \{1, c\}$.

The change in objective is:
\begin{align}
W(f', f') - W(f, f) &= (d_1 + 1)^2 + (d_c - 1)^2 - d_1^2 - d_c^2 \\
&= 2d_1 + 1 - 2d_c + 1 \\
&= 2(d_1 - d_c + 1).
\end{align}
Since $d_1 \geq d_c$ by our sorting assumption, we have $d_1 - d_c + 1 \geq 1$, and thus
\begin{align}
W(f', f') - W(f, f) \geq 2 > 0.
\end{align}
This contradicts the optimality of $f$. Therefore $d_1 = \binom{n-1}{k}$.
\end{proof}

\subsection{The Greedy Strategy}

We now construct an explicit optimal strategy and prove its optimality. 
\begin{definition}[Greedy Strategy]\label{def:greedy}
The \emph{greedy strategy} $f^*: \mathcal{S} \to [n]$ assigns each safe set $x$ to the smallest-numbered channel it contains:
\begin{align}
f^*(x) = \min(x).
\end{align}
\end{definition}
We call this the greedy strategy due to its equivalent recursive description:
\begin{enumerate}
    \item Assign all safe sets containing channel $1$ to channel $1$: $f^{-1}(1) = X_1$.
    \item The remaining safe sets $\mathcal{S} \setminus X_1$ are exactly the $(n-k)$-subsets of $\{2, \ldots, n\}$, which form the input space for the $(n-1, k-1)$-jamming game on channels $\{2, \ldots, n\}$.
    \item Recurse: apply the greedy strategy to $\mathcal{S} \setminus X_1$.
\end{enumerate}
The greedy strategy has a natural interpretation: prioritize channels in order $1, 2, \ldots, n$, assigning each safe set to the smallest-numbered channel it contains. This concentrates as many safe sets as possible onto a single channel before moving to the next, producing an allocation that is maximally unequal across channels. Since the sum of squares $\sum_c m_c^2$ is a convex function, it is maximized precisely by such unequal allocations (made rigorous by the theory of majorization and Schur-convex functions~\cite{Bhatia1997}).

\begin{proposition}[Greedy Allocation]\label{prop:greedy-allocation}
The greedy strategy achieves the allocation
\begin{align}
d_c^* = |{f^*}^{-1}(c)| = \begin{cases} \binom{n-c}{k-c+1} & \text{if } 1 \leq c \leq k+1, \\ 0 & \text{if } c > k+1. \end{cases}
\end{align}
\end{proposition}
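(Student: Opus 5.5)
We prove Proposition~\ref{prop:greedy-allocation} by counting directly, using the channel labelling $1,\ldots,n$ of this subsection. Under $f^*(x)=\min(x)$, a safe set $x$ is sent to channel $c$ exactly when $c\in x$ and no element of $\{1,\ldots,c-1\}$ lies in $x$. Hence ${f^*}^{-1}(c)$ is in bijection with the $(d-1)$-subsets of $\{c+1,\ldots,n\}$, obtained by removing $c$ from $x$. Here $d=n-k$. This gives
\begin{align}
d_c^* = \binom{n-c}{d-1} = \binom{n-c}{n-k-1}.
\end{align}

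Next we bring this into the stated form. When $n-c\ge n-k-1$, that is $c\le k+1$, the symmetry $\binom{m}{j}=\binom{m}{m-j}$ gives
\begin{align}
\binom{n-c}{(n-c)-(n-k-1)}=\binom{n-c}{k-c+1}.
\end{align}
When $c>k+1$, we have $n-c<d-1$, so there are not enough channels above $c$ to fill $x$. The set of such $x$ is empty and $d_c^*=0$. One boundary case deserves a check: $c=k+1$ gives $\binom{n-k-1}{0}=1$, corresponding to the single safe set $\{k+1,\ldots,n\}$.

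We also cross-check against the recursive description. Step~1 gives $d_1^*=|X_1|=\binom{n-1}{k}$ by Eq.~\eqref{eq:star-size}, which agrees with the formula at $c=1$. The recursion then applies the same count to the $(n-1,k-1)$ game, and the index shift $(n,k,c)\mapsto(n-1,k-1,c-1)$ leaves $\binom{n-c}{k-c+1}$ invariant. This yields an induction on $k$ whose base case is $k=0$: there the only safe set is $[n]$, and it is assigned to channel~1.

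As a final sanity check, Vandermonde's hockey-stick identity gives $\sum_{c=1}^{k+1}\binom{n-c}{n-k-1}=\binom{n}{n-k}=|\mathcal{S}|$. There is no real obstacle in this argument. The only point needing care is the index bookkeeping, namely the $1$-based labels used here versus $[n]=\{0,\ldots,n-1\}$ in the main text, together with the cutoff at $c=k+1$.
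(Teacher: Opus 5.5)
Your proof is correct, but it takes a different route from the paper. The paper argues by induction on $k$. It uses the recursive description of the greedy strategy: first it assigns $d_1^*=|X_1|=\binom{n-1}{k}$. It then notes that the remaining safe sets form the $(n-1,k-1)$ game on $\{2,\ldots,n\}$, and applies the inductive hypothesis with the index shift $(n,k,c)\mapsto(n-1,k-1,c-1)$. This is exactly what you list only as a cross-check.

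You instead count ${f^*}^{-1}(c)$ directly as the sets $x$ with $c\in x\subseteq\{c,\ldots,n\}$. That gives $\binom{n-c}{d-1}$ in one step, which you then convert by binomial symmetry. Your route is more self-contained. It does not need to check that $f^*$ restricted to $\mathcal{S}\setminus X_1$ coincides, after relabeling, with the greedy strategy of the subgame. The form $\binom{n-c}{d-1}$ also makes the cutoff $c\le k+1=n-d+1$, and the vanishing beyond it, immediate, rather than inherited through the induction from the base case.

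The paper's inductive route reuses the same peel-off-channel-$1$ decomposition that drives the recursion for $W(n,k)$ and the optimality proof. So it fits the surrounding argument at no extra cost. Your final check that the $d_c^*$ sum to $|\mathcal{S}|$ is a worthwhile addition, confirming that the allocation partitions $\mathcal{S}$. One naming correction: the identity you use there is the hockey-stick identity, not Vandermonde's convolution.
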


\begin{proof}
By induction on $k$. For $k = 0$, the unique safe set $[n]$ is assigned to channel $1 = \min([n])$, giving $d_1^* = 1 = \binom{n-1}{0}$ and $d_c^* = 0$ for $c \geq 2$.
For $k \geq 1$, the greedy strategy assigns $d_1^* = |X_1| = \binom{n-1}{k}$ to channel $1$. By the recursive structure, the remaining allocation $(d_2^*, \ldots, d_n^*)$ equals the greedy allocation for the $(n-1, k-1)$-game. By the inductive hypothesis:
\begin{align}
d_c^* = \binom{(n-1)-(c-1)}{(k-1)-(c-1)+1} = \binom{n-c}{k-c+1}
\end{align}
for $2 \leq c \leq k+1$, and $d_c^* = 0$ for $c > k+1$.
\end{proof}

We now evaluate $W$ for the greedy strategy.
\begin{proposition}[Greedy Wins]\label{prop:greedy-wins}
The greedy strategy achieves
\begin{align}\label{eq:greedy-wins}
W(f^*, f^*) = \sum_{i=0}^{k} \binom{n-1-i}{k-i}^2.
\end{align}
\end{proposition}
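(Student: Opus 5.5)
The plan is to combine the aligned-strategy identity with the greedy allocation. By Proposition~\ref{prop:win-simplifies} with $g=f^*$, the number of wins is $W(f^*,f^*)=\sum_{c\in[n]}|{f^*}^{-1}(c)|^2$. Since $f^*(x)=\min(x)\in x$, the strategy is valid, so that proposition applies. The greedy allocation was computed explicitly in Proposition~\ref{prop:greedy-allocation}, so everything reduces to a reindexing of that allocation.

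Concretely, Proposition~\ref{prop:greedy-allocation} gives $d_c^*=\binom{n-c}{k-c+1}$ for $1\le c\le k+1$ and $d_c^*=0$ otherwise. Hence
\begin{align}
W(f^*,f^*)=\sum_{c=1}^{k+1}\binom{n-c}{k-c+1}^2 .
\end{align}
Substituting $i=c-1$, which runs over $0,\dots,k$, gives $\binom{n-1-i}{k-i}^2$, and this is exactly Eq.~\eqref{eq:greedy-wins}. Channels beyond $k+1$ contribute zero. This is consistent with a direct check: a $d$-subset of $[n]$ with $d=n-k$ has minimum at most $k+1$.

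As a sanity check, and as an alternative route that avoids the induction, I would also count directly. The safe sets with $\min(x)=c$ are those containing $c$ and choosing the remaining $n-k-1$ elements from the $n-c$ channels above $c$. There are $\binom{n-c}{n-k-1}=\binom{n-c}{k-c+1}$ of them, which matches the proposition.

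The only point needing care is indexing. The appendix labels channels $1,\dots,n$ while the main text uses $[n]=\{0,\dots,n-1\}$, and the boundary case $c=k+1$ gives $\binom{n-k-1}{0}=1$. Neither is a real obstacle. The claim is essentially a corollary of the two preceding propositions.
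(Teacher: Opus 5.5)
Your proof is correct, but it takes a different route from the paper. The paper does not sum the allocation from Proposition~\ref{prop:greedy-allocation}. Instead it defines $W(n,k)$ as the greedy win count for the $(n,k)$-game. It then uses the recursive description of $f^*$ to obtain $W(n,k)=\binom{n-1}{k}^2+W(n-1,k-1)$ with $W(n,0)=1$, and proves the closed form by induction on $k$ via the shift $j=i+1$.

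You instead reduce to $\sum_c |{f^*}^{-1}(c)|^2$ via Proposition~\ref{prop:win-simplifies} and read off the fiber sizes. Your direct count $|\{x:\min(x)=c\}|=\binom{n-c}{n-k-1}=\binom{n-c}{k-c+1}$ turns the statement into a one-line identity with no induction. It does not even need Proposition~\ref{prop:greedy-allocation}.

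What the paper's route buys is the recursion itself. The optimality proof (Proposition~\ref{prop:greedy-optimal}) saturates one channel, passes to the $(n-1,k-1)$ subgame, and closes by invoking exactly $W(n,k)=\binom{n-1}{k}^2+W(n-1,k-1)$. So establishing it here serves double duty.

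Your version is cleaner for this proposition in isolation. If you use it, you would recover that recursion later by splitting off the $c=1$ term of your sum $\sum_{c=1}^{k+1}\binom{n-c}{k-c+1}^2$. The remaining terms are precisely the $(n-1,k-1)$ closed form.
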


\begin{proof}
Let $W(n,k)$ denote the number of winning pairs for the greedy strategy on the $(n,k)$-game. The recursive structure gives:
\begin{align}\label{eq:W-recursion}
W(n,k) = |X_1|^2 + W(n-1, k-1) = \binom{n-1}{k}^2 + W(n-1, k-1),
\end{align}
with base case $W(n, 0) = 1$ (the single safe set $[n]$ paired with itself).
We prove Eq.~\eqref{eq:greedy-wins} by induction on $k$. 

\emph{Base case} ($k = 0$): $W(n,0) = 1 = \binom{n-1}{0}^2$. 

\emph{Inductive step}: Assume the formula holds for $k-1$. Then:
\begin{align}
W(n,k) &= \binom{n-1}{k}^2 + W(n-1, k-1) && \text{(by Eq.~\eqref{eq:W-recursion})} \\
&= \binom{n-1}{k}^2 + \sum_{i=0}^{k-1} \binom{n-2-i}{k-1-i}^2 && \text{(by inductive hypothesis)} \\
&= \binom{n-1}{k}^2 + \sum_{j=1}^{k} \binom{n-1-j}{k-j}^2 && \text{(substituting $j = i+1$)} \\
&= \sum_{i=0}^{k} \binom{n-1-i}{k-i}^2. && \qedhere
\end{align}
\end{proof}

\subsection{Optimality of the Greedy Strategy}

We now prove that the greedy strategy is optimal.
\begin{proposition}[Greedy Optimality]\label{prop:greedy-optimal}
Among all valid strategies $f: \mathcal{S} \to [n]$, the greedy strategy maximizes $W(f,f) = \sum_c |f^{-1}(c)|^2$.
\end{proposition}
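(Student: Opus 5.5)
We argue by induction on $k$, showing that the optimal value $W^{\mathrm{opt}}(n,k) := \max_f W(f,f)$ over valid $f$ obeys the same recursion as the greedy count in Eq.~\eqref{eq:W-recursion}, namely $W^{\mathrm{opt}}(n,k) = \binom{n-1}{k}^2 + W^{\mathrm{opt}}(n-1,k-1)$, with $W^{\mathrm{opt}}(n,0)=1$. Proposition~\ref{prop:greedy-wins} then identifies $W^{\mathrm{opt}}(n,k)$ with $W(f^*,f^*)$. The base case $k=0$ is trivial, since there is a single safe set and every valid $f$ gives $W=1$.

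For the inductive step, let $f$ be an optimal valid strategy. By Lemma~\ref{lem:saturation}, some channel $c_0$ has $|f^{-1}(c_0)| = \binom{n-1}{k} = |X_{c_0}|$. Because $f^{-1}(c_0)\subseteq X_{c_0}$, this forces $f^{-1}(c_0) = X_{c_0}$. By channel-permutation symmetry we may relabel so that $c_0=1$. Every remaining safe set then lies in $\mathcal{S}\setminus X_1$, the family of $(n-k)$-subsets of $\{2,\dots,n\}$. This is exactly the input space of the $(n-1,k-1)$-jamming game, since $d=n-k=(n-1)-(k-1)$.

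The restriction $f|_{\mathcal{S}\setminus X_1}$ is a valid strategy for that subgame. It never outputs channel $1$: validity requires $f(x)\in x$, and no such $x$ contains $1$. Hence
\begin{align}
W(f,f) = \binom{n-1}{k}^2 + \sum_{c\geq 2} |f^{-1}(c)|^2 = \binom{n-1}{k}^2 + W_{(n-1,k-1)}\bigl(f|_{\mathcal{S}\setminus X_1}\bigr) \le \binom{n-1}{k}^2 + W^{\mathrm{opt}}(n-1,k-1).
\end{align}
By the inductive hypothesis, $W^{\mathrm{opt}}(n-1,k-1)$ is attained by the greedy strategy on $\{2,\dots,n\}$. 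The right-hand side is therefore exactly $W(f^*,f^*)$ by Eq.~\eqref{eq:W-recursion}, which gives $W(f,f)\le W(f^*,f^*)$. The reverse inequality holds trivially, so $f^*$ is optimal.

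The induction must be on $k$ with $n$ free, so the hypothesis has to be stated for all $n$. The subgame needs $n-1\ge d$, which holds because $k\ge 1$ in the inductive step.

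The main point needing care is the existence of an optimum to which Lemma~\ref{lem:saturation} applies. This is immediate because the strategy space is finite. The second point is that saturation pins down $f^{-1}(c_0)$ as a set, not merely its cardinality; this follows from the inclusion $f^{-1}(c_0)\subseteq X_{c_0}$, so the decomposition of $W$ above is exact and the subgame is precisely the $(n-1,k-1)$-game.
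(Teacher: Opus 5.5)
Your proof is correct and follows essentially the same route as the paper: induction on $k$ with $n$ free, using the saturation lemma and the inclusion $f^{-1}(c_0)\subseteq X_{c_0}$ to pin down $f^{-1}(1)=X_1$ after relabeling, then bounding the remaining sum by the inductive hypothesis on the $(n-1,k-1)$-subgame and closing with the greedy recursion. Your argument that the restriction never outputs channel~$1$ comes directly from validity, since no set in $\mathcal{S}\setminus X_1$ contains~$1$, which is slightly more direct than the paper's; your remarks on the existence of an optimum and on $n-1\ge d$ make explicit what the paper leaves implicit.
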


\begin{proof}
We proceed by strong induction on $k$.

\emph{Base case} ($k = 0$): The unique safe set $[n]$ must be assigned to some channel. Any assignment yields $W = 1$, matching the greedy.

\emph{Inductive step}: Assume the result holds for all $(n', k')$ with $0 \leq k' < k$. Let $f$ be any optimal strategy for the $(n,k)$-game.
By Lemma~\ref{lem:saturation}, some channel achieves the maximum allocation $\binom{n-1}{k}$. By $S_n$-symmetry, we may relabel channels so that $d_1 = \binom{n-1}{k}$.
Since $|f^{-1}(1)| = |X_1|$ and $f^{-1}(1) \subseteq X_1$ (by validity), we have $f^{-1}(1) = X_1$. That is, channel~1 receives exactly all safe sets containing it.
The remaining safe sets $\mathcal{S} \setminus X_1$ must be assigned to channels $\{2, \ldots, n\}$. Observe that:
\begin{itemize}
    \item $\mathcal{S} \setminus X_1$ consists of all $(n-k)$-subsets of $[n]$ that do not contain channel~1.
    \item These are precisely the $(n-k)$-subsets of $\{2, \ldots, n\}$.
    \item This is exactly the input space for the $(n-1, k-1)$-jamming game on channels $\{2, \ldots, n\}$.
\end{itemize}
The restriction $f|_{\mathcal{S} \setminus X_1}$ is a valid strategy for this $(n-1, k-1)$-subgame: for any $x \in \mathcal{S} \setminus X_1$, we have $1 \notin x$, hence $f(x) \neq 1$ (since $f^{-1}(1) = X_1$ and $x \notin X_1$), and $f(x) \in x$ by validity of the original strategy, so $f(x)$ assigns $x$ to a channel in $x \cap \{2, \ldots, n\}$. By the inductive hypothesis:
\begin{align}
\sum_{c=2}^{n} |f^{-1}(c)|^2 \leq W_{\mathrm{greedy}}(n-1, k-1).
\end{align}
Therefore, we have
\begin{align}
W(f,f) &= |f^{-1}(1)|^2 + \sum_{c=2}^{n} |f^{-1}(c)|^2 \\
&\leq \binom{n-1}{k}^2 + W_{\mathrm{greedy}}(n-1, k-1) \\
&= W_{\mathrm{greedy}}(n, k),
\end{align}
where the last equality is the recursion~\eqref{eq:W-recursion}.
Since the greedy strategy achieves this upper bound, it is optimal.
\end{proof}

\subsection{Proof of Theorem ~\ref{thm:classical-value}}

We are now ready to put all ingredients together to prove the claimed statement on the maximum classical winning probability. 

\begin{proof}[Proof of Theorem ~\ref{thm:classical-value}]
Combining Lemma~\ref{lem:aligned}, Proposition~\ref{prop:greedy-wins}, and Proposition~\ref{prop:greedy-optimal}:
\begin{align}
\omega_c(n,k) = \frac{W_{\max}}{|\mathcal{S}|^2} = \frac{W_{\mathrm{greedy}}(n,k)}{\binom{n}{k}^2} = \frac{1}{\binom{n}{k}^2} \sum_{i=0}^{k} \binom{n-1-i}{k-i}^2. 
\end{align}
\end{proof}

\begin{proposition}[Classical scaling for fixed $d$]\label{prop:classical-scaling}
For fixed $d \geq 2$ and $k = n - d$,
\begin{align}\label{eq:classical-scaling}
\omega_c(n, n-d) = \frac{d^2}{(2d-1)\,n} + O(n^{-2}).
\end{align}
\end{proposition}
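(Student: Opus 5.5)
We start from the exact formula of Theorem~\ref{thm:classical-value} with $k=n-d$ and rewrite each binomial coefficient in terms of the fixed parameter $d$. Using $\binom{n-1-i}{k-i}=\binom{n-1-i}{d-1}$ (since $(n-1-i)-(k-i)=d-1$) and $\binom{n}{k}=\binom{n}{d}$, we get
\begin{align}
\omega_c(n,n-d)=\frac{1}{\binom{n}{d}^2}\sum_{i=0}^{n-d}\binom{n-1-i}{d-1}^2=\frac{1}{\binom{n}{d}^2}\sum_{m=d-1}^{n-1}\binom{m}{d-1}^2,
\end{align}
after substituting $m=n-1-i$. In this form the numerator is a sum of squares of polynomials of degree $d-1$ in $m$, and the denominator is the square of a polynomial of degree $d$ in $n$.

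Next we expand both sides asymptotically in $n$. For the denominator,
\begin{align}
\binom{n}{d}^2=\frac{n^{2d}}{(d!)^2}\bigl(1+O(1/n)\bigr).
\end{align}
For the numerator, $\binom{m}{d-1}^2=\frac{m^{2d-2}}{((d-1)!)^2}+p(m)$, where $p$ is a polynomial of degree at most $2d-3$. Summing over $m\le n-1$, the leading part is $\sum_{m<n}m^{2d-2}=\frac{n^{2d-1}}{2d-1}+O(n^{2d-2})$ by Faulhaber's formula, or by comparing the sum with an integral. The lower-order part contributes $\sum_{m<n}p(m)=O(n^{2d-2})$, and dropping the terms with $m<d-1$ changes nothing because those binomials vanish. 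Hence
\begin{align}
\sum_{m=d-1}^{n-1}\binom{m}{d-1}^2=\frac{n^{2d-1}}{(2d-1)((d-1)!)^2}+O(n^{2d-2}).
\end{align}

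Dividing the two expansions, the numerator has relative error $O(1/n)$ and so does the reciprocal of the denominator, since $(1+O(1/n))^{-1}=1+O(1/n)$. This gives
\begin{align}
\omega_c(n,n-d)=\frac{(d!)^2}{((d-1)!)^2(2d-1)\,n}\bigl(1+O(1/n)\bigr)=\frac{d^2}{(2d-1)n}+O(n^{-2}),
\end{align}
which is the claim.

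The only step that needs some care is justifying that the error in the numerator is of relative order $1/n$ uniformly. This works because $d$ is fixed, so every implied constant depends only on $d$. An alternative that avoids the polynomial bookkeeping is the sandwich bound
\begin{align}
\frac{(m-d+2)^{2d-2}}{((d-1)!)^2}\le\binom{m}{d-1}^2\le\frac{m^{2d-2}}{((d-1)!)^2},
\end{align}
followed by comparing both sides with integrals.
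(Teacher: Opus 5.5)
Your proof is correct and follows essentially the same route as the paper. Both arguments reindex to $\sum_{m=d-1}^{n-1}\binom{m}{d-1}^2/\binom{n}{d}^2$, estimate the numerator by a power sum, and expand the denominator as $\binom{n}{d}^2 = n^{2d}(1+O(1/n))/(d!)^2$. Writing $\binom{m}{d-1}^2$ exactly as a polynomial with a remainder $p(m)$ of degree at most $2d-3$ is a slightly tidier way to control the lower-order terms than the paper's relative $(1+O(1/j))$ expansion, but it is not a different approach.
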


\begin{proof}
When $d$ is fixed and $n \to \infty$, safe sets become sparse $d$-subsets of $[n]$. Substituting $j = n - 1 - i$ in the sum from Theorem~\ref{thm:classical-value} and using $\binom{n-1-i}{k-i} = \binom{n-1-i}{d-1}$ gives
\begin{align}\label{eq:omega-c-reindex}
\omega_c = \frac{1}{\binom{n}{d}^2}\sum_{j=d-1}^{n-1}\binom{j}{d-1}^2.
\end{align}
For the numerator, we expand the binomial coefficient for fixed $d$ and large $j$ as
\begin{align}
\binom{j}{d-1} = \frac{j^{d-1}}{(d-1)!}\!\left(1 + O(1/j)\right)\!,
\end{align}
	so that $\binom{j}{d-1}^2 = j^{2(d-1)}/((d-1)!)^2 \cdot (1 + O(1/j))$. Summing over $j$ and using the power sum formula $\sum_{j=1}^{m} j^p = m^{p+1}/(p+1) + O(m^p)$~\cite{powerSum} we obtain
\begin{align}\label{eq:sum-binom-squared}
\sum_{j=d-1}^{n-1}\binom{j}{d-1}^2 &= \frac{1}{((d-1)!)^2}\!\left(\sum_{j=1}^{n-1} j^{2(d-1)} + O\!\left(\sum_{j=1}^{n-1} j^{2d-3}\right)\right) = \frac{n^{2d-1}}{((d-1)!)^2(2d-1)} + O(n^{2d-2}),
\end{align}
where the $O(1/j)$ correction contributes $O(n^{2d-2})$ via $\sum j^{2d-3} = O(n^{2d-2})$. Note that the expansion $\binom{j}{d-1} = j^{d-1}/(d-1)! \cdot (1 + O(1/j))$ is applied only for $j \geq d-1$; the terms $j = 1, \ldots, d-2$ satisfy $\binom{j}{d-1} = 0$ and so contribute nothing to the left-hand side. Since $d$ is fixed, the finitely many small-$j$ terms contribute $O(1)$, which is absorbed into $O(n^{2d-2})$. For the denominator, $\binom{n}{d} = n^d/(d!) \cdot (1 + O(1/n))$, so $\binom{n}{d}^2 = n^{2d}/(d!)^2 \cdot (1 + O(1/n))$. Combining:
\begin{align}
\omega_c = \frac{(d!)^2}{n^{2d}}\!\left(1 + O(1/n)\right) \cdot \frac{n^{2d-1}}{((d-1)!)^2(2d-1)}\!\left(1 + O(1/n)\right) = \frac{d^2}{(2d-1)n} + O(n^{-2}).
\end{align}
The $1/n$ scaling reflects that coordination becomes difficult when safe sets are sparse: the number of safe sets is $|\mathcal{S}| = \binom{n}{d} \sim n^d/d!$, but the number of pairs sharing the same greedy assignment grows only as $n^{2d-1}$.
\end{proof}

\begin{remark}[Asymptotic Behavior of Classical Value]\label{rem:classical-scaling}
We note the asymptotic behaviour of the classical value in two further regimes. Throughout, we write $d = n-k$ for the safe set size, and define the normalized terms
\begin{align}\label{eq:ai-def}
a_i := \frac{\binom{n-1-i}{k-i}}{\binom{n}{k}}, \qquad i = 0, 1, \ldots, k,
\end{align}
so that $\omega_c(n,k) = \sum_{i=0}^{k} a_i^2$. Note that $a_0 = d/n$ (since $\binom{n-1}{k}/\binom{n}{k} = (n-k)/n$), and the ratio of consecutive terms satisfies
\begin{align}\label{eq:ratio-consecutive}
\frac{a_i}{a_{i-1}} = \frac{k - i + 1}{n - i}, \qquad i = 1, \ldots, k.
\end{align}

\textbf{(i) Few jammers (fixed $k$, large $n$):}
When $k$ is fixed and $n \to \infty$, the sum has $k+1$ terms. Each ratio in Eq.~\eqref{eq:ratio-consecutive} satisfies $(k-i+1)/(n-i) = (k-i+1)/n \cdot (1 + O(1/n))$, so
\begin{align}
a_i = \frac{d}{n} \cdot \frac{k^{\underline{i}}}{n^i}\bigl(1 + O(1/n)\bigr),
\end{align}
where $k^{\underline{i}} = k(k-1)\cdots(k-i+1)$ is the falling factorial. Since $k$ is fixed, $a_i^2 = O(n^{-2i-2})$ for $i \geq 1$, and the sum is dominated by the first two terms
\begin{align}
\omega_c = \frac{d^2}{n^2} + \frac{k^2 d^2}{n^4} + O(n^{-4}) = \frac{d^2}{n^2}\!\left(1 + \frac{k^2}{n^2} + O(n^{-4})\right)\!.
\end{align}
Comparing with the geometric series $(d/n)^2 \sum_{i=0}^{\infty} (k/n)^{2i} = (d/n)^2/(1 - k^2/n^2) = d^2/(n^2 - k^2) = (n-k)/(n+k)$, the error from truncating at $i = k$ (fixed) and from the $O(1/n)$ corrections to each ratio are both $O(n^{-2})$, giving
\begin{align}
    \omega_c(n,k) = \frac{n-k}{n+k} + O(n^{-2}) = \frac{d}{2n-d} + O(n^{-2}).
\end{align}
As a check: for $k=1$, this gives $\omega_c \approx (n-1)/(n+1)$, which agrees with the exact value $((n-1)^2 + 1)/n^2$ to order $O(n^{-2})$.

\textbf{(ii) Proportional scaling ($k = \alpha n$ for fixed $\alpha \in (0,1/2)$):}
When the jamming fraction $\alpha = k/n$ is held constant, both $k$ and $d$ grow linearly with $n$. Approximating each ratio in Eq.~\eqref{eq:ratio-consecutive} by $\alpha$ gives the geometric series
\begin{align}
\omega_c \approx \frac{d^2}{n^2}\sum_{i=0}^{k}\alpha^{2i} = \frac{(1-\alpha)^2}{1-\alpha^2} = \frac{1-\alpha}{1+\alpha},
\end{align}
since $\alpha^{2k} = \alpha^{2\alpha n}$ is exponentially small. To quantify the error: the true ratio $(k-i+1)/(n-i)$ differs from $\alpha = k/n$ by $O(1/n)$ for each $i$, and these corrections accumulate over the $O(1)$ effective terms (weighted by $\alpha^{2i}$), producing a net $O(1/n)$ correction. Hence
\begin{align}
    \omega_c(n,k) = \frac{1-\alpha}{1+\alpha} + O(n^{-1}).
\end{align}
The weaker error term compared to~(i) and Proposition~\ref{prop:classical-scaling} reflects the fact that $k$ now grows with~$n$. This leading term is a constant independent of $n$, interpolating smoothly between $\omega_c \to 1$ as $\alpha \to 0$ (no jamming) and $\omega_c \to 1/3$ as $\alpha \to 1/2$ (half the channels jammed). For example, when $20\%$ of channels are jammed ($\alpha = 0.2$), the classical value is approximately $2/3$.
\end{remark}
\begin{remark}[Graph-Theoretic Interpretation]
Note that the problem of finding the optimal strategy also admits a graph-theoretic interpretation. Define the \emph{compatibility graph} $G$ with vertex set $\mathcal{S}$ and an edge between $x$ and $y$ whenever $|x \cap y| \geq 1$. For each channel $c$, the star $X_c$ forms a clique in $G$: any two safe sets containing $c$ share at least the element $c$.

A valid strategy partitions $\mathcal{S}$ such that each part $f^{-1}(c)$ is contained in the star $X_c$. The number of wins is $\sum_c |f^{-1}(c)|^2$. The greedy strategy solves this by iteratively assigning all remaining vertices of a star to that channel, exploiting the convexity of $x \mapsto x^2$.
\end{remark}

%% file: generalQuantum.tex
\section{A framework for quantum strategies}
We now introduce a general framework for constructing quantum strategies for $(n,k)$-jamming games, and
establish some general properties such strategies obey. In Section~\ref{sec:examples}, we then use our framework
to provide explicit quantum strategies that provide a quantum advantage in $(n,k)$-jamming games. 

\subsection{State and Measurements}

Recall that a quantum strategy consists of a choice of shared state $\ket{\Psi}$ and measurements for Alice and Bob. It is known that for synchronous games such as the jamming game, in which the players must produce the same output $a=b$ for the same input $x=y$, the optimal quantum strategy can always be taken to use the maximally entangled state~\cite{PaulsenSeverini2016} with a minimum dimension equal to the number of possible outputs per input, i.e., the safe set size $d = n-k$. We thus 
fix the state $\ket{\Psi}$ that Alice and Bob will use to be the maximally entangled state
\begin{align}
\ket{\Psi_d} = \frac{1}{\sqrt{d}} \sum_{j=0}^{d-1} \ket{j}\ket{j}\ ,
\end{align}
where we will choose the minimal possible dimension, that is, $d=n-k$ is equal to the size of the safe sets. This choice is motivated by the desire to provide a practical quantum advantage that can be realized using the smallest possible quantum state.
We now let Alice's measurements be given by the bases
\begin{align}
\mathcal{A}_x = \big\{\ket{v_x^c} \mid c \in x\ \}, 
\end{align}
where $x \in \mathcal{S}$ denote the safe set leading to the choice of measurement basis, and the measurement outcome $c$ denote the choice of frequency band that the measuring party will make. We also define $\ket{v_x^c} = 0$ for all $c \notin x$ to be the zero vector. 
Bob uses the same bases as Alice up to conjugation
\begin{align}
\mathcal{B}_y = \big\{\ket{w_y^c} \mid c \in y\ \}, 
\end{align}
where $\ket{w_y^c} = \ket{v_y^c}^*$.
For the maximally entangled state $\ket{\Psi_d}$ 
the joint probability of Alice obtaining outcome $a$ and Bob obtaining outcome $b$ is given by
\begin{align}
P(a,b \mid x,y) = \bra{\Psi_d} \outp{v_x^a}{v_x^a} \otimes \outp{w_y^b}{w_y^b} \ket{\Psi_d} = \frac{1}{d} \left| \inp{v_x^a}{v_y^b} \right|^2\ , 
\end{align}
where we used the identity $\bra{\Psi_d} A \otimes B\ket{\Psi_d} = \tr(AB^T)/d$~\cite{NielsenChuang2010}.

\subsubsection{Developing intuition}
We now first rewrite the winning probability in a way that will make it more clear 
how we can choose measurements that lead to a large quantum winning probability. 
For the jamming game, Alice and Bob win if and only if they output the same safe channel $c \in x \cap y$. 
For a strategy of the above form, we have
\begin{align}\label{eq:omega_q_general}
\omega_q & = \frac{1}{|S|^2} \sum_{x,y \in S} \sum_{\substack{c \in x \cap y}} P(c,c\mid x,y)\ ,\\
& = \frac{1}{|S|^2} \sum_{x,y \in S} \sum_{\substack{c \in x \cap y}} \frac{1}{d} 
\left| \inp{v_x^c}{v_y^c} \right|^2\ ,\\
&=\frac{1}{d |S|^2} \sum_{c \in [n]} \sum_{\substack{x\\c \in x}} \sum_{\substack{y\\c\in y}}
\left| \inp{v_x^c}{v_y^c} \right|^2\ ,\label{eq:exchanged}\\
&=\frac{1}{d |S|^2} \sum_{c \in [n]} \tr\left[\left(\sum_{\substack{x\\c \in x}} \outp{v_x^c}{v_x^c}\right) \left(\sum_{\substack{y\\c\in y}} \outp{v_y^c}{v_y^c}\right)\right]
\label{eq:traceDef}
\\
&=\frac{1}{d |S|^2} \sum_{c \in [n]} \tr\left[A_c^2\right]\ , \label{eq:inTermsOfA}
\end{align}
with
\begin{align}
A_c := \sum_{\substack{x\\ c \in x}} \outp{v_x^c}{v_x^c}\ . 
\end{align}
Eq.~\eqref{eq:exchanged} follows from the fact that $\ket{v_s^c} = 0$ for $c \notin s$ allowing us to exchange the order of the sums. Eq.~\eqref{eq:traceDef} follows from the fact that the trace operation is linear and cyclic. 

Let us now use the rewritten expression to develop some intuition for how to choose the measurement bases $\mathcal{A}_s$ to maximize $\omega_q$.
From Eq.~\eqref{eq:inTermsOfA}
we see that our goal is to make $\tr[A_c^2]$ as large as possible for each channel $c$. The operator $A_c = \sum_{x \ni c} \outp{v_x^c}{v_x^c}$ is a sum of $N_c := \binom{n-1}{d-1}$ rank-one projectors, one for each safe set containing channel $c$.
To understand what makes $\tr[A_c^2]$ large, note that we have by definition
\begin{align}\label{eq:Aexpand}
\tr[A_c^2] = \sum_{x,y \ni c} \left|\inp{v_x^c}{v_y^c}\right|^2 = N_c + \sum_{\substack{x,y \ni c \\ x \neq y}} \left|\inp{v_x^c}{v_y^c}\right|^2\ .
\end{align}
The diagonal terms ($x = y$) contribute $N_c$ since each $\ket{v_x^c}$ is a unit vector. The off-diagonal terms ($x \neq y$) contribute positively whenever the vectors $\ket{v_x^c}$ and $\ket{v_y^c}$ are not orthogonal.

We thus see that we want the measurement vectors $\ket{v_x^x}$ and $\ket{v_y^x}$ for the same channel $c$ across different safe sets $x$ and $y$ to have significant overlap. If all vectors $\ket{v_x^c}$ for fixed $c$ were identical, we would have $|\inp{v_x^c}{v_y^c}|^2 = 1$ for all pairs, maximizing the sum. However, we face a constraint: for each safe set $x$, the vectors $\{\ket{v_x^c}\}_{c \in x}$ must form an orthonormal basis of $\mathbb{C}^d$.
The challenge is thus to construct orthonormal bases that are as ``aligned'' as possible across different safe sets, while respecting the orthonormality constraint within each safe set. 

\subsubsection{Framework}
Given the focus on vectors for channels $c$, our Ansatz begins with a single family of ``seed'' vectors $\{\ket{\phi_c}\}_{c \in [n]} \in \Complex^{d}$ with $d=n-k$, one for each channel $c$. The intuition is that by focusing on channel-specific vectors, we can encourage alignment across different safe sets for the same channel. Of course, we cannot measure directly in terms of these seed vectors for two reasons. First, in a synchronous game Alice and Bob benefit from producing the same outcome when performing the same measurement, meaning it is convenient to measure the maximally entangled state in a basis (or projective measurement), rather than some arbitrary measurement. Second, there are $n > d$ seed vectors, too many to form a measurement basis in $\Complex^d$.

To construct valid measurements, we could then orthonormalize the seed vectors within each safe set. That is, for a safe set $x$, we orthonormalize the vectors $\{\ket{\phi_c}\}_{c \in x}$ to obtain our measurement basis. The core of our framework thus proceeds in two steps:
\begin{enumerate}
    \item[(i)] \textbf{Seed vectors:} Define a family of $n$ vectors $\{\ket{\phi_c}\}_{c \in[n]}$ in $\mathbb{C}^d$ respecting the cyclic symmetry of the channel labels.
    \item[(ii)] \textbf{Orthonormalization:} For each safe set $x$, orthonormalize the seed vectors $\{\ket{\phi_c}\}_{c \in x}$ to obtain the measurement basis $\{\ket{v_x^c}\}_{c \in x}$.
\end{enumerate}

For the orthonormalization step, we employ throughout the L\"owdin symmetric orthonormalization procedure~\cite{Lowdin1950}, whenever the seed vectors are all linearly independent. This choice is inspired by quantum chemistry, where one seeks an orthonormal basis that maximally retains the character of atomic orbital wavefunctions. The L\"owdin procedure produces orthonormal vectors that are ``closest'' to the original vectors in a least-squares sense~\cite{Lowdin1950,Mayer2003}. This property is intuitively well-suited to our application, since the orthonormalized vectors $\ket{v_x^c}$ retain as much structure of the seed vectors $\ket{\phi_c}$ as possible, helping preserve overlaps between vectors associated with the same channel across different safe sets.

What if the seed vectors are not linearly independent? It turns out that in this case, we can still follow the L\"owdin intuition but with a twist (see below): we obtain the measurements by the so-called \emph{Pretty Good Measurement}~\cite{Belavkin1975,HausladenWootters1994}, sometimes also called the \emph{square-root measurement}~\cite{HausladenJozsaWootters1996,Eldar2003}, which is equivalent to L\"owdin orthonormalization whenever the seed vectors are linearly independent.

Note that our framework is fully general, and can be instantiated for different types of seed vectors. In Section~\ref{sec:exampleStrategies}, we consider a number of different types of seed vectors and evaluate the quantum advantage in each case.

\subsection{Orthogonalized measurements and their properties}

We now explain in detail how we obtain the measurement vectors using a general form of L\"owdin orthonormalization for general seed vectors. This general form is also known as the pretty good measurement in quantum information~\cite{Belavkin1975,HausladenWootters1994}. We then establish some general properties of the resulting set of measurements.

\subsubsection{\texorpdfstring{The Pretty Good Measurement (PGM) and L\"owdin orthonormalization}{The Pretty Good Measurement and Lowdin orthonormalization}}\label{sec:lowdin}\label{sec:gram}

Let $\{\ket{\phi_c}\}_{c \in [n]}$ be any collection of $n$ unit vectors in $\Complex^d$ with $d \leq n$. Such a collection is called a \emph{frame} for $\Complex^d$~\cite{Christensen2016,Waldron2018}. 
For any subset $\mathcal{X} \subseteq [n]$ with $|\mathcal{X}| = d$, we define the \emph{Gram matrix}~\cite{Gram1883,Bhatia2007} $G_\mathcal{X}$ as the $d \times d$ Hermitian matrix with entries indexed by channels $c, c' \in \mathcal{X}$:
\begin{align}\label{eq:gram-definition}
[G_\mathcal{X}]_{c,c'} := \inp{\phi_c}{\phi_{c'}}.
\end{align}
The Gram matrix $G_{\mathcal{X}}$ encodes the inner products among the seed vectors restricted to the safe set $\mathcal{X}$. It is always positive semidefinite, and is positive definite if and only if the seed vectors $\{\ket{\phi_c}\}_{c \in \mathcal{X}}$ are linearly independent.

Let $G_\mathcal{X}^+$ denote the Moore--Penrose pseudoinverse of $G_\mathcal{X}$, and let $(G_\mathcal{X}^+)^{1/2}$ denote its positive semidefinite square root.

\begin{definition}[PGM / Generalized L\"owdin orthonormalization]\label{def:lowdin}
For a safe set $\mathcal{X}$ with $|\mathcal{X}| = d$, the \emph{measurement vectors} are
\begin{align}\label{eq:lowdin-def}
\ket{v_\mathcal{X}^c} := \sum_{c' \in \mathcal{X}} [(G_\mathcal{X}^+)^{1/2}]_{c',c} \ket{\phi_{c'}}, \quad c \in \mathcal{X}.
\end{align}
The corresponding POVM elements are $M_\mathcal{X}^c := \outp{v_\mathcal{X}^c}{v_\mathcal{X}^c}$.
\end{definition}

This construction yields valid quantum measurements~\cite{Belavkin1975,HausladenWootters1994,EldarForney2002}. Defining the POVM elements $M_\mathcal{X}^c := \outp{v_\mathcal{X}^c}{v_\mathcal{X}^c}$, each $M_\mathcal{X}^c$ is positive semidefinite, and the elements sum to the projector onto the span of the seed vectors:
\begin{align}\label{eq:povm-completeness}
\sum_{c \in \mathcal{X}} M_\mathcal{X}^c = \Pi_\mathcal{X},
\end{align}
where $\Pi_\mathcal{X}$ is the orthogonal projector onto $\mathrm{span}\{\ket{\phi_c}\}_{c \in \mathcal{X}}$. When the seed vectors span $\Complex^d$ (in particular, when they are linearly independent), we have $\Pi_\mathcal{X} = I$ and the POVM elements sum to the identity. In the general case where $\Pi_\mathcal{X} \neq I$, we complete the POVM by adding an element $M_\mathcal{X}^\perp := I - \Pi_\mathcal{X}$ corresponding to an ``inconclusive'' outcome; obtaining this outcome would indicate a failure to coordinate on any channel in the safe set. This construction is precisely the \emph{Pretty Good Measurement}~\cite{Belavkin1975,HausladenWootters1994,Barnum2002}, which reduces to the projective measurement in the L\"owdin basis when the Gram matrix is positive definite.

When the Gram matrix $G_\mathcal{X}$ is positive definite, the pseudoinverse reduces to the ordinary inverse, and the measurement vectors simplify to
\begin{align}\label{eq:lowdin-def-pd}
\ket{v_\mathcal{X}^c} = \sum_{c' \in \mathcal{X}} [G_\mathcal{X}^{-1/2}]_{c',c} \ket{\phi_{c'}}, \quad c \in \mathcal{X}.
\end{align}
In this case, the vectors $\{\ket{v_\mathcal{X}^c}\}_{c \in \mathcal{X}}$ are orthonormal:
\begin{align}\label{eq:lowdin-orthonormality}
\inp{v_\mathcal{X}^c}{v_\mathcal{X}^{c'}} = [G_\mathcal{X}^{-1/2} G_\mathcal{X} G_\mathcal{X}^{-1/2}]_{c,c'} = \delta_{c,c'},
\end{align}
and the measurement is projective. This is the classical \emph{L\"owdin symmetric orthonormalization}~\cite{Lowdin1950}, which is distinguished from alternatives such as Gram--Schmidt by a fundamental optimality property: among all orthonormal bases $\{\ket{u_c}\}_{c \in \mathcal{X}}$ for the subspace spanned by $\{\ket{\phi_c}\}_{c \in \mathcal{X}}$, the L\"owdin vectors uniquely minimize the total squared distance $\sum_{c} \| \ket{u_c} - \ket{\phi_c} \|^2$ from the original seed vectors~\cite{Lowdin1950,Mayer2003}. This optimality, together with the symmetric treatment of all seed vectors, makes L\"owdin orthonormalization natural for our construction.

\begin{remark}[Frame operator representation]\label{rem:frame-operator}
When the seed vectors $\{\ket{\phi_c}\}_{c \in \mathcal{X}}$ are linearly independent, the Gram matrix is related to the \emph{frame operator} $S_\mathcal{X} := \sum_{c \in \mathcal{X}} \proj{\phi_c}$ by $G_\mathcal{X} = \Phi_\mathcal{X}^\dagger \Phi_\mathcal{X}$ and $S_\mathcal{X} = \Phi_\mathcal{X} \Phi_\mathcal{X}^\dagger$, where $\Phi_\mathcal{X}$ is the $d \times |\mathcal{X}|$ matrix whose columns are the seed vectors $\{\ket{\phi_c}\}_{c \in \mathcal{X}}$. The L\"owdin vectors then admit the equivalent representation
\begin{align}\label{eq:lowdin-frame-op}
\ket{v_\mathcal{X}^c} = S_\mathcal{X}^{-1/2}\ket{\phi_c}, \qquad c \in \mathcal{X},
\end{align}
which we use in later sections when analyzing specific frame constructions (see in particular Sections~\ref{sec:random-frames} and~\ref{sec:harmonic}).
\end{remark}

For later analysis of explicit types of seed vectors, the following proposition can be convenient, as it simplifies calculations for seed vectors that enjoy symmetries:
\begin{proposition}[Symmetry preservation]\label{prop:lowdin-symmetry}
If the seed vectors are invariant under a unitary $U$ that permutes the indices, i.e., $U\ket{\phi_c} = \ket{\phi_{\sigma(c)}}$ for some permutation $\sigma$, then
\begin{align}\label{eq:lowdin-equivariance}
U\ket{v_\mathcal{X}^c} = \ket{v_{\sigma(\mathcal{X})}^{\sigma(c)}}.
\end{align}
\end{proposition}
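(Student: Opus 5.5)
The plan is to use the frame-operator representation of Remark~\ref{rem:frame-operator} together with the unitary covariance of functional calculus. Throughout, we take the seed vectors in $\mathcal{X}$ to be linearly independent, so that $\ket{v_\mathcal{X}^c} = S_\mathcal{X}^{-1/2}\ket{\phi_c}$. The general PGM case is handled in the same way, with the inverse square root replaced by the pseudoinverse square root on the support.

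First, we compute how the frame operator transforms under $U$. Since $U\ket{\phi_c} = \ket{\phi_{\sigma(c)}}$, we have
\begin{align}
U S_\mathcal{X} U^\dagger = \sum_{c\in\mathcal{X}} U\proj{\phi_c}U^\dagger = \sum_{c\in\mathcal{X}} \proj{\phi_{\sigma(c)}} = S_{\sigma(\mathcal{X})}.
\end{align}
Here $\sigma(\mathcal{X})$ is again a set of size $d$, because $\sigma$ is a bijection.

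Second, we pass to the inverse square root. For any continuous function $h$ on the spectrum, unitary conjugation commutes with functional calculus, so $U h(S) U^\dagger = h(U S U^\dagger)$; this is immediate from the spectral decomposition. Taking $h(t) = t^{-1/2}$ on the positive spectrum (or $h(t)=t^{-1/2}$ for $t>0$ and $h(0)=0$ in the pseudoinverse case) gives $U S_\mathcal{X}^{-1/2} U^\dagger = S_{\sigma(\mathcal{X})}^{-1/2}$. Therefore
\begin{align}
U\ket{v_\mathcal{X}^c} = U S_\mathcal{X}^{-1/2}U^\dagger U\ket{\phi_c} = S_{\sigma(\mathcal{X})}^{-1/2}\ket{\phi_{\sigma(c)}} = \ket{v_{\sigma(\mathcal{X})}^{\sigma(c)}}.
\end{align}

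Alternatively, one can work directly with Definition~\ref{def:lowdin}. Unitarity of $U$ gives $[G_{\sigma(\mathcal{X})}]_{\sigma(c),\sigma(c')} = \bra{\phi_c}U^\dagger U\ket{\phi_{c'}} = [G_\mathcal{X}]_{c,c'}$, so $G_{\sigma(\mathcal{X})} = P G_\mathcal{X} P^T$ for the permutation matrix $P$ induced by $\sigma$. The pseudoinverse and the square root are both permutation-covariant, so $(G_{\sigma(\mathcal{X})}^+)^{1/2}$ has matrix entries $[(G_\mathcal{X}^+)^{1/2}]_{c',c}$ at index pair $(\sigma(c'),\sigma(c))$. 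Applying $U$ to the defining sum then yields the claim after reindexing. This route avoids invoking the frame-operator identity for singular Gram matrices.

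The main point needing care is justifying covariance of the pseudoinverse square root. I would handle it through the spectral decomposition: $P$ maps eigenspaces of $G_\mathcal{X}$ to eigenspaces of $G_{\sigma(\mathcal{X})}$ with the same eigenvalues, including the kernel. Beyond that, the only thing to check is the index bookkeeping, namely that $c \notin \mathcal{X}$ corresponds to $\sigma(c)\notin\sigma(\mathcal{X})$, so that the zero-vector convention is also preserved.
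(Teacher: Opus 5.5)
Your proposal is correct. Your second route is exactly the paper's proof: permutation covariance of the Gram matrix, $G_{\sigma(\mathcal{X})} = P G_\mathcal{X} P^T$, followed by reindexing Definition~\ref{def:lowdin}.

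Your primary route is genuinely different. Instead of working in the channel-index space and conjugating by a permutation matrix, you work in $\Complex^d$ itself. You observe $U S_\mathcal{X} U^\dagger = S_{\sigma(\mathcal{X})}$ and then use unitary covariance of functional calculus. This is basis-free and essentially one line, and it is the same mechanism the paper later uses in Lemma~\ref{lem:combined-independence}.

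Its cost is that it rests on the identity $\ket{v_\mathcal{X}^c} = S_\mathcal{X}^{-1/2}\ket{\phi_c}$. The paper records this identity (Remark~\ref{rem:frame-operator}) only for linearly independent seeds. For the PGM case you would need $\Phi_\mathcal{X}(G_\mathcal{X}^+)^{1/2} = (S_\mathcal{X}^+)^{1/2}\Phi_\mathcal{X}$. This is true, for instance from the singular value decomposition of $\Phi_\mathcal{X}$, but you assert it rather than prove it.

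The Gram-matrix route needs nothing beyond the definition, so it covers the singular case at no extra cost; that is why the paper uses it. Since you also supply that route, with the eigenspace and zero-vector bookkeeping handled, nothing is missing.
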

\begin{proof}
The hypothesis implies $[G_{\sigma(\mathcal{X})}]_{\sigma(c),\sigma(c')} = [G_\mathcal{X}]_{c,c'}$, so the Gram matrices are related by simultaneous permutation of rows and columns. Since matrix functions commute with this relabeling, we have $[(G_{\sigma(\mathcal{X})}^+)^{1/2}]_{\sigma(c),\sigma(c')} = [(G_\mathcal{X}^+)^{1/2}]_{c,c'}$. Applying $U$ to Eq.~\eqref{eq:lowdin-def} then yields Eq.~\eqref{eq:lowdin-equivariance}.
\end{proof}

\subsubsection{Conditions for positive definiteness}

While the pseudoinverse construction applies to arbitrary seed vectors, it is useful to have conditions ensuring the Gram matrix is positive definite, so that the measurement is projective.

A central parameter controlling the analysis is the \emph{coherence} of the frame~\cite{welch1974,FoucartRauhut2013}:
\begin{align}\label{eq:coherence-def}
\delta := \max_{a \neq b} |\inp{\phi_a}{\phi_b}|.
\end{align}
The coherence measures the maximum correlation between distinct seed vectors.

The following lemma ensures that the Gram matrix is invertible for small coherence. This result follows from the classical Gershgorin circle theorem~\cite{Gershgorin1931,HornJohnson2013}.

\begin{lemma}[Positive definiteness of the Gram matrix]\label{lem:gram-pd}
If the coherence satisfies $\delta < 1/(d-1)$, then for every safe set $\mathcal{X}$ with $|\mathcal{X}| = d$, the Gram matrix $G_\mathcal{X}$ is positive definite with eigenvalues in the interval $[1 - (d-1)\delta, 1 + (d-1)\delta]$.
\end{lemma}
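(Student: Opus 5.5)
The plan is to apply the Gershgorin circle theorem directly to $G_\mathcal{X}$. First, I fix a safe set $\mathcal{X}$ with $|\mathcal{X}| = d$. The Gram matrix $G_\mathcal{X}$ is Hermitian, so all its eigenvalues are real. Because the seed vectors are unit vectors, every diagonal entry equals $[G_\mathcal{X}]_{c,c} = \inp{\phi_c}{\phi_c} = 1$.

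Next, I bound the off-diagonal row sums. For each $c \in \mathcal{X}$, the Gershgorin radius is
\begin{align}
R_c = \sum_{c' \in \mathcal{X},\, c' \neq c} |\inp{\phi_c}{\phi_{c'}}|.
\end{align}
This sum has exactly $d-1$ terms, and each is at most $\delta$ by the definition of coherence in Eq.~\eqref{eq:coherence-def}. Hence $R_c \leq (d-1)\delta$ for every $c$.

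Gershgorin's theorem then says every eigenvalue lies in some disc centered at $1$ with radius $R_c \leq (d-1)\delta$. Combined with realness, this confines every eigenvalue to the interval $[1-(d-1)\delta,\, 1+(d-1)\delta]$.

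Finally, the hypothesis $\delta < 1/(d-1)$ gives $1-(d-1)\delta > 0$. All eigenvalues are therefore strictly positive, so the Hermitian matrix $G_\mathcal{X}$ is positive definite. Since $\mathcal{X}$ was arbitrary, the claim holds for every safe set.

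There is no real obstacle here. The only points needing care are to invoke Hermiticity, so that the Gershgorin discs reduce to real intervals, and to note that the case $d=1$ is trivial, since then $G_\mathcal{X} = (1)$ and the condition on $\delta$ is vacuous.
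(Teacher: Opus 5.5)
Your proposal is correct and follows the paper's proof essentially step for step. Both arguments use the unit diagonal and bound each off-diagonal entry by the coherence $\delta$, giving Gershgorin discs centered at $1$ with radius at most $(d-1)\delta$. Hermiticity then reduces these discs to the real interval $[1-(d-1)\delta,\,1+(d-1)\delta]$, whose lower endpoint is positive when $\delta<1/(d-1)$.
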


\begin{proof}
Since each seed vector is normalized, the diagonal entries of $G_\mathcal{X}$ satisfy $[G_\mathcal{X}]_{c,c} = 1$. The off-diagonal entries satisfy $|[G_\mathcal{X}]_{c,c'}| \leq \delta$ for $c \neq c'$. By Gershgorin's theorem~\cite{Gershgorin1931,HornJohnson2013}, every eigenvalue $\lambda$ of $G_\mathcal{X}$ lies in the union of disks
\begin{align}
\bigcup_{c \in \mathcal{X}} \left\{ z \in \mathbb{C} : |z - 1| \leq \sum_{c' \in \mathcal{X} \setminus \{c\}} |[G_\mathcal{X}]_{c,c'}| \right\}.
\end{align}
Since $G_\mathcal{X}$ is Hermitian, its eigenvalues are real. Each disk has center $1$ and radius at most $(d-1)\delta$, so $\lambda \in [1 - (d-1)\delta, 1 + (d-1)\delta]$. When $\delta < 1/(d-1)$, the lower bound is positive, establishing positive definiteness.
\end{proof}

\subsection{Decomposing the quantum winning probability}\label{sec:quantum-structure}

We now analyze the structure of the quantum winning probability, expressing it in terms of overlaps between L\"owdin vectors organized by the intersection size of their associated safe sets. This decomposition separates the combinatorial structure of the game from the geometric properties of the seed frame.

\subsubsection{Decomposition by intersection size}

Our decomposition result reduces the problem of analyzing $\omega_q$ to understanding how L\"owdin vectors for the same channel differ across safe sets with varying overlap.

\begin{theorem}[Decomposition]\label{thm:general-bound}
Let $\{\ket{\phi_c}\}_{c \in [n]}$ be a frame of $n$ unit vectors in $\Complex^d$ such that every $d$-subset is linearly independent. Suppose that for each intersection size $j \in \{1, \ldots, d\}$, there exists a lower bound $L_j \geq 0$ such that
\begin{align}\label{eq:Lj-condition}
|\inp{v_\mathcal{X}^c}{v_\mathcal{Y}^c}|^2 \geq L_j \quad \text{whenever } |\mathcal{X} \cap \mathcal{Y}| = j \text{ and } c \in \mathcal{X} \cap \mathcal{Y}.
\end{align}
Then the quantum strategy using L\"owdin-orthonormalized measurements achieves winning probability
\begin{align}\label{eq:omega-q-general-bound}
\omega_q \geq \frac{d}{n} \sum_{j=1}^{d} p(j) \cdot L_j,
\end{align}
where $p(j)$ is the probability that two uniformly random safe sets containing a fixed channel have intersection size exactly $j$, given by the hypergeometric distribution~\cite{JohnsonKotz1969}
\begin{align}\label{eq:hypergeometric-def}
p(j) = \frac{\binom{d-1}{j-1}\binom{k}{d-j}}{\binom{n-1}{d-1}}.
\end{align}
\end{theorem}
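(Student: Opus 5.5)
The plan is to start from the exact trace expression in Eq.~\eqref{eq:exchanged}, namely
\begin{align*}
\omega_q = \frac{1}{d|\mathcal{S}|^2}\sum_{c\in[n]}\sum_{\substack{\mathcal{X},\mathcal{Y}\\ c\in\mathcal{X}\cap\mathcal{Y}}}\left|\inp{v_\mathcal{X}^c}{v_\mathcal{Y}^c}\right|^2 .
\end{align*}
This expression is valid here because every $d$-subset is linearly independent. The Gram matrices are therefore positive definite, the measurements are projective L\"owdin bases, and no inconclusive outcome arises. Summing over $\mathcal{X}$ and $\mathcal{Y}$ separately for each $c$ then counts exactly the winning events.

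\textbf{Step 1: split by intersection size.} For a fixed channel $c$, partition the ordered pairs $(\mathcal{X},\mathcal{Y})$ in the star $X_c\times X_c$ according to $j=|\mathcal{X}\cap\mathcal{Y}|\in\{1,\dots,d\}$. Note that $j\ge 1$ because $c$ lies in both sets. Each term with intersection size $j$ is bounded below by $L_j$ using hypothesis~\eqref{eq:Lj-condition}; the diagonal $j=d$ terms equal $1$ anyway. This gives
\begin{align*}
\omega_q \ge \frac{1}{d|\mathcal{S}|^2}\sum_{c\in[n]}\sum_{j=1}^{d} M_j\, L_j,
\end{align*}
where $M_j$ denotes the number of ordered pairs in $X_c\times X_c$ with $|\mathcal{X}\cap\mathcal{Y}|=j$. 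By permutation symmetry of the channel labels, $M_j$ does not depend on $c$.

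\textbf{Step 2: count the pairs.} Fix $\mathcal{X}\ni c$; there are $|X_c|=\binom{n-1}{d-1}$ choices by Eq.~\eqref{eq:star-size}. To build $\mathcal{Y}$ with $c\in\mathcal{Y}$ and $|\mathcal{X}\cap\mathcal{Y}|=j$, choose the other $j-1$ shared elements from $\mathcal{X}\setminus\{c\}$, giving $\binom{d-1}{j-1}$ ways. Then choose the remaining $d-j$ elements from the $n-d=k$ channels outside $\mathcal{X}$, giving $\binom{k}{d-j}$ ways. Hence
\begin{align*}
M_j=\binom{n-1}{d-1}\binom{d-1}{j-1}\binom{k}{d-j}=\binom{n-1}{d-1}^2 p(j),
\end{align*}
with $p(j)$ as in~\eqref{eq:hypergeometric-def}. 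Vandermonde's identity confirms that these values of $p(j)$ sum to one, so they form a genuine probability distribution.

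\textbf{Step 3: normalize.} Summing over the $n$ channels gives
\begin{align*}
\omega_q\ge \frac{n}{d}\,\frac{\binom{n-1}{d-1}^2}{\binom{n}{d}^2}\sum_j p(j)L_j .
\end{align*}
Using $\binom{n-1}{d-1}/\binom{n}{d}=d/n$, the prefactor becomes $(n/d)(d/n)^2=d/n$, which is exactly Eq.~\eqref{eq:omega-q-general-bound}.

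The argument is elementary throughout. The only step needing care is the combinatorial count in Step~2: it must be done over ordered pairs and must include the diagonal $\mathcal{X}=\mathcal{Y}$ (the $j=d$ case), so that it matches the double sum in Eq.~\eqref{eq:exchanged} without double counting. Beyond that, the remaining work is bookkeeping of the normalization.
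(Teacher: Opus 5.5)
Your proposal is correct and follows the paper's proof essentially step for step. The paper likewise expands $\tr[A_c^2]$ into the double sum of squared overlaps and partitions the ordered pairs in the star by intersection size (Lemma~\ref{lem:partition-structure}). It counts these pairs as $\binom{n-1}{d-1}\binom{d-1}{j-1}\binom{k}{d-j}$ (Lemma~\ref{lem:intersection-counts}) and normalizes via $|\mathcal{S}| = \frac{n}{d}\binom{n-1}{d-1}$. Your remarks on projectivity from linear independence and the Vandermonde check that $p(j)$ sums to one are correct, but the inequality does not need them.
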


When $2d > n$, the minimum intersection size is $j_{\min} = 2d - n > 1$, and $p(j) = 0$ for $j < j_{\min}$ (since $\binom{k}{d-j} = 0$ when $d - j > k$); the sum effectively begins at $j_{\min}$. When the bounds $L_j$ are \emph{exact} (i.e., $|\inp{v_\mathcal{X}^c}{v_\mathcal{Y}^c}|^2 = L_j$ for all pairs with $|\mathcal{X} \cap \mathcal{Y}| = j$), the inequality becomes an equality.

This result \emph{separates} the combinatorics of the jamming game (encoded in $p(j)$) from the frame-theoretic analysis (encoded in the bounds $L_j$). The proof proceeds in three steps: (i) partition $\omega_q$ as a sum over pairs of safe sets grouped by intersection size; (ii) count the pairs with each intersection size; (iii) apply the bounds $L_j$ and simplify.

\subsubsection{Organizing by intersection size}

From Eq.~\eqref{eq:inTermsOfA}, the quantum winning probability involves the operators $A_c = \sum_{\mathcal{X} \ni c} \outp{v_\mathcal{X}^c}{v_\mathcal{X}^c}$. Our first lemma rewrites this in a form that exposes the intersection structure.

\begin{lemma}[Partition by intersection size]\label{lem:partition-structure}
The quantum winning probability can be written as
\begin{align}\label{eq:omega-partition}
\omega_q = \frac{1}{d|\mathcal{S}|^2} \sum_{c \in [n]} \sum_{j=1}^{d} \sum_{\substack{\mathcal{X}, \mathcal{Y} \ni c \\ |\mathcal{X} \cap \mathcal{Y}| = j}} |\inp{v_\mathcal{X}^c}{v_\mathcal{Y}^c}|^2,
\end{align}
where the innermost sum runs over ordered pairs of safe sets $(\mathcal{X}, \mathcal{Y})$ both containing channel $c$ with intersection size exactly $j$.
\end{lemma}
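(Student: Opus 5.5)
This is a direct regrouping of the expression in Eq.~\eqref{eq:exchanged}. The plan is to start from
\begin{align}
\omega_q = \frac{1}{d|\mathcal{S}|^2}\sum_{c\in[n]}\sum_{\mathcal{X}\ni c}\sum_{\mathcal{Y}\ni c}|\inp{v_\mathcal{X}^c}{v_\mathcal{Y}^c}|^2 .
\end{align}
This follows from Eq.~\eqref{eq:omega_q_general} together with the convention $\ket{v_s^c}=0$ for $c\notin s$. That convention is what allows the sum over $c\in\mathcal{X}\cap\mathcal{Y}$ to be exchanged with the sum over pairs $(\mathcal{X},\mathcal{Y})$.

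Next, fix $c$ and consider the set of ordered pairs $(\mathcal{X},\mathcal{Y})$ with $c\in\mathcal{X}$ and $c\in\mathcal{Y}$. The intersection size $j=|\mathcal{X}\cap\mathcal{Y}|$ satisfies $j\ge1$, because $c$ lies in the intersection. It also satisfies $j\le d$, because $|\mathcal{X}|=d$. So the level sets $\{(\mathcal{X},\mathcal{Y}):|\mathcal{X}\cap\mathcal{Y}|=j\}$ for $j=1,\dots,d$ partition this set of pairs. Splitting the double sum over $(\mathcal{X},\mathcal{Y})$ according to this partition gives Eq.~\eqref{eq:omega-partition}. It is harmless that some classes are empty, for instance $j<2d-n$ when $2d>n$; these contribute zero.

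None of the steps is a real obstacle. The only point needing care is to state explicitly that the pairs are ordered and that the diagonal $\mathcal{X}=\mathcal{Y}$ is included, as the $j=d$ class. This keeps consistency with Eq.~\eqref{eq:Aexpand}, where the diagonal contributes $N_c$.
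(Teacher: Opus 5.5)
Your proposal is correct and is essentially the paper's proof. The paper starts from Eq.~\eqref{eq:inTermsOfA} and expands $\tr[A_c^2]$ back into $\sum_{\mathcal{X},\mathcal{Y}\ni c}|\inp{v_\mathcal{X}^c}{v_\mathcal{Y}^c}|^2$, which is exactly your starting point Eq.~\eqref{eq:exchanged}, and then partitions by $1\le|\mathcal{X}\cap\mathcal{Y}|\le d$ as you do; your explicit remarks on ordered pairs, the diagonal as the $j=d$ class, and empty classes for $j<2d-n$ are accurate and slightly more careful than the paper's write-up.
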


\begin{proof}
Starting from Eq.~\eqref{eq:inTermsOfA}, we have
\begin{align}\label{eq:omega-from-Ac}
\omega_q = \frac{1}{d|\mathcal{S}|^2} \sum_{c \in [n]} \tr[A_c^2].
\end{align}
Expanding the definition $A_c = \sum_{\mathcal{X} \ni c} \outp{v_\mathcal{X}^c}{v_\mathcal{X}^c}$ and using linearity of the trace together with the identity $\tr[\outp{u}{u}\outp{w}{w}] = |\inp{u}{w}|^2$:
\begin{align}\label{eq:trace-as-overlaps}
\tr[A_c^2] = \sum_{\mathcal{X}, \mathcal{Y} \ni c} |\inp{v_\mathcal{X}^c}{v_\mathcal{Y}^c}|^2.
\end{align}
Since $c$ lies in both $\mathcal{X}$ and $\mathcal{Y}$, the intersection satisfies $1 \leq |\mathcal{X} \cap \mathcal{Y}| \leq d$. Partitioning the sum according to intersection size $j$ and substituting into Eq.~\eqref{eq:omega-from-Ac} yields Eq.~\eqref{eq:omega-partition}.
\end{proof}

\subsubsection{Counting pairs by intersection size}

To evaluate the sum in Eq.~\eqref{eq:omega-partition}, we need the number of pairs with each intersection size.

\begin{lemma}[Intersection size distribution]\label{lem:intersection-counts}
For the $(n,k)$-jamming game with $d = n-k$, let $N_c(j)$ denote the number of ordered pairs $(\mathcal{X}, \mathcal{Y})$ of safe sets containing channel $c$ with $|\mathcal{X} \cap \mathcal{Y}| = j$. Then:
\begin{align}\label{eq:Ncj-formula}
N_c(j) = \binom{d-1}{j-1}\binom{k}{d-j}\binom{n-1}{d-1}.
\end{align}
This count is independent of $c$ by symmetry. The probability $p(j) := N_c(j)/\binom{n-1}{d-1}^2$ is the hypergeometric distribution of Eq.~\eqref{eq:hypergeometric-def}.

The expected size of the symmetric difference $\mathcal{X} \triangle \mathcal{Y} = (\mathcal{X} \setminus \mathcal{Y}) \cup (\mathcal{Y} \setminus \mathcal{X})$ is
\begin{align}\label{eq:expected-sym-diff}
\mathbb{E}[|\mathcal{X} \triangle \mathcal{Y}|] = \frac{2(d-1)k}{n-1}.
\end{align}
For fixed $d$ and large $n$, this approaches $2(d-1)$, meaning typical safe set pairs differ in nearly all non-fixed positions.
\end{lemma}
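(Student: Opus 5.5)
The count $N_c(j)$ is a double count: choose $\mathcal{X}$ first, then count the $\mathcal{Y}$ that meet it in exactly $j$ elements. Fix the channel $c$. There are $\binom{n-1}{d-1}$ safe sets $\mathcal{X}\ni c$, since we pick the remaining $d-1$ elements from $[n]\setminus\{c\}$. Now fix such an $\mathcal{X}$ and count the safe sets $\mathcal{Y}\ni c$ with $|\mathcal{X}\cap\mathcal{Y}|=j$. Such a $\mathcal{Y}$ is determined by two independent choices:
\begin{itemize}
\item which $j-1$ of the $d-1$ elements of $\mathcal{X}\setminus\{c\}$ it contains, giving $\binom{d-1}{j-1}$ options;
\item which $d-j$ elements of the complement $[n]\setminus\mathcal{X}$ it contains. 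This complement has $n-d=k$ elements, giving $\binom{k}{d-j}$ options.
\end{itemize}
The number of such $\mathcal{Y}$ is therefore $\binom{d-1}{j-1}\binom{k}{d-j}$, and this does not depend on $\mathcal{X}$. Multiplying by the number of choices of $\mathcal{X}$ gives Eq.~\eqref{eq:Ncj-formula}. The formula visibly does not depend on $c$.

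Dividing by $\binom{n-1}{d-1}^2$ gives the claimed $p(j)$. As a sanity check, Vandermonde's identity $\sum_j\binom{d-1}{j-1}\binom{k}{d-j}=\binom{n-1}{d-1}$ confirms that $p$ is a probability distribution. Concretely, $p$ is the law of the number of elements of $\mathcal{X}\setminus\{c\}$ that land in $\mathcal{Y}\setminus\{c\}$, when $\mathcal{Y}\setminus\{c\}$ is a uniform $(d-1)$-subset of a population of size $n-1$ containing $d-1$ marked elements. So $j-1$ is hypergeometric with parameters $(n-1,\,d-1,\,d-1)$.

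For the symmetric difference, note that $|\mathcal{X}\triangle\mathcal{Y}|=2(d-j)$, because $|\mathcal{X}|=|\mathcal{Y}|=d$. It therefore suffices to compute $\mathbb{E}[j]$. By linearity of expectation, $\mathbb{E}[j-1]$ is the sum over the $d-1$ elements of $\mathcal{X}\setminus\{c\}$ of the probability that each lies in $\mathcal{Y}$. Each such probability is $\frac{d-1}{n-1}$, so $\mathbb{E}[j-1]=(d-1)^2/(n-1)$. Then
\begin{align}
\mathbb{E}[|\mathcal{X}\triangle\mathcal{Y}|]
= 2(d-1)-2\,\frac{(d-1)^2}{n-1}
= \frac{2(d-1)(n-d)}{n-1}
= \frac{2(d-1)k}{n-1},
\end{align}
which is Eq.~\eqref{eq:expected-sym-diff}. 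For fixed $d$ and $n\to\infty$, the factor $k/(n-1)=(n-d)/(n-1)$ tends to $1$, so the expectation tends to $2(d-1)$.

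The only point that needs care is the edge case $2d>n$. There the binomial $\binom{k}{d-j}$ vanishes for $d-j>k$, under the standard convention, and this correctly encodes that no $\mathcal{Y}$ with such a small intersection exists. No other step is delicate.
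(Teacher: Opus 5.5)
Your proposal is correct and follows the paper's proof. It uses the same double count: choose $\mathcal{X}$, then $j-1$ elements of $\mathcal{X}\setminus\{c\}$ and $d-j$ of the $k$ outside elements. It also uses the same reduction $|\mathcal{X}\triangle\mathcal{Y}| = 2(d-J)$. The only difference is that you derive $\mathbb{E}[J] = 1 + (d-1)^2/(n-1)$ by linearity of expectation over indicators, where the paper cites the standard hypergeometric mean; your version is self-contained and equally valid.
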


\begin{proof}
A safe set containing $c$ is determined by choosing $d-1$ additional elements from the remaining $n-1$ channels, giving $\binom{n-1}{d-1}$ choices for $\mathcal{X}$.

For a fixed $\mathcal{X} \ni c$, we count safe sets $\mathcal{Y} \ni c$ with $|\mathcal{X} \cap \mathcal{Y}| = j$. Besides $c$, the intersection must contain $j-1$ elements chosen from $\mathcal{X} \setminus \{c\}$ (size $d-1$), and the remaining $d-j$ elements of $\mathcal{Y}$ must come from $[n] \setminus \mathcal{X}$ (size $k$). This gives $\binom{d-1}{j-1}\binom{k}{d-j}$ choices for $\mathcal{Y}$. Multiplying by the choices for $\mathcal{X}$ yields Eq.~\eqref{eq:Ncj-formula}.

For the expected symmetric difference, note $|\mathcal{X} \triangle \mathcal{Y}| = 2(d-J)$ where $J$ is a random variable. By standard hypergeometric moment formulas~\cite{JohnsonKotz1969}, $\mathbb{E}[J] = 1 + (d-1)^2/(n-1)$, so
\begin{align}
\mathbb{E}[|\mathcal{X} \triangle \mathcal{Y}|] = 2\mathbb{E}[d-J] = 2\left(d - 1 - \frac{(d-1)^2}{n-1}\right) = \frac{2(d-1)k}{n-1}.
\end{align}
\end{proof}

\subsubsection{Proof of Theorem~\ref{thm:general-bound}}

\begin{proof}[Proof of Theorem~\ref{thm:general-bound}]
Applying the assumed bounds $|\inp{v_\mathcal{X}^c}{v_\mathcal{Y}^c}|^2 \geq L_j$ to Eq.~\eqref{eq:omega-partition}:
\begin{align}
\omega_q &\geq \frac{1}{d|\mathcal{S}|^2} \sum_{c \in [n]} \sum_{j=1}^{d} N_c(j) \cdot L_j. \label{eq:omega-with-Lj}
\end{align}
By Lemma~\ref{lem:intersection-counts}, $N_c(j)$ is independent of $c$, so the sum over $c$ contributes a factor of $n$:
\begin{align}
\omega_q &\geq \frac{n}{d|\mathcal{S}|^2} \sum_{j=1}^{d} N_c(j) \cdot L_j. \label{eq:omega-factor-n}
\end{align}
Using $|\mathcal{S}| = \binom{n}{d} = \frac{n}{d}\binom{n-1}{d-1}$ and the definition $p(j) = N_c(j)/\binom{n-1}{d-1}^2$:
\begin{align}
\frac{n}{d|\mathcal{S}|^2} \cdot N_c(j) = \frac{n}{d \cdot \frac{n^2}{d^2}\binom{n-1}{d-1}^2} \cdot N_c(j) = \frac{d}{n} \cdot p(j).
\end{align}
Substituting into Eq.~\eqref{eq:omega-factor-n} yields Eq.~\eqref{eq:omega-q-general-bound}.
\end{proof}

%% file: randomFrames.tex
\section{Random Seed Frames}\label{sec:random-frames}

We now choose seed vectors randomly from the Haar measure. Our goal is to show that a quantum advantage always exists for any number of safe channels $d$, as long as $n$ is sufficiently large.  Specifically, we will choose a \emph{random seed frame} consisting of $n$ unit vectors $\{|\phi_c\rangle\}_{c=0}^{n-1}$ in $\Complex^d$, drawn independently from the Haar measure on the unit sphere~\cite{Meckes2019}. Note that by the unitary invariance of the Haar measure, we may fix one seed vector without loss of generality, which we choose to be the first vector $|\phi_0\rangle = |0\rangle$. The remaining seed vectors $|\phi_1\rangle, \ldots, |\phi_{n-1}\rangle$ are independent Haar-random.
For a safe set $X$ containing channel $0$, the frame operator $S_X$ and the L\"owdin vectors $\{|v^c_X\rangle\}_{c \in X}$ are then as defined in Section~\ref{sec:lowdin}. We write $|v^0_X\rangle = S_X^{-1/2}|0\rangle$ for the L\"owdin vector corresponding to channel $0$.
\begin{remark}[Linear independence almost surely]\label{rem:random-linear-independence}
For $d$ Haar-random unit vectors in $\Complex^d$, the probability of linear dependence is zero. Thus $S_X$ is almost surely positive definite, $S_X^{-1/2}$ is almost surely well-defined, and the measurement is projective almost surely. The L\"owdin vectors are unit vectors by Eq.~\eqref{eq:lowdin-orthonormality}.
\end{remark}

\subsection{Quantum Winning Probability}\label{sec:random-winning}

We first show following the ideas of Theorem~\ref{thm:general-bound} that the expected quantum winning probability can be decomposed into a sum of expectations that depend on the intersection size $|X \cap Y| = j$ between safe sets.
Proposition~\ref{prop:expected-structural} reduces the problem to evaluating the expected overlaps $\mathcal{L}_j$. Two observations simplify the analysis considerably. First, the expected overlaps $\mathcal{L}_j$ depend only on $d$, not on $n$, since they are determined by the Haar measure on $\Complex^d$. Second, as $n \to \infty$ with $d$ fixed, the hypergeometric weights concentrate on $j = 1$: we have $p(1) \to 1$ and $p(j) \to 0$ for $j > 1$, so that $\mathcal{L} \to \mathcal{L}_1$. The large-$n$ behavior of $\E[\omega_q]$ is thus controlled entirely by $\mathcal{L}_1$, the expected overlap for minimally overlapping safe sets. Our main result evaluates $\mathcal{L}_1$ in closed form and thereby determines $\E[\omega_q]$ to leading order.

\begin{proposition}[Expected decomposition]\label{prop:expected-structural}
For i.i.d.\ Haar-random seed vectors, the expected quantum winning probability is
\begin{align}\label{eq:expected-omega-q}
\E[\omega_q] = \frac{d}{n}\, \mathcal{L}, \qquad \text{where} \quad \mathcal{L} := \sum_{j=j_{\min}}^{d} p(j)\, \mathcal{L}_j, \quad j_{\min} := \max(1,\, 2d - n),
\end{align}
and $\mathcal{L}_j := \E\!\left[|\langle v^0_X | v^0_Y\rangle|^2 \,\big|\, |X \cap Y| = j\right]$ is the expected squared L\"owdin overlap conditioned on intersection size $j$. The expectation in $\mathcal{L}_j$ is taken over both the Haar measure on seed vectors and the uniform distribution over pairs $(X,Y)$ with $|X \cap Y| = j$.
\end{proposition}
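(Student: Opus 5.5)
The plan is to follow the proof of Theorem~\ref{thm:general-bound}, but to take expectations term by term instead of inserting deterministic lower bounds $L_j$.

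\textbf{Starting identity.} By Remark~\ref{rem:random-linear-independence}, every $d$-subset of seed vectors is almost surely linearly independent. The L\"owdin measurements are therefore almost surely projective, and Lemma~\ref{lem:partition-structure} holds almost surely:
\begin{align}
\omega_q = \frac{1}{d|\mathcal{S}|^2} \sum_{c\in[n]} \sum_{j=1}^{d} \sum_{\substack{X,Y\ni c\\ |X\cap Y|=j}} |\langle v_X^c|v_Y^c\rangle|^2 .
\end{align}
Each summand lies in $[0,1]$, so the expectation over the Haar measure can be pulled inside the finite sum by linearity.

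\textbf{Symmetry reduction.} The next step is to show that $\E|\langle v_X^c|v_Y^c\rangle|^2$ depends only on $j=|X\cap Y|$, and not on $c$ or on the particular pair $(X,Y)$. The seed vectors are i.i.d., so their joint law is invariant under any permutation $\sigma$ of channel labels. Relabelling the channels sends the L\"owdin vectors of $(X,Y,c)$ to those of $(\sigma(X),\sigma(Y),\sigma(c))$. The construction in Definition~\ref{def:lowdin} uses only the seeds indexed by $X$ and by $Y$, so this relabelling does not change the law of the overlap.

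Any two triples $(X,Y,c)$ and $(X',Y',c')$ with $c\in X\cap Y$, $c'\in X'\cap Y'$ and $|X\cap Y|=|X'\cap Y'|=j$ are related by some permutation. To see this, match $c\mapsto c'$, then $X\cap Y\setminus\{c\}$, $X\setminus Y$, $Y\setminus X$ and the complement to their primed counterparts; the sizes agree. Hence the expectation is a constant depending only on $j$. Mapping $c'=0$ and invoking unitary invariance to fix $|\phi_0\rangle=|0\rangle$, this constant equals $\mathcal{L}_j$ as defined in the statement. Averaging additionally over uniformly random pairs with $|X\cap Y|=j$ changes nothing, since the expectation is already constant across such pairs.

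\textbf{Counting.} The remainder is the counting step already carried out in the proof of Theorem~\ref{thm:general-bound}. We get
\begin{align}
\E[\omega_q] = \frac{1}{d|\mathcal{S}|^2}\sum_{c\in[n]}\sum_{j} N_c(j)\,\mathcal{L}_j .
\end{align}
Here $N_c(j)$ comes from Lemma~\ref{lem:intersection-counts}. Substituting $|\mathcal{S}|=\frac{n}{d}\binom{n-1}{d-1}$ gives $\frac{n}{d|\mathcal{S}|^2}N_c(j)=\frac{d}{n}p(j)$. For $j<j_{\min}=\max(1,2d-n)$ we have $\binom{k}{d-j}=0$, so those terms vanish and the sum may start at $j_{\min}$. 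This yields Eq.~\eqref{eq:expected-omega-q}.

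\textbf{Main obstacle.} The only step needing care is the symmetry argument, specifically checking that the overlap's distribution is invariant under relabelling. The point to verify is that the L\"owdin vectors for $X$ and $Y$ are functions of the seeds in $X\cup Y$ alone, with identical formulas after relabelling. Measurability and integrability are harmless, because the overlaps are bounded and defined almost everywhere.
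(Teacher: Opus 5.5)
Your proof is correct and follows the paper's route. You expand $\tr[A_c^2]$ into squared L\"owdin overlaps, use the channel-permutation invariance of the i.i.d.\ Haar law, and then count pairs by intersection size via Lemma~\ref{lem:intersection-counts}. The one difference is that the paper uses symmetry only to reduce to $c=0$ and then defines $\mathcal{L}_j$ as an average over pairs, whereas you also show that the expected overlap is the same for every pair with $|X\cap Y|=j$. That extra claim is valid but not needed for the statement.
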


\begin{proof}
From Eq.~\eqref{eq:inTermsOfA}, the quantum winning probability for any fixed frame is
$\omega_q = \frac{1}{d|\mathcal{S}|^2} \sum_{c \in [n]} \tr\left[A_c^2\right]$.
For i.i.d.\ Haar-random seed vectors, the joint distribution is invariant under permutation of channel labels. Therefore $\E[\tr[A_c^2]] = \E[\tr[A_0^2]]$ for all $c \in [n]$, and taking expectations:
\begin{align}\label{eq:expected-omega-step1}
\E[\omega_q] = \frac{n}{d|\mathcal{S}|^2} \E[\tr[A_0^2]].
\end{align}
Expanding the trace as in Eq.~\eqref{eq:Aexpand} gives $\tr[A_0^2] = \sum_{X, Y \ni 0} |\langle v^0_X | v^0_Y\rangle|^2$. Using $|\mathcal{S}| = \binom{n}{d} = \frac{n}{d}\binom{n-1}{d-1}$ and partitioning by intersection size $|X \cap Y| = j$ using the distribution from Lemma~\ref{lem:intersection-counts} yields Eq.~\eqref{eq:expected-omega-q}.
\end{proof}

\begin{remark}[Symmetry holds in expectation]\label{rem:symmetry-expectation}
In any fixed realization of the random frame, the values $\tr[A_c^2]$ for different channels $c$ will generally differ. The symmetry argument---that all channels contribute equally---holds only in expectation over the Haar measure, since the joint distribution of seed vectors is permutation-invariant under relabeling of channels. Thus Eq.~\eqref{eq:expected-omega-q} characterizes $\E[\omega_q]$, not $\omega_q$ for a specific realization.
\end{remark}

We are now ready to phrase the winning probability in terms of the alignment parameter $\alpha_d$.

\begin{theorem}[Quantum winning probability for random frames]\label{thm:random-winning}
Let $\{|\phi_c\rangle\}_{c=0}^{n-1}$ be i.i.d.\ Haar-random seed vectors in $\Complex^d$ with $|\phi_0\rangle = |0\rangle$ fixed. Define the \emph{alignment parameter}
\begin{align}\label{eq:alpha-def}
\alpha_d := \E\!\left[\langle 0 | S_X^{-1/2} | 0\rangle^2\right],
\end{align}
where $S_X = \sum_{c \in X} |\phi_c\rangle\langle\phi_c|$ and the expectation is over Haar-random vectors $\{|\phi_a\rangle\}_{a \in X\setminus\{0\}}$. Then for fixed $d \geq 2$, the expected quantum winning probability satisfies
\begin{align}\label{eq:Ewq-asymptotic}
\E[\omega_q] = \frac{d}{n}\left[\alpha_d^2 + \frac{(1-\alpha_d)^2}{d-1}\right] + O(n^{-2}).
\end{align}
The alignment parameter $\alpha_d$ equals the expected success probability of the Pretty Good Measurement~\cite{HausladenWootters1994,Belavkin1975} for $d$ equiprobable Haar-random pure states in $\Complex^d$ (Theorem~\ref{thm:pgm-equivalence}). Through this identification asymptotic bounds on $\alpha_d$ are known:
\begin{align}\label{eq:alpha-values}
\alpha_2 = \frac{5}{6}, \qquad \liminf_{d \to \infty}\, \alpha_d \;\geq\; \frac{64}{9\pi^2} \approx 0.7205,
\end{align}
where the asymptotic lower bound was established by Montanaro~\cite{Montanaro2007} 
	(Corollary~\ref{cor:alpha-asymptotic}).
\end{theorem}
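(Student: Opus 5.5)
By Proposition~\ref{prop:expected-structural} we already have the exact identity $\E[\omega_q] = \frac{d}{n}\sum_{j} p(j)\,\mathcal{L}_j$. The plan therefore has two parts: a purely combinatorial truncation to $j=1$, and a Haar-symmetry computation of $\mathcal{L}_1$. The identification of $\alpha_d$ with the PGM success probability, and the values in Eq.~\eqref{eq:alpha-values}, I would defer to Theorem~\ref{thm:pgm-equivalence} and Corollary~\ref{cor:alpha-asymptotic} together with Montanaro's results.

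\emph{Step 1 (truncation).} For fixed $d$ and $k=n-d$, the hypergeometric weight is
\begin{align}
p(1)=\frac{\binom{k}{d-1}}{\binom{n-1}{d-1}}=\prod_{i=0}^{d-2}\frac{k-i}{n-1-i}=1-O(n^{-1}).
\end{align}
Hence $\sum_{j\ge 2}p(j)=O(n^{-1})$. Each $\mathcal{L}_j$ is an average of squared overlaps of unit vectors, so $\mathcal{L}_j\in[0,1]$. Therefore $\mathcal{L}=\mathcal{L}_1+O(n^{-1})$, and
\begin{align}
\E[\omega_q]=\frac{d}{n}\mathcal{L}_1+O(n^{-2}).
\end{align}
It remains to show $\mathcal{L}_1=\alpha_d^2+(1-\alpha_d)^2/(d-1)$.

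\emph{Step 2 (structure of one L\"owdin vector).} Take $X=\{0\}\cup A$ and $Y=\{0\}\cup B$ with $A\cap B=\emptyset$ and $|A|=|B|=d-1$. Then $|v^0_X\rangle=S_X^{-1/2}|0\rangle$ depends only on the seeds in $A$, and $|v^0_Y\rangle$ only on those in $B$, so the two vectors are independent.

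Decompose $|v^0_X\rangle=a_X|0\rangle+|r_X\rangle$ with $|r_X\rangle\perp|0\rangle$. Here $a_X=\langle 0|S_X^{-1/2}|0\rangle$ is real and positive since $S_X^{-1/2}>0$. Normalization (Remark~\ref{rem:random-linear-independence}) gives $\|r_X\|^2=1-a_X^2$, and by definition $\E[a_X^2]=\alpha_d$.

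The key symmetry is the following. For any unitary $U=1\oplus V$ with $V\in U(d-1)$ acting on $|0\rangle^\perp$, the seeds in $A$ are mapped to Haar seeds again, and $|\phi_0\rangle=|0\rangle$ is fixed. Moreover $S_{X}\mapsto US_XU^\dagger$, so $|v^0_X\rangle\mapsto U|v^0_X\rangle$. Consequently, conditional on $a_X$ (which is $U$-invariant), the vector $|r_X\rangle$ has a $U(d-1)$-invariant distribution. In particular $\E[\,|r_X\rangle \mid a_X]=0$, and $|r_X\rangle/\|r_X\|$ is uniform on the unit sphere of $\Complex^{d-1}$.

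\emph{Step 3 (averaging the overlap).} We have $\langle v^0_X|v^0_Y\rangle=a_Xa_Y+\langle r_X|r_Y\rangle$. Expanding $|\cdot|^2$ gives three contributions.
\begin{itemize}
\item The square term contributes $\E[a_X^2]\E[a_Y^2]=\alpha_d^2$ by independence.
\item The cross terms $2\,\mathrm{Re}\,\E[a_Xa_Y\langle r_X|r_Y\rangle]$ vanish. Conditioning on $X$-data and using $\E[\,|r_Y\rangle\mid a_Y]=0$ kills them.
\item For the $|\langle r_X|r_Y\rangle|^2$ term, fix $r_X$ and average over the uniform direction of $r_Y$ in $\Complex^{d-1}$. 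This gives $\|r_X\|^2\|r_Y\|^2/(d-1)$, hence $\E[(1-a_X^2)(1-a_Y^2)]/(d-1)=(1-\alpha_d)^2/(d-1)$.
\end{itemize}
Summing these yields $\mathcal{L}_1$. Combined with Step~1, this gives Eq.~\eqref{eq:Ewq-asymptotic}.

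\emph{Main obstacle.} The delicate point is the invariance argument in Step~2. One must check that the stabilizer of $|0\rangle$ acts on $|v^0_X\rangle$ covariantly; this follows from $f(US U^\dagger)=Uf(S)U^\dagger$ for the matrix square root. One must also check that the invariance holds conditionally on $a_X$, not just marginally, since that is what makes the cross term vanish and lets the norm and the direction of $r_Y$ factor.

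For the quoted values of $\alpha_d$, I would rely on Theorem~\ref{thm:pgm-equivalence}. That identification should follow from $P_{\rm succ}^{\rm PGM}=\frac1d\sum_c|\langle\phi_c|S_X^{-1/2}|\phi_c\rangle|^2$ together with the exchangeability of the $d$ seeds. Then $\alpha_2=5/6$ and the $\liminf$ bound can be taken from Montanaro~\cite{Montanaro2007}.
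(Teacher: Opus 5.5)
Your proposal is correct and follows essentially the same route as the paper. You truncate to $j=1$ using $p(1)=1-O(1/n)$ and $\mathcal{L}_j\in[0,1]$. You then derive the alignment formula by splitting $S_X^{-1/2}|0\rangle$ into its component along $|0\rangle$ and its perpendicular part, using independence when $X\cap Y=\{0\}$ and conditional $U(d-1)$-invariance of the perpendicular part; the PGM identification and the values of $\alpha_d$ you defer to Theorem~\ref{thm:pgm-equivalence} and Montanaro, as the paper does. The only cosmetic difference is that you kill the cross term via $\E[\,|r_Y\rangle\mid a_Y]=0$ instead of the uniform phase of $\langle\hat v^\perp_X|\hat v^\perp_Y\rangle$ (Lemma~\ref{lem:haar-inner-product}), which is equivalent.
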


\begin{remark}[Fixing $|\phi_0\rangle = |0\rangle$]\label{rem:wlog-fixing}
The choice $|\phi_0\rangle = |0\rangle$ is without loss of generality: for any realization of i.i.d.\ Haar-random vectors, we may apply a unitary $U$ with $U|\phi_0\rangle = |0\rangle$ to all seed vectors simultaneously. By the invariance of the Haar measure under unitary transformations, the rotated vectors $\{U|\phi_c\rangle\}_{c \neq 0}$ remain i.i.d.\ Haar-random, and the frame operator transforms as $S_X \mapsto U S_X U^\dagger$, leaving $\langle\phi_0|S_X^{-1/2}|\phi_0\rangle = \langle 0|(US_XU^\dagger)^{-1/2}|0\rangle$ unchanged. This argument is used again in the proof of Theorem~\ref{thm:pgm-equivalence}.
\end{remark}

The proof combines two ingredients. The first is the \emph{alignment formula} (Theorem~\ref{thm:master-formula}), which evaluates the  overlap $\mathcal{L}_1$ of minimally overlapping safe sets in closed form as a function of $\alpha_d$. The second is a sandwich bound showing that $\mathcal{L} \to \mathcal{L}_1$ as $n \to \infty$ at rate $O(1/n)$, which follows from the concentration of the hypergeometric weights $p(j)$. We establish the alignment formula and the characterization of $\alpha_d$ in Sections~\ref{sec:random-master-formula}--\ref{sec:random-pgm}, and give the proof of Theorem~\ref{thm:random-winning} in Section~\ref{sec:random-winning-proof}.

\subsubsection{Alignment Formula}\label{sec:random-master-formula}
\begin{theorem}[Alignment formula]\label{thm:master-formula}
For Haar-random seed vectors, the expected squared overlap of L\"owdin vectors from minimally overlapping safe sets ($|X \cap Y| = 1$) is
\begin{align}\label{eq:L1-formula}
\mathcal{L}_1 = \alpha_d^2 + \frac{(1 - \alpha_d)^2}{d-1}.
\end{align}
\end{theorem}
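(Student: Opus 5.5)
The plan is to exploit two facts. First, when $|X\cap Y|=1$ the two L\"owdin vectors $\ket{v^0_X}$ and $\ket{v^0_Y}$ are built from \emph{disjoint} collections of i.i.d.\ Haar seeds, apart from the shared $\ket{\phi_0}=\ket{0}$. Second, each of them has a distribution that is invariant under the unitaries fixing $\ket{0}$.

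By the channel-permutation symmetry already used in Proposition~\ref{prop:expected-structural}, we may take the common channel to be $0$. The seeds $\{\ket{\phi_a}\}_{a\in X\setminus\{0\}}$ and $\{\ket{\phi_b}\}_{b\in Y\setminus\{0\}}$ are then independent. Hence $\ket{v^0_X}=S_X^{-1/2}\ket{0}$ and $\ket{v^0_Y}=S_Y^{-1/2}\ket{0}$ are independent random vectors with the same law, and $\mathcal{L}_1$ is the same for every such pair $(X,Y)$.

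Next I will decompose each vector along the seed direction. Write
\begin{align}
\ket{v^0_X}=a_X\ket{0}+\ket{r_X}, \qquad a_X:=\bra{0}S_X^{-1/2}\ket{0},\quad \inp{0}{r_X}=0 .
\end{align}
Since $S_X^{-1/2}$ is positive definite almost surely (Remark~\ref{rem:random-linear-independence}), $a_X$ is real and positive. Because $\ket{v^0_X}$ is a unit vector, $\|r_X\|^2=1-a_X^2$. The overlap is then $\inp{v^0_X}{v^0_Y}=a_Xa_Y+\inp{r_X}{r_Y}$, so
\begin{align}
|\inp{v^0_X}{v^0_Y}|^2=a_X^2a_Y^2+2a_Xa_Y\,\mathrm{Re}\inp{r_X}{r_Y}+|\inp{r_X}{r_Y}|^2 .
\end{align}
By independence the first term has expectation $(\E[a_X^2])^2=\alpha_d^2$, using the definition \eqref{eq:alpha-def}.

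The key step is a symmetry argument. Let $U$ be any unitary with $U\ket{0}=\ket{0}$, i.e.\ a unitary acting on $\ket{0}^\perp\cong\Complex^{d-1}$. Replacing the seeds $\ket{\phi_a}$ by $U\ket{\phi_a}$ preserves their Haar law and maps $S_X\mapsto US_XU^\dagger$. Consequently $a_X$ is unchanged and $\ket{r_X}\mapsto U\ket{r_X}$, so the pair $(a_X,\ket{r_X})$ has the same law as $(a_X,U\ket{r_X})$.

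Averaging over $U$ Haar-distributed on $U(d-1)$ then gives two consequences. The conditional first moment vanishes, $\E[a_X\ket{r_X}]=0$. The weighted second moment is proportional to the projector $P_\perp:=I-\proj{0}$, and taking the trace fixes the constant:
\begin{align}
\E[\outp{r_X}{r_X}]=\frac{\E[1-a_X^2]}{d-1}\,P_\perp=\frac{1-\alpha_d}{d-1}\,P_\perp .
\end{align}
Using independence, the cross term is $2\,\mathrm{Re}\,\big(\E[a_X\bra{r_X}]\big)\big(\E[a_Y\ket{r_Y}]\big)=0$. The last term is $\E|\inp{r_X}{r_Y}|^2=\tr\big(\E[\outp{r_X}{r_X}]\,\E[\outp{r_Y}{r_Y}]\big)=(1-\alpha_d)^2/(d-1)$. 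Summing the three contributions yields \eqref{eq:L1-formula}.

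I expect the main obstacle to be making the invariance step rigorous. One must show that the joint law of $(a_X,\ket{r_X})$ is invariant under the stabilizer of $\ket{0}$, which follows from the covariance $S^{-1/2}\mapsto US^{-1/2}U^\dagger$ of the matrix square root. One must then apply Schur's lemma (irreducibility of $U(d-1)$ on $\Complex^{d-1}$) to conclude that the averaged operator is a multiple of $P_\perp$. The remaining point to check is that the independence of $X\setminus\{0\}$ and $Y\setminus\{0\}$ is exactly what allows the expectations of products to factor.
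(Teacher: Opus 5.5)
Your proof is correct and follows essentially the same route as the paper: decompose $\ket{v^0_X}$ along $\ket{0}$ and its orthogonal complement, use that the seeds in $X\setminus\{0\}$ and $Y\setminus\{0\}$ are independent, and use invariance under the stabilizer $U(d-1)$ of $\ket{0}$ to kill the cross term and evaluate the perpendicular contribution. The one difference is that you keep the remainder $\ket{r_X}$ unnormalized and use only its first and second moments via Schur's lemma, whereas the paper normalizes the perpendicular direction, proves it is Haar-distributed and independent of $(A_X,B_X)$, and then applies $\E[R^2]=1/(d-1)$ with a uniform phase; your version is slightly leaner because it needs no conditional-independence argument and no special handling of $B_X=0$.
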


The proof relies on a geometric decomposition of the L\"owdin vector and independence properties of the Haar measure, which we consolidate into three lemmas: we decompose each L\"owdin vector into a component along $|0\rangle$ and an orthogonal component (Lemma~\ref{lem:orthogonal-decomp}); we establish that for disjoint safe sets these components are independent (Lemma~\ref{lem:combined-independence}); and we compute the expected squared overlap using Haar-random inner product statistics (Lemma~\ref{lem:haar-inner-product}).

\begin{lemma}[Orthogonal decomposition]\label{lem:orthogonal-decomp}
Let $X$ be a safe set containing channel~$0$. The L\"owdin vector $|v^0_X\rangle = S_X^{-1/2}|0\rangle$ admits a unique decomposition
\begin{align}\label{eq:lowdin-decomp}
|v^0_X\rangle = A_X |0\rangle + B_X |\hat{v}^\perp_X\rangle,
\end{align}
where $A_X := \langle 0 | S_X^{-1/2} | 0\rangle > 0$ is real, $B_X := \sqrt{1 - A_X^2} \geq 0$, and $|\hat{v}^\perp_X\rangle$ is a unit vector orthogonal to $|0\rangle$.
\end{lemma}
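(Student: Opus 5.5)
Since $X$ contains channel $0$ and $\ket{\phi_0} = \ket{0}$, I will write $\ket{v^0_X} = S_X^{-1/2}\ket{0}$ as in Remark~\ref{rem:frame-operator}. By Remark~\ref{rem:random-linear-independence}, $S_X$ is almost surely positive definite, so $S_X^{-1/2}$ exists and is itself positive definite.

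\textbf{Step 1: the component along $\ket{0}$.} Set
\begin{align}
A_X := \bra{0} S_X^{-1/2} \ket{0}.
\end{align}
Because $S_X^{-1/2}$ is Hermitian and positive definite, this is a real number. It is strictly positive, since $\ket{0}$ is a nonzero vector.

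\textbf{Step 2: the orthogonal remainder.} Define
\begin{align}
\ket{r_X} := \ket{v^0_X} - A_X \ket{0} = (I - \proj{0})\ket{v^0_X}.
\end{align}
By construction $\ket{r_X}$ is orthogonal to $\ket{0}$. The Löwdin vector is a unit vector by Eq.~\eqref{eq:lowdin-orthonormality}, so Pythagoras gives
\begin{align}
1 = A_X^2 + \|\ket{r_X}\|^2 .
\end{align}
This yields $A_X \le 1$ and $\|\ket{r_X}\| = \sqrt{1-A_X^2} =: B_X \ge 0$.

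\textbf{Step 3: normalising the remainder.} If $B_X > 0$, set $\ket{\hat v^\perp_X} := \ket{r_X}/B_X$. If $B_X = 0$, the term $B_X\ket{\hat v^\perp_X}$ vanishes, so any fixed unit vector orthogonal to $\ket{0}$ can be used; one exists because $d \ge 2$. Either way we obtain Eq.~\eqref{eq:lowdin-decomp}.

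\textbf{Step 4: uniqueness.} Suppose $\ket{v^0_X} = a\ket{0} + \ket{w}$ with $\ket{w} \perp \ket{0}$. Taking the inner product with $\ket{0}$ forces $a = A_X$, and then $\ket{w} = \ket{r_X}$. So the coefficient $A_X$ and the product $B_X\ket{\hat v^\perp_X}$ are uniquely determined. The unit vector $\ket{\hat v^\perp_X}$ is itself unique whenever $B_X > 0$, which happens almost surely.

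The only point needing care is the degenerate case $B_X = 0$ together with the meaning of ``unique'' there; everything else is routine linear algebra.
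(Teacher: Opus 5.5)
Your proposal is correct and follows the paper's proof essentially step for step. Both arguments get $A_X>0$ from the positive definiteness of $S_X^{-1/2}$, define the remainder $|v^0_X\rangle - A_X|0\rangle$ (orthogonal to $|0\rangle$ by construction), and obtain $A_X^2+B_X^2=1$ from the unit norm of the L\"owdin vector. Your explicit handling of the degenerate case $B_X=0$ and of what ``unique'' means there is a small improvement: the paper only writes ``when $B_X>0$'' and never argues uniqueness.
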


\begin{proof}
Since $S_X^{-1/2}$ is almost surely positive definite (and hence Hermitian with positive eigenvalues), we have $A_X = \langle 0|S_X^{-1/2}|0\rangle > 0$. The L\"owdin vector $|v^0_X\rangle = S_X^{-1/2}|0\rangle$ is a unit vector (Remark~\ref{rem:random-linear-independence}). Write
\begin{align}
|v^0_X\rangle = A_X |0\rangle + |w\rangle,
\end{align}
where $|w\rangle := |v^0_X\rangle - A_X|0\rangle$ is orthogonal to $|0\rangle$ by construction. Setting $B_X := \||w\rangle\|$ and $|\hat{v}^\perp_X\rangle := |w\rangle/B_X$ (when $B_X > 0$) completes the decomposition. The identity $A_X^2 + B_X^2 = 1$ follows from $\||v^0_X\rangle\| = 1$.
\end{proof}

\begin{lemma}[Independence and Haar distribution for disjoint safe sets]\label{lem:combined-independence}
Let $X$ and $Y$ be safe sets containing channel $0$ with $X \cap Y = \{0\}$. Then:
\begin{enumerate}[label=(\roman*)]
\item \textbf{Independence:} The tuples $(A_X, B_X, |\hat{v}^\perp_X\rangle)$ and $(A_Y, B_Y, |\hat{v}^\perp_Y\rangle)$ are independent.
\item \textbf{Haar distribution:} The vectors $|\hat{v}^\perp_X\rangle$ and $|\hat{v}^\perp_Y\rangle$ are independent Haar-random unit vectors in $|0\rangle^\perp \cong \Complex^{d-1}$.
\item \textbf{Coefficient independence:} Each of $(A_X, B_X)$ and $(A_Y, B_Y)$ is independent of its corresponding perpendicular direction $|\hat{v}^\perp_X\rangle$ and $|\hat{v}^\perp_Y\rangle$, respectively.
\end{enumerate}
\end{lemma}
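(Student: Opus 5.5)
Since $X \cap Y = \{0\}$ and $\ket{\phi_0} = \ket{0}$ is fixed, the tuple $(A_X, B_X, \ket{\hat v^\perp_X})$ is a deterministic (measurable) function of the seed vectors $\{\ket{\phi_a}\}_{a \in X\setminus\{0\}}$ alone. Likewise $(A_Y,B_Y,\ket{\hat v^\perp_Y})$ depends only on $\{\ket{\phi_b}\}_{b\in Y\setminus\{0\}}$. These two families are disjoint sub-collections of the i.i.d.\ Haar-random vectors, so they are independent, and measurable functions of independent families are independent. This proves (i) directly from the definitions in Lemma~\ref{lem:orthogonal-decomp}.

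For (ii) and (iii) I would exploit the stabilizer subgroup $\mathcal{U}_0 = \{U \text{ unitary} : U\ket{0} = \ket{0}\} \cong U(d-1)$ acting on $\ket{0}^\perp$. Fix $X$ and apply any $U \in \mathcal{U}_0$ to all seed vectors in $X\setminus\{0\}$. By unitary invariance of the Haar measure the transformed family has the same joint law. Meanwhile $S_X \mapsto U S_X U^\dagger$, hence $S_X^{-1/2} \mapsto U S_X^{-1/2} U^\dagger$. It follows that $A_X = \bra{0}S_X^{-1/2}\ket{0}$ is invariant, so $B_X$ is too, while $\ket{v_X^0} \mapsto U\ket{v_X^0}$ and therefore $\ket{\hat v^\perp_X} \mapsto U\ket{\hat v^\perp_X}$. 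So the pair $(A_X, \ket{\hat v^\perp_X})$ has the same joint law as $(A_X, U\ket{\hat v^\perp_X})$ for every $U\in\mathcal{U}_0$.

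The next step is the standard randomization argument. Let $U$ be drawn Haar from $\mathcal{U}_0$, independently of the seeds. The conditional law of $\ket{\hat v^\perp_X}$ given $A_X$ is then invariant under all of $U(d-1)$ acting on the unit sphere of $\ket{0}^\perp$. The unique such probability measure is the Haar (uniform) measure on that sphere. Hence, conditional on $(A_X,B_X)$, the direction $\ket{\hat v^\perp_X}$ is Haar-distributed regardless of the value of $A_X$. This gives (iii) and the marginal statement of (ii), and combining with (i) yields that the two directions are independent Haar vectors.

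The main obstacle is the measure-theoretic care in this last step. First, the event $B_X = 0$ makes $\ket{\hat v^\perp_X}$ undefined. I would argue it has probability zero, since it forces $\ket{0}$ to be an eigenvector of $S_X$, which is a measure-zero algebraic condition; alternatively, one can assign an independent Haar direction on that null event. Second, turning invariance of the joint law into invariance of the conditional law requires the regular-conditional-probability argument. The cleanest route is to show $(A_X, \ket{\hat v^\perp_X}) \overset{d}{=} (A_X, U\ket{\hat v^\perp_X})$ with $U$ independent Haar, and then note that for any fixed unit $\ket{w}$ the vector $U\ket{w}$ is Haar-distributed and independent of $A_X$. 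For $d=2$ the space $\ket{0}^\perp$ is one-dimensional, so "Haar" means a uniform phase, and the same argument applies with $U(1)$.
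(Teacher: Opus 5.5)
Your proposal is correct and follows essentially the same route as the paper. Part (i) comes from the disjointness of the independent seed families $\{\ket{\phi_a}\}_{a\in X\setminus\{0\}}$ and $\{\ket{\phi_b}\}_{b\in Y\setminus\{0\}}$, and parts (ii)--(iii) come from the stabilizer $\mathcal{U}_0\cong U(d-1)$ fixing $A_X$ while rotating $\ket{\hat v^\perp_X}$, together with uniqueness of the invariant measure on the sphere. Your extra care with the null event $B_X=0$ and your independent-Haar-$U$ randomization make rigorous two points the paper passes over quickly, but the argument is the same.
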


\begin{proof}
\textbf{Part (i):} Since $X \cap Y = \{0\}$, the collections $\{|\phi_a\rangle\}_{a \in X\setminus\{0\}}$ and $\{|\phi_b\rangle\}_{b \in Y\setminus\{0\}}$ are disjoint and consist of independent Haar-random vectors. The tuple $(A_X, B_X, |\hat{v}^\perp_X\rangle)$ is a function of $\{|\phi_a\rangle\}_{a \in X\setminus\{0\}}$ alone, and $(A_Y, B_Y, |\hat{v}^\perp_Y\rangle)$ is a function of $\{|\phi_b\rangle\}_{b \in Y\setminus\{0\}}$ alone. Independence follows.

\textbf{Part (ii):} Let $\mathcal{U} := \{U \in U(d) : U|0\rangle = |0\rangle\}$ be the stabilizer of $|0\rangle$, which acts as $U(d-1)$ on $|0\rangle^\perp$. For any $U \in \mathcal{U}$, applying $U$ to all seed vectors $|\phi_a\rangle \mapsto U|\phi_a\rangle$ preserves their joint distribution (by Haar invariance). Under this map, the frame operator transforms as $S_X \mapsto US_XU^\dagger$, and hence $S_X^{-1/2} \mapsto US_X^{-1/2}U^\dagger$ (since the matrix inverse square root commutes with unitary conjugation). The L\"owdin vector transforms as $|v^0_X\rangle = S_X^{-1/2}|0\rangle \mapsto US_X^{-1/2}U^\dagger|0\rangle = US_X^{-1/2}|0\rangle = U|v^0_X\rangle$, where we used $U^\dagger|0\rangle = |0\rangle$. It follows that the parallel coefficient $A_X = \langle 0|v^0_X\rangle$ is invariant (since $\langle 0|U|v^0_X\rangle = \langle 0|v^0_X\rangle$), and consequently $B_X = \sqrt{1 - A_X^2}$ is also invariant, while the perpendicular component transforms as $|\hat{v}^\perp_X\rangle \mapsto U|\hat{v}^\perp_X\rangle$.

The distribution of $|\hat{v}^\perp_X\rangle$ is therefore invariant under all unitaries in $\mathcal{U} \cong U(d-1)$. The unique probability measure on the unit sphere in $\Complex^{d-1}$ with this invariance is the Haar measure. Thus $|\hat{v}^\perp_X\rangle$ is Haar-random in $|0\rangle^\perp$. The same argument applies to $|\hat{v}^\perp_Y\rangle$, and independence follows from Part (i).

\textbf{Part (iii):} We show that the conditional distribution of $|\hat{v}^\perp_X\rangle$ given $(A_X, B_X) = (a, b)$ is Haar measure on $|0\rangle^\perp$, independently of the values $a, b$. Fix any $U \in \mathcal{U}$. As shown in Part~(ii), the map $|\phi_c\rangle \mapsto U|\phi_c\rangle$ preserves the joint distribution of all seed vectors, leaves $(A_X, B_X)$ invariant, and sends $|\hat{v}^\perp_X\rangle \mapsto U|\hat{v}^\perp_X\rangle$. Since $(A_X, B_X)$ is unchanged by the map, conditioning on $(A_X, B_X) = (a, b)$ is compatible with it: the conditional distribution of $|\hat{v}^\perp_X\rangle$ given $(A_X, B_X) = (a, b)$ is invariant under $U$. Since this holds for every $U \in \mathcal{U} \cong U(d-1)$, the conditional distribution is Haar measure on the unit sphere in $|0\rangle^\perp$, which does not depend on $(a, b)$. By definition, $|\hat{v}^\perp_X\rangle$ and $(A_X, B_X)$ are therefore independent.
\end{proof}

\begin{lemma}[Inner product of Haar-random vectors]\label{lem:haar-inner-product}
For independent Haar-random unit vectors $|u\rangle, |w\rangle$ in $\Complex^{m}$, writing $\langle u|w\rangle = Re^{i\psi}$ with $R \geq 0$:
\begin{enumerate}[label=(\roman*)]
\item $\E[R^2] = 1/m$.
\item The phase $\psi$ is uniform on $[0, 2\pi)$ and independent of $R$.
\end{enumerate}
\end{lemma}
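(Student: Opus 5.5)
The plan is to exploit unitary invariance of the Haar measure, reducing the statement to the distribution of a single coordinate of one Haar-random vector.

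For part (i), condition on $|u\rangle$. Since $|w\rangle$ is independent of $|u\rangle$ and Haar-distributed, we may pick a unitary $U$ (depending only on $|u\rangle$) with $U|u\rangle = |0\rangle$. Then $\langle u|w\rangle = \langle 0|U|w\rangle$, and $U|w\rangle$ is again Haar-random and independent of $|u\rangle$. Hence $R^2 = |\langle 0|w'\rangle|^2$ for a Haar-random $|w'\rangle$. By permutation invariance of the Haar measure (permutation matrices are unitary), the coordinates $|\langle i|w'\rangle|^2$, $i=0,\ldots,m-1$, are identically distributed. They sum to $\||w'\rangle\|^2 = 1$, so each has expectation $1/m$. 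This gives $\E[R^2]=1/m$ without computing the Beta distribution explicitly.

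For part (ii), use the diagonal phase unitaries $D_\theta := e^{i\theta}I$, or more simply $\mathrm{diag}(e^{i\theta},1,\ldots,1)$ in the rotated frame. After the reduction above, $\langle u|w\rangle = \langle 0|w'\rangle$. Applying $D_\theta = \mathrm{diag}(e^{i\theta},1,\ldots,1)$ to $|w'\rangle$ preserves its Haar distribution. The map leaves $R = |\langle 0|w'\rangle|$ unchanged and sends $\psi \mapsto \psi + \theta \bmod 2\pi$. So, conditional on $R=r$ (for $r>0$, which holds almost surely), the distribution of $\psi$ on the circle is invariant under all rotations. The unique rotation-invariant probability measure on the circle is the uniform one. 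Since this conditional law is the same for every $r$, $\psi$ is uniform and independent of $R$. This mirrors the conditioning argument of Lemma~\ref{lem:combined-independence}(iii).

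The step needing the most care is the conditioning in part (ii): making precise that conditioning on $R$ commutes with the invariance. I would handle it exactly as in Lemma~\ref{lem:combined-independence}(iii). For every bounded measurable $g$ and $h$, the invariance gives $\E[g(R)h(\psi)] = \E[g(R)h(\psi+\theta)]$. Averaging over $\theta$ then yields $\E[g(R)h(\psi)] = \E[g(R)]\cdot\frac{1}{2\pi}\int_0^{2\pi} h$, which is exactly independence together with uniformity. Part (i) also follows from the Beta$(1,m-1)$ law of $R^2$, but the symmetry argument avoids that computation.
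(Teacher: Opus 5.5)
Your proof is correct and follows essentially the same route as the paper: reduce to $|u\rangle = |0\rangle$ by unitary invariance, then use phase invariance of the Haar measure to show the phase is uniform and independent of $R$. The differences are minor. You get $\E[R^2]=1/m$ from permutation symmetry of the coordinates together with $\sum_i |\langle i|w'\rangle|^2 = 1$, rather than citing the $\mathrm{Beta}(1,m-1)$ law. You also write out the averaging-over-$\theta$ argument for independence, which the paper only asserts.
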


\begin{proof}
By unitary invariance, we may fix $|u\rangle = |0\rangle$. Then $\langle u|w\rangle = w_1$, the first component of $|w\rangle$. For a Haar-random unit vector in $\Complex^m$, the squared magnitude $|w_1|^2$ follows a $\mathrm{Beta}(1, m-1)$ distribution with mean $1/m$~\cite{Meckes2019}. Since the distribution of $|w\rangle$ is invariant under $|w\rangle \mapsto e^{i\theta}|w\rangle$, the phase of $w_1$ is uniform on $[0, 2\pi)$ and independent of $|w_1|$.
\end{proof}

\begin{proof}[Proof of Theorem~\ref{thm:master-formula}]
Using the decomposition from Lemma~\ref{lem:orthogonal-decomp}:
\begin{align}
\langle v^0_X | v^0_Y\rangle = A_X A_Y + B_X B_Y \langle\hat{v}^\perp_X | \hat{v}^\perp_Y\rangle.
\end{align}
Writing $\langle\hat{v}^\perp_X | \hat{v}^\perp_Y\rangle = Re^{i\psi}$:
\begin{align}\label{eq:overlap-squared}
|\langle v^0_X | v^0_Y\rangle|^2 &= |A_X A_Y + B_X B_Y Re^{i\psi}|^2 \nonumber \\
&= A_X^2 A_Y^2 + B_X^2 B_Y^2 R^2 + 2A_X A_Y B_X B_Y R \cos\psi.
\end{align}

By Lemma~\ref{lem:combined-independence}, the pairs $(A_X, B_X)$ and $(A_Y, B_Y)$ are mutually independent and independent of $\langle\hat{v}^\perp_X | \hat{v}^\perp_Y\rangle$. By Lemma~\ref{lem:haar-inner-product} with $m = d-1$, $R$ and $\psi$ are independent with
\begin{align}
\E[\cos\psi] = 0, \qquad \E[R^2] = \frac{1}{d-1}.
\end{align}

Taking expectations of Eq.~\eqref{eq:overlap-squared}:
\begin{align}
\mathcal{L}_1 &= \E[A_X^2]\E[A_Y^2] + \E[B_X^2]\E[B_Y^2] \cdot \frac{1}{d-1} + 0 \nonumber \\
&= \alpha_d^2 + (1 - \alpha_d)^2 \cdot \frac{1}{d-1},
\end{align}
using $\E[A^2] = \alpha_d$ and $\E[B^2] = 1 - \E[A^2] = 1 - \alpha_d$ (since $A^2 + B^2 = 1$).
\end{proof}

\subsubsection{Connection to the Pretty Good Measurement}\label{sec:random-pgm}

We establish a connection between $\alpha_d$ and the Pretty Good Measurement (PGM) from quantum state discrimination~\cite{HausladenWootters1994,Belavkin1975}.

\begin{theorem}[PGM equivalence]\label{thm:pgm-equivalence}
For Haar-random seed vectors $\{|\phi_a\rangle\}_{a=1}^{d}$ in $\Complex^d$,
\begin{align}\label{eq:pgm-equivalence}
\alpha_d = \E[P_{\mathrm{pgm}}],
\end{align}
where $P_{\mathrm{pgm}} := \frac{1}{d}\sum_{a=1}^{d} |\langle\phi_a|S^{-1/2}|\phi_a\rangle|^2$ is the PGM success probability for the equiprobable ensemble $\{|\phi_a\rangle\}_{a=1}^d$ with $S = \sum_{a=1}^d |\phi_a\rangle\langle\phi_a|$.
\end{theorem}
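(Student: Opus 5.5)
The argument is a symmetry reduction. First rewrite $\alpha_d$ as a statement about $d$ i.i.d.\ Haar-random vectors with no vector fixed, and then use exchangeability of the labels.

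\emph{Step 1: remove the fixing of $\ket{\phi_0}=\ket{0}$.} Label the seed vectors of the safe set $X$ by $a=1,\dots,d$, with $\ket{\phi_1}$ playing the role of channel~$0$. Let $\ket{\phi_1},\dots,\ket{\phi_d}$ be i.i.d.\ Haar-random and set $S=\sum_a\proj{\phi_a}$. By Remark~\ref{rem:wlog-fixing}, choose a unitary $U$ (depending on $\ket{\phi_1}$) with $U\ket{\phi_1}=\ket{0}$. Conditional on $\ket{\phi_1}$, the rotated vectors $U\ket{\phi_a}$ for $a\ge2$ are again i.i.d.\ Haar-random, by invariance of the Haar measure and independence from $\ket{\phi_1}$. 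Moreover, functional calculus commutes with conjugation, so
\begin{align}
\langle\phi_1|S^{-1/2}|\phi_1\rangle=\langle 0|(USU^\dagger)^{-1/2}|0\rangle .
\end{align}
The right-hand side has exactly the law of $\langle 0|S_X^{-1/2}|0\rangle$ in the definition~\eqref{eq:alpha-def}. Averaging over $\ket{\phi_1}$ therefore gives
\begin{align}
\alpha_d=\E\big[\langle\phi_1|S^{-1/2}|\phi_1\rangle^2\big].
\end{align}
Since $S^{-1/2}$ is almost surely positive definite (Remark~\ref{rem:random-linear-independence}), the diagonal entry $\langle\phi_1|S^{-1/2}|\phi_1\rangle$ is real and positive. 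Hence its square equals $|\langle\phi_1|S^{-1/2}|\phi_1\rangle|^2$.

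\emph{Step 2: exchangeability.} The joint law of $(\ket{\phi_1},\dots,\ket{\phi_d})$ is invariant under permutations of the labels, and $S$ is a symmetric function of the vectors. So for every $a$,
\begin{align}
\E\big[|\langle\phi_a|S^{-1/2}|\phi_a\rangle|^2\big]=\E\big[|\langle\phi_1|S^{-1/2}|\phi_1\rangle|^2\big]=\alpha_d .
\end{align}
Averaging over $a=1,\dots,d$ and using linearity of expectation yields $\E[P_{\mathrm{pgm}}]=\alpha_d$, which is~\eqref{eq:pgm-equivalence}. This average is legitimate because the PGM for the equiprobable ensemble has measurement vectors $S^{-1/2}\ket{\phi_a}$, as in Remark~\ref{rem:frame-operator}. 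Its success probability is therefore $\frac1d\sum_a|\langle\phi_a|S^{-1/2}|\phi_a\rangle|^2$, matching the definition in the statement.

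\emph{Main obstacle.} The only delicate point is Step~1. The unitary $U$ depends on $\ket{\phi_1}$, so one must argue that conditioning on $\ket{\phi_1}$ leaves the other vectors i.i.d.\ Haar-random after rotation. This follows from independence together with left-invariance of the Haar measure on the sphere. For measurability one may pick $U$ as a measurable section, e.g.\ a Householder-type reflection times a phase. Integrability is not an issue, since the squared quantities are bounded by $1$: $S^{-1/2}\ket{\phi_a}$ is a unit vector.
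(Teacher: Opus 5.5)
Your proposal is correct and follows essentially the same argument as the paper. The paper also combines exchangeability of the i.i.d. labels with the rotation argument of Remark~\ref{rem:wlog-fixing} (conditioning on $\ket{\phi_1}$), only in the opposite order. Your added care about the absolute value (via positive definiteness), the measurable choice of $U$, and the PGM normalization is fine but does not change the proof.
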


\begin{proof}
In Eq.~\eqref{eq:alpha-def}, $\alpha_d = \E[\langle 0|S_X^{-1/2}|0\rangle^2]$ where $|\phi_0\rangle = |0\rangle$ is fixed and the remaining $d-1$ vectors are Haar-random. In the PGM, all $d$ vectors $\{|\phi_a\rangle\}_{a=1}^d$ are Haar-random, and the success probability is $P_{\mathrm{pgm}} = \frac{1}{d}\sum_{a=1}^{d} \langle\phi_a|S^{-1/2}|\phi_a\rangle^2$ (noting that $\langle\phi_a|S^{-1/2}|\phi_a\rangle > 0$ since $S^{-1/2}$ is positive definite).

\emph{Step 1: Reduction by symmetry.} Since the $d$ vectors are i.i.d.\ Haar-random, the summands in $P_{\mathrm{pgm}}$ are identically distributed, so $\E[P_{\mathrm{pgm}}] = \E[\langle\phi_1|S^{-1/2}|\phi_1\rangle^2]$.

\emph{Step 2: Conditioning on $|\phi_1\rangle$.} Conditionally on $|\phi_1\rangle = |u\rangle$, the remaining vectors are still i.i.d.\ Haar-random. By the same rotation argument used to fix $|\phi_0\rangle = |0\rangle$ (Remark~\ref{rem:wlog-fixing}), $\E[\langle u|S^{-1/2}|u\rangle^2 \mid |\phi_1\rangle = |u\rangle] = \alpha_d$ for every $|u\rangle$, and taking expectations over $|\phi_1\rangle$ gives $\E[P_{\mathrm{pgm}}] = \alpha_d$.
\end{proof}

\begin{corollary}[Asymptotic lower bound]\label{cor:alpha-asymptotic}
\begin{align}
\liminf_{d \to \infty}\, \alpha_d \;\geq\; \left(\frac{8}{3\pi}\right)^{\!2} = \frac{64}{9\pi^2} \approx 0.7205.
\end{align}
\end{corollary}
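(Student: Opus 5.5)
By Theorem~\ref{thm:pgm-equivalence}, $\alpha_d=\E[P_{\mathrm{pgm}}]$ is exactly the expected PGM success probability for $d$ equiprobable Haar-random pure states in $\Complex^d$. The plan is to import Montanaro's asymptotic analysis of this quantity~\cite{Montanaro2007} and check that his setting matches ours.

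\emph{Step 1: rewrite $P_{\mathrm{pgm}}$ as a Gram-matrix quantity.} With $\Phi$ the $d\times d$ matrix whose columns are the seed vectors, we have $\langle\phi_a|S^{-1/2}|\phi_a\rangle=[(\Phi^\dagger\Phi)^{1/2}]_{aa}=[G^{1/2}]_{aa}$. Hence
\[
P_{\mathrm{pgm}}=\frac1d\sum_a\bigl([G^{1/2}]_{aa}\bigr)^2 .
\]
By Jensen (or Cauchy--Schwarz),
\[
\E[P_{\mathrm{pgm}}]\;\ge\;\E\Bigl[\Bigl(\tfrac1d\tr G^{1/2}\Bigr)^2\Bigr]\;\ge\;\Bigl(\E\bigl[\tfrac1d\tr G^{1/2}\bigr]\Bigr)^2 .
\]
So it suffices to show $\liminf_{d\to\infty}\E\bigl[\tfrac1d\tr G^{1/2}\bigr]\ge 8/(3\pi)$.

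\emph{Step 2: normalization.} Replace the unit vectors by Gaussian vectors with i.i.d.\ $\mathcal{CN}(0,1/d)$ entries, so that $\Phi$ becomes a Ginibre-type matrix. Haar unit vectors are these Gaussians rescaled by norms $\|g_a\|$ that concentrate at $1$ with fluctuations $O(d^{-1/2})$. Writing $\Phi_{\mathrm{unit}}=\Phi_{\mathrm{gauss}}D^{-1}$ with $D$ diagonal and $D\to I$ in operator norm with high probability, the operator-norm perturbation bound $\bigl|\tr|A|-\tr|B|\bigr|\le d\|A-B\|$ shows that the normalized trace $\tfrac1d\tr G^{1/2}=\tfrac1d\sum_i s_i(\Phi)$ changes only by $o(1)$ in expectation. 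Here $s_i$ are singular values, and the moment bounds on $\|\Phi\|$ control the expectation.

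\emph{Step 3: Marchenko--Pastur limit.} For square Ginibre matrices with this normalization, the empirical distribution of $G=\Phi^\dagger\Phi$ converges to Marchenko--Pastur with ratio $1$. Its density is $\frac{1}{2\pi}\sqrt{(4-x)/x}$ on $[0,4]$. Then
\[
\E\bigl[\tfrac1d\tr G^{1/2}\bigr]\to\int_0^4\sqrt{x}\,\frac{1}{2\pi}\sqrt{\tfrac{4-x}{x}}\,dx=\frac{1}{2\pi}\int_0^4\sqrt{4-x}\,dx=\frac{1}{2\pi}\cdot\frac{16}{3}=\frac{8}{3\pi}.
\]
Combining with Step~1 gives $\liminf\alpha_d\ge(8/3\pi)^2=64/(9\pi^2)$.

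The main obstacle is justifying convergence of expectations rather than just almost-sure convergence of the empirical spectral measure in Steps~2--3. The function $\sqrt{x}$ is unbounded, so uniform integrability is needed. The plan is to bound $\E[\tfrac1d\tr G]=1$, which gives the required uniform integrability of $\tfrac1d\tr G^{1/2}$ since $\sqrt{x}\le 1+x$. Alternatively, and more simply given the paper's stated reliance on~\cite{Montanaro2007}, this analytic step can be cited directly from Montanaro's theorem, leaving only the identification of Step~1 and Theorem~\ref{thm:pgm-equivalence} to be verified here.
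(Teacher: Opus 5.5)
Your proposal is correct and follows the same route as the paper. The paper's proof simply combines Theorem~\ref{thm:pgm-equivalence} with Montanaro's random-matrix bound, and your Steps~1--3 reconstruct the argument behind that citation: the reduction $P_{\mathrm{pgm}}=\tfrac1d\sum_a([G^{1/2}]_{aa})^2\ge(\tfrac1d\tr G^{1/2})^2$, followed by the Mar\v{c}enko--Pastur integral $\int\sqrt{x}\,d\mu_{\mathrm{MP}}=8/(3\pi)$. One small point on the expectation step: for unit vectors $\tfrac1d\tr G^{1/2}\le\sqrt{\tfrac1d\tr G}=1$ deterministically, so convergence in probability already gives convergence of expectations by bounded convergence; the bound $\sqrt{x}\le 1+x$ gives only domination, not uniform integrability, and is not needed.
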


\begin{proof}
By Theorem~\ref{thm:pgm-equivalence}, $\alpha_d = \E[P_{\mathrm{pgm}}]$ for $d$ equiprobable Haar-random states in $\Complex^d$. Montanaro~\cite{Montanaro2007} proved this lower bound using random matrix theory: the empirical eigenvalue distribution of the frame operator $S = \sum_{c} |\phi_c\rangle\langle\phi_c|$ converges to the Mar\v{c}enko--Pastur law~\cite{MarcenkoPastur1967,Meckes2019} with parameter $\gamma = 1$, and a trace-norm comparison with the resulting asymptotic PGM yields $\liminf_{d\to\infty} \E[P_{\mathrm{pgm}}] \geq [8/(3\pi)]^2$.
\end{proof}

\subsubsection{Proof of Theorem~\ref{thm:random-winning}}\label{sec:random-winning-proof}

\begin{proof}[Proof of Theorem~\ref{thm:random-winning}]
By Proposition~\ref{prop:expected-structural}, $\E[\omega_q] = (d/n)\,\mathcal{L}$ with $\mathcal{L} = \sum_{j=j_{\min}}^{d} p(j)\,\mathcal{L}_j$. We show that $\mathcal{L} = \mathcal{L}_1 + O(1/n)$, which combined with the alignment formula (Theorem~\ref{thm:master-formula}) yields Eq.~\eqref{eq:Ewq-asymptotic}.

Since each $\mathcal{L}_j$ is an expected squared overlap of unit vectors, we have $0 \leq \mathcal{L}_j \leq 1$ for all $j$. This gives the sandwich bound
\begin{align}\label{eq:L-sandwich-sec42}
p(1)\,\mathcal{L}_1 \;\leq\; \mathcal{L} \;\leq\; p(1)\,\mathcal{L}_1 + (1 - p(1)),
\end{align}
where the lower bound drops the non-negative terms $p(j)\mathcal{L}_j$ for $j \geq 2$, and the upper bound replaces $\mathcal{L}_j \leq 1$. Both bounds differ from $\mathcal{L}_1$ by $O(1 - p(1))$. A direct calculation from the hypergeometric distribution (Lemma~\ref{lem:intersection-counts}) gives
\begin{align}
p(1) = \prod_{i=0}^{d-2} \frac{n - d - i}{n - 1 - i} = 1 - \frac{(d-1)^2}{n} + O(n^{-2}).
\end{align}
Therefore $\mathcal{L} = \mathcal{L}_1 + O(1/n)$ for fixed $d$, and substituting the alignment formula Eq.~\eqref{eq:L1-formula} into $\E[\omega_q] = (d/n)\mathcal{L}$ yields Eq.~\eqref{eq:Ewq-asymptotic}. The characterization of $\alpha_d$ via the PGM is established in Theorem~\ref{thm:pgm-equivalence}, and the asymptotic limit in Corollary~\ref{cor:alpha-asymptotic}.
\end{proof}

\subsection{Quantum Advantage}\label{sec:random-advantage}

The main result of this section establishes that random seed vectors achieve quantum advantage for all sufficiently large dimensions, provided the number of channels is sufficiently large.

\begin{theorem}[Quantum advantage]\label{thm:quantum-advantage-general}
For all sufficiently large $d$, there exists $N(d)$ such that for all $n \geq N(d)$, random seed vectors achieve expected quantum advantage:
\begin{align}\label{eq:quantum-advantage-general}
\E[\omega_q] > \omega_c(n,\, n-d).
\end{align}
Moreover, the asymptotic advantage ratio satisfies
\begin{align}\label{eq:asymptotic-ratio}
\lim_{n \to \infty} \frac{\E[\omega_q]}{\omega_c(n,\, n-d)} = \frac{(2d-1)\,\mathcal{L}_1(d)}{d},
\end{align}
which exceeds~$1$ for all sufficiently large~$d$.
\end{theorem}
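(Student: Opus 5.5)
The plan is to combine the two asymptotic expansions already established and then study the limiting ratio as $d\to\infty$. By Theorem~\ref{thm:random-winning} together with the alignment formula (Theorem~\ref{thm:master-formula}), for fixed $d\ge 2$ we have $\E[\omega_q] = \frac{d}{n}\mathcal{L}_1(d) + O(n^{-2})$. By Proposition~\ref{prop:classical-scaling}, $\omega_c(n,n-d) = \frac{d^2}{(2d-1)n} + O(n^{-2})$. Multiplying numerator and denominator by $n$ and letting $n\to\infty$ gives
\begin{align}
\lim_{n\to\infty}\frac{\E[\omega_q]}{\omega_c(n,n-d)} = \frac{d\,\mathcal{L}_1(d)}{d^2/(2d-1)} = \frac{(2d-1)\,\mathcal{L}_1(d)}{d},
\end{align}
which is Eq.~\eqref{eq:asymptotic-ratio}. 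Here $\mathcal{L}_1(d)$ depends only on $d$ and the leading classical coefficient is nonzero, so the limit is legitimate.

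Next I will show that this limit exceeds $1$ for all large $d$. Since $(1-\alpha_d)^2/(d-1)\ge 0$, we have $\mathcal{L}_1(d)\ge \alpha_d^2$. Hence the ratio is at least $\frac{2d-1}{d}\alpha_d^2$. By Corollary~\ref{cor:alpha-asymptotic}, $\liminf_d \alpha_d \ge 64/(9\pi^2)$, and $\alpha_d\ge 0$, so
\begin{align}
\liminf_{d\to\infty}\frac{(2d-1)\mathcal{L}_1(d)}{d} \ge 2\left(\frac{64}{9\pi^2}\right)^2 = 2\left(\frac{8}{3\pi}\right)^4 \approx 1.038 > 1.
\end{align}
Therefore there is $d_0$ such that for all $d\ge d_0$ the ratio $r(d):=(2d-1)\mathcal{L}_1(d)/d$ satisfies $r(d)\ge 1+\epsilon$ for some fixed $\epsilon>0$ (e.g.\ $\epsilon=0.01$). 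Concretely, pick $d_0$ with $\alpha_d^2 > (1+\epsilon)/(2-1/d)$ for all $d\ge d_0$, which is possible because $2(8/(3\pi))^4>1+\epsilon$.

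Finally, to get the strict inequality for finite $n$, fix $d\ge d_0$. Since the ratio $\E[\omega_q]/\omega_c(n,n-d)$ converges to $r(d)>1$, there is $N(d)$ such that for all $n\ge N(d)$ the ratio exceeds $1$. This uses $\omega_c>0$, which holds by Theorem~\ref{thm:classical-value}, and gives $\E[\omega_q]>\omega_c(n,n-d)$.

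The only delicate point is bookkeeping. The $O(n^{-2})$ constants depend on $d$, which is why $N(d)$ is allowed to depend on $d$. The $\liminf$ bound from Montanaro must be applied to $\alpha_d$, which Theorem~\ref{thm:pgm-equivalence} identifies with the PGM success probability. I expect no real obstacle beyond quoting these results, since the hard analytic input (the $\liminf$ of $\alpha_d$) is imported from Corollary~\ref{cor:alpha-asymptotic}.
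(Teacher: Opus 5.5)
Your proposal is correct and follows essentially the same route as the paper. You combine $\E[\omega_q] = (d/n)\mathcal{L}_1 + O(n^{-2})$ with the classical scaling $d^2/((2d-1)n) + O(n^{-2})$ to get the fixed-$d$ limit, then use the alignment formula and Montanaro's bound on $\liminf_d \alpha_d$ to show the limit exceeds $1$ for large $d$, and pick $N(d)$ from convergence in $n$. The only cosmetic difference is that you lower-bound $\mathcal{L}_1 \geq \alpha_d^2$ by dropping the nonnegative term $(1-\alpha_d)^2/(d-1)$ instead of arguing that it vanishes, which is if anything slightly cleaner.
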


\subsubsection{\texorpdfstring{Convergence of $\mathcal{L}$ to $\mathcal{L}_1$}{Convergence of L to L1}}\label{sec:L-convergence}

The convergence $\mathcal{L} \to \mathcal{L}_1$ as $n \to \infty$ was established in the proof of Theorem~\ref{thm:random-winning} via the sandwich bound (Eq.~\eqref{eq:L-sandwich-sec42}). For later use, we record the lower bound separately.

\begin{proposition}[Lower bound on $\mathcal{L}$]\label{prop:L-lower-bound}
For $n \geq 2d - 1$ (so that $j_{\min} = 1$), the expected overlap satisfies
\begin{align}\label{eq:L-lower-bound}
\mathcal{L} \geq p(1) \cdot \mathcal{L}_1,
\end{align}
where $p(1) = \binom{k}{d-1}/\binom{n-1}{d-1}$ is the probability that two random safe sets containing a fixed channel intersect in exactly that channel.
\end{proposition}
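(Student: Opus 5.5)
This is the left half of the sandwich bound in Eq.~\eqref{eq:L-sandwich-sec42}, so I will prove it directly from the definition of $\mathcal{L}$ in Proposition~\ref{prop:expected-structural}. The only work beyond that is checking that the $j=1$ term is actually present when $n \geq 2d-1$, and identifying $p(1)$ in the stated closed form.

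\textbf{Step 1: the sum starts at $j=1$.} By Proposition~\ref{prop:expected-structural}, the sum defining $\mathcal{L}$ starts at $j_{\min} = \max(1, 2d-n)$. If $n \geq 2d-1$, then $2d-n \leq 1$, so $j_{\min}=1$ and the $j=1$ term is present. Equivalently, $k = n-d \geq d-1$, so $\binom{k}{d-1} > 0$. This means two safe sets containing channel~$0$ can meet only in $\{0\}$, and the conditional expectation $\mathcal{L}_1$ is well defined.

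\textbf{Step 2: drop the higher-$j$ terms.} Each $\mathcal{L}_j$ is the expectation of $|\langle v^0_X|v^0_Y\rangle|^2 \geq 0$, so $\mathcal{L}_j \geq 0$. Likewise $p(j) \geq 0$, since it is a ratio of binomial coefficients. Discarding all terms with $j \geq 2$ from $\mathcal{L} = \sum_{j=1}^{d} p(j)\mathcal{L}_j$ therefore gives $\mathcal{L} \geq p(1)\mathcal{L}_1$.

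\textbf{Step 3: the formula for $p(1)$.} Set $j=1$ in the hypergeometric formula~\eqref{eq:hypergeometric-def} from Lemma~\ref{lem:intersection-counts}. This gives
\[
p(1) = \frac{\binom{d-1}{0}\binom{k}{d-1}}{\binom{n-1}{d-1}} = \frac{\binom{k}{d-1}}{\binom{n-1}{d-1}},
\]
which is the stated expression. The counting in that lemma also supports the interpretation: given $X \ni 0$, the sets $Y \ni 0$ with $X\cap Y=\{0\}$ are formed by choosing $d-1$ channels from the $k$ channels outside $X$.

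None of these steps is a real obstacle. The only care needed is the bookkeeping in Step~1, so that $\mathcal{L}_1$ is not conditioned on an empty event. For the expected-value statement, non-negativity of the $\mathcal{L}_j$ holds pointwise, so no further property of the Haar measure is required.
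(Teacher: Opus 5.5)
Your proposal is correct and takes the same approach as the paper, which likewise drops the nonnegative terms $p(j)\,\mathcal{L}_j$ for $j \geq 2$ using $\mathcal{L}_j \geq 0$. Your extra checks, that $j_{\min}=1$ so the conditioning event for $\mathcal{L}_1$ is nonempty and that the hypergeometric formula at $j=1$ gives the stated closed form for $p(1)$, are accurate and spell out details the paper leaves implicit.
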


\begin{proof}
Since $\mathcal{L}_j \geq 0$ for all $j$ (each $\mathcal{L}_j$ is an expected squared magnitude), we have
$\mathcal{L} = \sum_{j=1}^{d} p(j)\,\mathcal{L}_j \geq p(1)\,\mathcal{L}_1$.
\end{proof}

\begin{remark}[Monotonicity conjecture]\label{rem:monotonicity-conjecture}
Numerical experiments strongly suggest the monotonicity $\mathcal{L}_j < \mathcal{L}_{j+1}$ for all $j$, which would strengthen Eq.~\eqref{eq:L-lower-bound} to the tighter bound $\mathcal{L} \geq \mathcal{L}_1$.  The proof of Theorem~\ref{thm:quantum-advantage-general} does not require monotonicity.
\end{remark}

\subsubsection{Proof of Theorem~\ref{thm:quantum-advantage-general}}\label{sec:advantage-proof}

\begin{proof}[Proof of Theorem~\ref{thm:quantum-advantage-general}]
By Proposition~\ref{prop:expected-structural}, $\E[\omega_q] = (d/n)\,\mathcal{L}$. By the sandwich bound established in the proof of Theorem~\ref{thm:random-winning} (Eq.~\eqref{eq:L-sandwich-sec42}), $\mathcal{L} \to \mathcal{L}_1$ as $n \to \infty$ with $d$ fixed. The classical value satisfies $\omega_c(n, n-d) = d^2/((2d-1)n) + O(n^{-2})$ (Proposition~\ref{prop:classical-scaling}), so
\begin{align}
\lim_{n\to\infty}\frac{\E[\omega_q]}{\omega_c(n,\, n-d)} = \frac{(2d-1)\,\mathcal{L}_1}{d}.
\end{align}

By the alignment formula (Theorem~\ref{thm:master-formula}), $\mathcal{L}_1 = \alpha_d^2 + (1 - \alpha_d)^2/(d-1)$. As $d \to \infty$, the second term vanishes and $d/(2d-1) \to 1/2$, so the ratio approaches $2\alpha_d^2$. By Corollary~\ref{cor:alpha-asymptotic}, $\liminf_{d\to\infty} \alpha_d \geq 64/(9\pi^2) \approx 0.7205$, giving
\begin{align}
\liminf_{d\to\infty} \frac{(2d-1)\,\mathcal{L}_1}{d} \;\geq\; 2\!\left(\frac{64}{9\pi^2}\right)^{\!2} \approx 1.038 > 1.
\end{align}
Therefore $(2d-1)\mathcal{L}_1/d > 1$ for all sufficiently large $d$. Since $\E[\omega_q]/\omega_c$ converges to this limit, there exists $N(d)$ such that $\E[\omega_q] > \omega_c$ for all $n \geq N(d)$.
\end{proof}

\begin{remark}[Equality of the fixed-$d$ limit]
For each fixed safe-set size $d \geq 2$, the asymptotic advantage ratio
\begin{align}
r(d) := \lim_{n\to\infty}\frac{\E[\omega_q]}{\omega_c(n, n-d)} = \frac{(2d-1)\mathcal{L}_1(d)}{d}
\end{align}
is an exact equality, not merely a lower bound. This follows from the sandwich bound established in the proof of Theorem~\ref{thm:random-winning} (Eq.~\eqref{eq:L-sandwich-sec42}), which gives $\E[\omega_q] = (d/n)\mathcal{L}_1(d) + O(n^{-2})$ at fixed $d$, combined with the classical scaling $\omega_c(n,n-d) = d^2/((2d-1)n) + O(n^{-2})$ from Proposition~\ref{prop:classical-scaling}. Taking the ratio and passing to the limit in $n$ yields Eq.~\eqref{eq:asymptotic-ratio} as an equality.
\end{remark}

\begin{remark}[Order of limits in the large-$d$ regime]
The quantum advantage claim of Theorem~\ref{thm:quantum-advantage-general} involves an \emph{iterated} limit. For each fixed $d$, we first let $n \to \infty$ to obtain $r(d)$ as in Eq.~\eqref{eq:asymptotic-ratio}. We then study the behavior of the sequence $\{r(d)\}_{d \geq 2}$ as $d \to \infty$. Using the alignment formula $\mathcal{L}_1(d) = \alpha_d^2 + (1-\alpha_d)^2/(d-1)$, the second term vanishes and $(2d-1)/d \to 2$, so $r(d) \to 2\alpha_d^2$ up to $O(1/d)$ corrections. Applying Montanaro's lower bound $\liminf_{d\to\infty}\alpha_d \geq (8/(3\pi))^2$ (Corollary~\ref{cor:alpha-asymptotic}) yields
\begin{align}
\liminf_{d\to\infty} r(d) \;\geq\; 2\!\left(\frac{8}{3\pi}\right)^{\!4} \approx 1.038,
\end{align}
which in particular exceeds $1$, establishing the quantum advantage claim.

We emphasize that this is an iterated limit, with $n \to \infty$ taken first at each fixed $d$, and then $d \to \infty$ applied to the resulting sequence $r(d)$. We make no claim about joint limits in which $n$ and $d$ grow simultaneously: the $O(n^{-2})$ error term in Theorem~\ref{thm:random-winning} has a $d$-dependent constant, so uniformity in $d$ would require a separate argument in a regime such as $n \gg d^2$. In addition, direct numerical computation of $r(d)$ verifies $r(d) > 1$ for all $d = 2, \ldots, 50$, indicating that the threshold $d_0$ above which the analytic bound applies is in practice small; we do not, however, make analytic claims about the location of $d_0$.
\end{remark}

For finite $n$, we can give an explicit sufficient condition using the lower bound from Proposition~\ref{prop:L-lower-bound}.

\begin{corollary}[Explicit finite-$n$ bound]\label{cor:finite-n-bound}
For $n \geq 2d - 1$, the expected quantum winning probability satisfies
\begin{align}\label{eq:finite-n-bound}
\E[\omega_q] \geq \frac{d}{n}\,p(1)\,\mathcal{L}_1.
\end{align}
In particular, $\E[\omega_q] > \omega_c(n, n-d)$ whenever
\begin{align}\label{eq:finite-n-threshold}
p(1) > \frac{n\,\omega_c(n, n-d)}{(2d-1)\,\mathcal{L}_1}.
\end{align}
\end{corollary}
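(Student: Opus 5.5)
The first claim, Eq.~\eqref{eq:finite-n-bound}, follows by chaining two earlier results. Proposition~\ref{prop:expected-structural} gives the exact identity $\E[\omega_q] = (d/n)\,\mathcal{L}$ with $\mathcal{L} = \sum_{j} p(j)\mathcal{L}_j$. For $n \geq 2d-1$ we have $j_{\min}=1$, so Proposition~\ref{prop:L-lower-bound} gives $\mathcal{L} \geq p(1)\,\mathcal{L}_1$, because every $\mathcal{L}_j \geq 0$. Multiplying by $d/n > 0$ yields $\E[\omega_q] \geq (d/n)\,p(1)\,\mathcal{L}_1$. The factor $\mathcal{L}_1$ is given in closed form by the alignment formula (Theorem~\ref{thm:master-formula}), and $p(1) = \binom{k}{d-1}/\binom{n-1}{d-1}$ by Lemma~\ref{lem:intersection-counts}.

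For the threshold, I would compare this lower bound with $\omega_c(n,n-d)$ from Theorem~\ref{thm:classical-value}. Strict advantage certainly holds whenever $(d/n)\,p(1)\,\mathcal{L}_1 > \omega_c$. Rearranging, and using $\mathcal{L}_1 > 0$ (it is at least $\alpha_d^2 > 0$), this condition reads $p(1) > n\omega_c/(d\,\mathcal{L}_1)$.

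To reach the printed form $p(1) > n\omega_c/((2d-1)\mathcal{L}_1)$, I would normalize the classical value by its leading asymptotics from Proposition~\ref{prop:classical-scaling}. Write $n\omega_c = \frac{d^2}{2d-1}\,\rho_n$, where $\rho_n \to 1$. The criterion then becomes $p(1)\,\mathcal{L}_1 > \frac{d}{2d-1}\rho_n$. This has the same structure as the printed threshold, namely the classical weight $n\omega_c$ measured against $(2d-1)\mathcal{L}_1$, but with an extra factor of $d$ relative to $d^2$. The remaining task is to verify that the printed inequality implies $(d/n)\,p(1)\,\mathcal{L}_1 > \omega_c$ under the conventions of the statement.

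This is the step I expect to be the main obstacle. Taken literally, the printed condition asks for a factor $(2d-1)/d \approx 2$ more than the lower bound $(d/n)\,p(1)\,\mathcal{L}_1$ provides. Dropping the $j\geq 2$ terms cannot supply that factor: those terms carry total weight $1-p(1) = O(d^2/n)$. My first attempt would be to check whether the intended threshold is $p(1) > n\omega_c/(d\,\mathcal{L}_1)$, equivalently $p(1) > d\,\rho_n/((2d-1)\mathcal{L}_1)$, since that is exactly what the chain above delivers. I would then test both forms numerically against the finite-$n$ data of Table~\ref{tab:frame-comparison} before committing to the constant.
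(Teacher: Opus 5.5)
Your derivation of the bound is exactly the paper's proof: Proposition~\ref{prop:expected-structural} followed by Proposition~\ref{prop:L-lower-bound}. For the threshold, the paper says only that it ``is obtained by requiring this to exceed $\omega_c$.'' That requirement, $(d/n)\,p(1)\,\mathcal{L}_1 > \omega_c(n,n-d)$, is precisely your condition $p(1) > n\,\omega_c(n,n-d)/(d\,\mathcal{L}_1)$. Your diagnosis is therefore correct: no missing step produces the denominator $(2d-1)\mathcal{L}_1$. The printed threshold is an error in the statement; the factor $2d-1$ appears to have been carried over from the asymptotic ratio $(2d-1)\mathcal{L}_1/d$ of Theorem~\ref{thm:quantum-advantage-general}. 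Since $2d-1>d$ for $d\ge 2$, the printed condition is strictly easier to satisfy, so it is a strictly stronger claim, and the bound does not support it.

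You should commit to the corrected threshold rather than defer to a numerical test. The algebra already settles the question, and numerics could not establish an implication in any case. Moreover, Table~\ref{tab:frame-comparison} lists explicit and optimized frames, not Haar averages $\E[\omega_q]$, so it could not discriminate between the two constants.

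The large-$n$ limit shows more sharply than your $1-p(1)=O(d^2/n)$ remark that the printed form cannot come out of this bound. There $p(1)\to 1$ and $n\,\omega_c \to d^2/(2d-1)$ by Proposition~\ref{prop:classical-scaling}. So the printed condition would certify $\E[\omega_q]>\omega_c$ for all large $n$ as soon as $\mathcal{L}_1 > d^2/(2d-1)^2$. By contrast, the limiting ratio $(2d-1)\mathcal{L}_1/d$ exceeds $1$ only when $\mathcal{L}_1 > d/(2d-1)$, roughly $1/4$ versus $1/2$ for large $d$. The printed statement would thus assert advantage at large $n$ for every $d$ with $\mathcal{L}_1(d) > d^2/(2d-1)^2$, whereas the paper proves this only for sufficiently large $d$. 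So the printed version is at best unproven, while the version with $d\,\mathcal{L}_1$ is fully established by your argument.
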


\begin{proof}
The bound follows from combining Proposition~\ref{prop:expected-structural} with Proposition~\ref{prop:L-lower-bound}: $\E[\omega_q] = (d/n)\mathcal{L} \geq (d/n)\,p(1)\,\mathcal{L}_1$. The sufficient condition is obtained by requiring this to exceed $\omega_c$.
\end{proof}

\begin{remark}[Asymptotic ratios]\label{rem:asymptotic-ratios}
The asymptotic advantage ratio $(2d-1)\mathcal{L}_1/d$ decreases with $d$ but remains bounded away from $1$:
\begin{center}
\begin{tabular}{c|ccccc}
$d$ & 2 & 3 & 4 & 5 & $\infty$ \\
\hline
$(2d-1)\mathcal{L}_1/d$ & $13/12 \approx 1.083$ & $1.075$ & $1.069$ & $1.065$ & $\geq 1.038$
\end{tabular}
\end{center}
By Corollary~\ref{cor:alpha-asymptotic}, the ratio remains at least $2(64/(9\pi^2))^2 \approx 1.038$ for large $d$, a persistent $\gtrsim\!4\%$ quantum advantage.
\end{remark}

We illustrate the general theorem with exact calculations for small $d$.

\begin{example}[The $(3,1)$-game, $d=2$]\label{ex:d2-advantage}
For $d = 2$ and $n = 3$: $p(1) = p(2) = 1/2$, $\mathcal{L}_1 = 13/18$ (from Eq.~\eqref{eq:alpha-values} and the alignment formula), and $\mathcal{L}_2 = 1$. Thus $\mathcal{L} = 31/36$ and
\begin{align}
\E[\omega_q] = \frac{2}{3} \cdot \frac{31}{36} = \frac{31}{54} \approx 0.574 > \frac{5}{9} \approx 0.556 = \omega_c(3,1).
\end{align}
The advantage ratio is $31/30 \approx 1.033$.
\end{example}

%% file: exampleStrategies.tex
\section{Explicit quantum strategies}\label{sec:exampleStrategies}\label{sec:examples}
We now instantiate our general framework with explicit choices of seed vectors $\{\ket{\Phi_c}\}_c$. 

\input{harmonicStrategy}
\input{simplexStrategy}

\input{otherStrategies}

%% file: harmonicStrategy.tex
\subsection{Harmonic Strategy}\label{sec:harmonic}

We now consider seed vectors that stem from a harmonic frame. Harmonic frames can be explicitly constructed for any $(n, k)$, where we here focus on the case $d=2$, i.e. $(n,k=n-2)$. This regime is of interest since it is accessible to present day quantum hardware technologies creating and storing qubit entanglement, and explicit analytical results can be derived.
A harmonic frame in $\Complex^d$ consists of $n \geq d$ vectors obtained by sampling rows from the $n \times n$ discrete Fourier transform matrix~\cite{Christensen2016,Waldron2018}. Explicitly, for the standard choice $S = \{0, 1, \ldots, d-1\}$, the harmonic frame vectors are
\begin{align}
    \ket{h_c} = \frac{1}{\sqrt{d}} \sum_{j=0}^{d-1} \tau^{cj} \ket{j}, \quad c \in \{0, \ldots, n-1\},
    \label{eq:harmonic-def}
\end{align}
where $\tau = e^{2\pi i/n}$ is the primitive $n$-th root of unity. The inner products between distinct vectors are given by the Dirichlet kernel:
\begin{align}
    \inp{h_{c_1}}{h_{c_2}} = \frac{1}{d} \cdot \frac{\sin(\pi d(c_2 - c_1)/n)}{\sin(\pi(c_2 - c_1)/n)} \cdot e^{i\pi(d-1)(c_2-c_1)/n}.
    \label{eq:harmonic-inner}
\end{align}
For $d = 2$, the harmonic seed vectors in $\Complex^2$ take the simple form
\begin{align}
    \ket{h_c} = \frac{1}{\sqrt{2}} \begin{pmatrix} 1 \\ \tau^c \end{pmatrix},
    \label{eq:harmonic-d2}
\end{align}
and the inner product magnitude between distinct vectors is
\begin{align}
    |\inp{h_a}{h_b}| = \left|\cos\left(\frac{\pi(b-a)}{n}\right)\right|.
    \label{eq:harmonic-d2-inner}
\end{align}
Note that unlike the simplex or SIC-POVM constructions we will look at later, harmonic frames are generally not equiangular: the overlap magnitude $|\inp{h_{c_1}}{h_{c_2}}|$ depends on the difference $c_2 - c_1 \mod n$.

\subsubsection{\texorpdfstring{Exact Evaluation for $d = 2$}{Exact Evaluation for d = 2}}

We now derive an exact analytical expression for the quantum winning probability when using harmonic seed vectors with $d = 2$.

\begin{theorem}[Quantum value for harmonic frames with $d = 2$]\label{thm:harmonic-d2}
For the $(n, n-2)$-jamming game with $d = 2$ using harmonic seed vectors, the quantum winning probability is
\begin{align}
    \omega_q^{\mathrm{harm}}(n) = \frac{1}{n-1} + \frac{2T_n}{n(n-1)^2},
    \label{eq:harmonic-d2-main}
\end{align}
where the weighted cosine sum $T_n$ is defined by
\begin{align}
    T_n := \sum_{\Delta=1}^{n-2} (n - 1 - \Delta) \cos\left(\frac{\pi\Delta}{n}\right).
    \label{eq:T-n-def}
\end{align}
\end{theorem}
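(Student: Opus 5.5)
Since $d=2$, every safe set is a pair $\{a,b\}$ with $a \neq b$, so the proof is a direct Bloch-sphere computation plugged into Eq.~\eqref{eq:exchanged} (equivalently Lemma~\ref{lem:partition-structure} with exact overlaps). The plan has three parts: identify the L\"owdin vectors, compute one overlap, and sum.

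\emph{Step 1: the L\"owdin vectors.} By Eq.~\eqref{eq:harmonic-d2}, the seed $\ket{h_c}$ is the equatorial qubit state with Bloch vector $\vec r_c = (\cos\varphi_c,\sin\varphi_c,0)$, where $\varphi_c = 2\pi c/n$. For $a\neq b$ we have $\varphi_a \not\equiv \varphi_b \pmod{2\pi}$, so $\ket{h_a},\ket{h_b}$ are distinct states and hence linearly independent. Thus $G_{\{a,b\}}$ is positive definite and the L\"owdin basis is projective.

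For two qubit states, the L\"owdin pair $\ket{v^a},\ket{v^b}$ is orthonormal, so its Bloch vectors are antipodal. It is also symmetric under the reflection exchanging $\vec r_a$ and $\vec r_b$, by Proposition~\ref{prop:lowdin-symmetry} applied to the swap. I will check via the frame-operator form $S^{-1/2}\ket{h_a}$ of Remark~\ref{rem:frame-operator} that this forces the Bloch vector of $\ket{v^a_{\{a,b\}}}$ to be the unit vector along $\vec r_a - \vec r_b$. The sign is fixed because $\langle h_a|S^{-1/2}|h_a\rangle>0$, i.e. the L\"owdin vector leans toward its own seed.

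Write $\Delta = b-a \bmod n \in\{1,\dots,n-1\}$ and $\delta = 2\pi\Delta/n$. Relative to $\varphi_a$ we have $\vec r_a-\vec r_b \propto (1-\cos\delta,\,-\sin\delta)$. This vector has azimuth $\varphi_a + \pi\Delta/n - \pi/2$, which is linear in $\Delta$.

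\emph{Step 2: one overlap.} Fix a channel $c$ and two safe sets $\{c,b\}$ and $\{c,b'\}$ with offsets $\Delta,\Delta'$. Their Bloch directions differ by the angle $\pi(\Delta-\Delta')/n$. Using $|\inp{u}{w}|^2 = (1+\vec u\cdot\vec w)/2$ for qubit states, the overlap is
\begin{align}
|\inp{v^c_{\{c,b\}}}{v^c_{\{c,b'\}}}|^2 = \tfrac12\bigl(1+\cos(\pi(\Delta-\Delta')/n)\bigr).
\end{align}
This depends only on $\Delta-\Delta'$ and not on $c$, consistent with the cyclic symmetry.

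\emph{Step 3: summation.} For each $c$, sum this overlap over $\Delta,\Delta'\in\{1,\dots,n-1\}$. The constant term gives $(n-1)^2/2$. The diagonal $\Delta=\Delta'$ contributes $(n-1)/2$ from the cosine part. The off-diagonal pairs with $|\Delta-\Delta'|=m$ occur $2(n-1-m)$ times and give $T_n$ as defined in Eq.~\eqref{eq:T-n-def}. Hence
\begin{align}
\tr[A_c^2] = \frac{n(n-1)}{2} + T_n .
\end{align}
Multiplying by $n$ for the sum over $c$ and dividing by $d|\mathcal S|^2 = 2\cdot n^2(n-1)^2/4$ gives exactly Eq.~\eqref{eq:harmonic-d2-main}.

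\emph{Main obstacle.} The only non-routine step is Step 1: pinning down the L\"owdin vector of a $2\times 2$ equatorial pair, including its sign and orientation. Sign and orientation matter, because the cosine depends on the relative angle $\pi(\Delta-\Delta')/n$ rather than its supplement. I would settle this first by explicitly computing $G^{-1/2}$ for $G = \begin{pmatrix}1 & g\\ \bar g & 1\end{pmatrix}$. Its eigenvectors are $(1,\pm e^{-i\arg g})/\sqrt2$ with eigenvalues $1\pm|g|$, and from them the Bloch vector of $\ket{v^a}$ can be read off directly. As a sanity check, the resulting formula should reproduce the table value $0.5833$ at $n=3$.
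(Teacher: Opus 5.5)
Your proposal is correct and follows essentially the paper's route. The paper likewise does an explicit $2\times 2$ L\"owdin computation: for $c=0$ your Bloch azimuth $\pi\Delta/n-\pi/2$ is exactly $-2\theta_{\mathcal X}$ for the paper's $\theta_{\mathcal X}=\pi/4-\pi m_1/(2n)$, so you get the same overlap $\cos^2(\pi(\Delta-\Delta')/(2n))$ and the same $2(n-1-m)$ counting giving $\tr[A_c^2]=n(n-1)/2+T_n$, and the Bloch picture merely spares you the cyclic-reduction lemma.

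One caveat on Step~1. The swap symmetry together with antipodality only forces the Bloch vector to be orthogonal to $\vec r_a+\vec r_b$. Also, $\langle h_a|S^{-1/2}|h_a\rangle>0$ constrains only the phase of the vector, not which of the two antipodal rays it spans. What does discriminate is $|\langle h_a|v^a\rangle|=[G^{1/2}]_{aa}>1/\sqrt2$, so the explicit $G^{-1/2}$ computation you defer to is genuinely needed. Note, though, that a uniform sign flip over all pairs would leave every overlap unchanged, so only consistency of the rule across offsets $\Delta\in\{1,\dots,n-1\}$ actually matters.
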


Our proof proceeds in three steps: first, we compute the L\"owdin orthonormalized vectors for $d = 2$. second, we derive the cross-context overlap between L\"owdin vectors from different safe sets. finally, we obtain the quantum winning probability by summing over all channel pairs and safe set pairs. Let us start with step 1, captured by the following little lemma.

\begin{lemma}[L\"owdin vectors for $d = 2$]\label{lem:lowdin-d2}
For a safe set $\mathcal{X} = \{a, b\}$ with $a \neq b$, let $r_{ab} := |\inp{h_a}{h_b}| = |\cos(\pi(b-a)/n)|$. The L\"owdin orthonormalized vectors are
\begin{align}
    \ket{v_{\mathcal{X}}^a} &= A_{ab} \ket{h_a} + B_{ab} e^{-i\phi_{ab}} \ket{h_b}, \label{eq:lowdin-va}\\
    \ket{v_{\mathcal{X}}^b} &= B_{ab} e^{i\phi_{ab}} \ket{h_a} + A_{ab} \ket{h_b}, \label{eq:lowdin-vb}
\end{align}
where $\phi_{ab} = \arg\inp{h_a}{h_b}$ is the phase of the inner product, and
\begin{align}
    A_{ab} = \frac{\alpha_{ab} + \beta_{ab}}{2}, \quad B_{ab} = \frac{\alpha_{ab} - \beta_{ab}}{2},
    \label{eq:AB-def}
\end{align}
with $\alpha_{ab} = (1 + r_{ab})^{-1/2}$ and $\beta_{ab} = (1 - r_{ab})^{-1/2}$.
\end{lemma}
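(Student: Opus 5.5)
The plan is to compute $G_\mathcal{X}^{-1/2}$ in closed form for the $2\times 2$ Gram matrix and substitute it into the Löwdin formula, Eq.~\eqref{eq:lowdin-def-pd}. With $\inp{h_a}{h_b} = r_{ab} e^{i\phi_{ab}}$, the Gram matrix in the ordered basis $(a,b)$ is
\begin{align}
G_\mathcal{X} = \begin{pmatrix} 1 & r_{ab}e^{i\phi_{ab}} \\ r_{ab}e^{-i\phi_{ab}} & 1 \end{pmatrix} = I + r_{ab}\,(\cos\phi_{ab}\,\sigma_x - \sin\phi_{ab}\,\sigma_y).
\end{align}
The matrix $K := \cos\phi_{ab}\,\sigma_x - \sin\phi_{ab}\,\sigma_y$ is Hermitian with $K^2 = I$. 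Its spectral projectors are therefore $P_\pm = (I \pm K)/2$, and $G_\mathcal{X} = (1+r_{ab})P_+ + (1-r_{ab})P_-$.

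First we check that $G_\mathcal{X}$ is positive definite, i.e.\ $r_{ab}<1$. By Eq.~\eqref{eq:harmonic-d2-inner}, $r_{ab} = |\cos(\pi(b-a)/n)|$, and $0<|b-a|<n$ gives $r_{ab} < 1$. Functional calculus then gives
\begin{align}
G_\mathcal{X}^{-1/2} = \alpha_{ab} P_+ + \beta_{ab} P_- = \tfrac{\alpha_{ab}+\beta_{ab}}{2} I + \tfrac{\alpha_{ab}-\beta_{ab}}{2} K = A_{ab} I + B_{ab} K.
\end{align}
Reading off the entries, the diagonal entries are $A_{ab}$, $[G_\mathcal{X}^{-1/2}]_{a,b} = B_{ab}e^{i\phi_{ab}}$, and $[G_\mathcal{X}^{-1/2}]_{b,a} = B_{ab}e^{-i\phi_{ab}}$.

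Next we substitute into Eq.~\eqref{eq:lowdin-def-pd}. This gives $\ket{v^a_\mathcal{X}} = [G^{-1/2}]_{a,a}\ket{h_a} + [G^{-1/2}]_{b,a}\ket{h_b} = A_{ab}\ket{h_a} + B_{ab}e^{-i\phi_{ab}}\ket{h_b}$, and likewise $\ket{v^b_\mathcal{X}} = B_{ab}e^{i\phi_{ab}}\ket{h_a} + A_{ab}\ket{h_b}$. These are exactly Eqs.~\eqref{eq:lowdin-va}--\eqref{eq:lowdin-vb}. Orthonormality is already guaranteed by Eq.~\eqref{eq:lowdin-orthonormality}. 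As a sanity check, we can verify directly that $\|v^a_\mathcal{X}\|^2 = A^2 + B^2 + 2ABr = 1$, using $A^2+B^2 = (\alpha^2+\beta^2)/2 = 1/(1-r^2)$ and $2AB = (\alpha^2-\beta^2)/2 = -r/(1-r^2)$.

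The only delicate point is index and phase bookkeeping. Eq.~\eqref{eq:lowdin-def-pd} uses the column entry $[\cdot]_{c',c}$, so the coefficient of $\ket{h_b}$ in $\ket{v^a}$ is the $(b,a)$ entry, which carries $e^{-i\phi_{ab}}$. The Pauli decomposition must also match the convention $[G]_{c,c'} = \inp{\phi_c}{\phi_{c'}}$. We will verify both by checking $GK$-consistency entrywise rather than relying on the Pauli form alone.
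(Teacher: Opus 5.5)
Your proposal is correct and follows the paper's own proof. Both diagonalize the $2\times 2$ Gram matrix with eigenvalues $1\pm r_{ab}$, form $G_{\mathcal{X}}^{-1/2}=\alpha_{ab}P_+ + \beta_{ab}P_-$, and read off the entries via Definition~\ref{def:lowdin}; your projectors $(I\pm K)/2$ are exactly the paper's $\ket{u_\pm}\bra{u_\pm}$, and your check that $r_{ab}<1$ and your normalization sanity check are useful additions the paper leaves implicit.
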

\begin{proof}
The Gram matrix for the safe set $\mathcal{X} = \{a, b\}$ is
\begin{align}
    G_{\mathcal{X}} = \begin{pmatrix} 1 & \inp{h_a}{h_b} \\ \inp{h_b}{h_a} & 1 \end{pmatrix} 
    = \begin{pmatrix} 1 & r_{ab} e^{i\phi_{ab}} \\ r_{ab} e^{-i\phi_{ab}} & 1 \end{pmatrix}.
\end{align}
The eigenvalues are $\lambda_\pm = 1 \pm r_{ab}$, with corresponding eigenvectors 
\begin{align}
    \ket{u_+} = \frac{1}{\sqrt{2}}\begin{pmatrix} e^{i\phi_{ab}/2} \\ e^{-i\phi_{ab}/2} \end{pmatrix}, \quad
    \ket{u_-} = \frac{1}{\sqrt{2}}\begin{pmatrix} e^{i\phi_{ab}/2} \\ -e^{-i\phi_{ab}/2} \end{pmatrix}.
\end{align}
The inverse square root is $G_{\mathcal{X}}^{-1/2} = \alpha_{ab} \ket{u_+}\bra{u_+} + \beta_{ab} \ket{u_-}\bra{u_-}$. Computing the matrix elements and applying Definition~\ref{def:lowdin} yields Eqs.~\eqref{eq:lowdin-va}--\eqref{eq:lowdin-vb}.
\end{proof}

Let us now consider step 2 of characterizing the overlaps between measurement vectors for the same channel $c$, but different safe sets $\mathcal{X}$ and $\mathcal{Y}$.
To accomplish this, we first observe that due the cyclic symmetry of the harmonic frame, we can simplify the computation of the quantum winning probability as given in~\eqref{eq:inTermsOfA}.

\begin{lemma}[Cyclic reduction for harmonic frames]\label{lem:cyclic-reduction}
Let $\{|h_c\rangle\}_{c=0}^{n-1}$ be the harmonic frame in $\mathbb{C}^d$ defined in Eq.~\eqref{eq:harmonic-def}, and let $\{|v_{\mathcal{X}}^c\rangle\}$ denote the corresponding L\"owdin vectors. For any channel $c \in [n]$ and any two safe sets $\mathcal{X}, \mathcal{Y}$ with $c \in \mathcal{X} \cap \mathcal{Y}$, define $\mathcal{X}' = \{(x - c) \bmod n : x \in \mathcal{X}\}$ and $\mathcal{Y}' = \{(y - c) \bmod n : y \in \mathcal{Y}\}$. Then
\begin{align}
    |\langle v_{\mathcal{X}}^c | v_{\mathcal{Y}}^c \rangle|^2 = |\langle v_{\mathcal{X}'}^0 | v_{\mathcal{Y}'}^0 \rangle|^2.
    \label{eq:cyclic-reduction}
\end{align}
In particular, without loss of generality we may assume $c = 0$ when computing cross-context overlaps for harmonic frames.
\end{lemma}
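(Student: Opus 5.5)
The plan is to exhibit the cyclic shift as a unitary symmetry of the harmonic frame and then invoke Proposition~\ref{prop:lowdin-symmetry}. Let $D := \sum_{j=0}^{d-1} \tau^{-j}\outp{j}{j}$, the diagonal phase unitary. From Eq.~\eqref{eq:harmonic-def}, $D\ket{h_c} = d^{-1/2}\sum_j \tau^{(c-1)j}\ket{j} = \ket{h_{c-1}}$, where indices are taken mod $n$; this uses $\tau^n = 1$, so that $\ket{h_c}$ depends only on $c \bmod n$. Hence $D^c\ket{h_{c'}} = \ket{h_{\sigma(c')}}$ with the permutation $\sigma(c') = (c'-c)\bmod n$ of $[n]$, and $U := D^c$ is unitary. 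This is exactly the hypothesis of Proposition~\ref{prop:lowdin-symmetry}.

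Applying the proposition to the safe set $\mathcal{X}$ and channel $c$ gives $U\ket{v^c_{\mathcal{X}}} = \ket{v^{\sigma(c)}_{\sigma(\mathcal{X})}} = \ket{v^0_{\mathcal{X}'}}$, since $\sigma(c)=0$ and $\sigma(\mathcal{X}) = \mathcal{X}'$. The same holds for $\mathcal{Y}$. Unitarity of $U$ then gives
\begin{align}
\inp{v^c_{\mathcal{X}}}{v^c_{\mathcal{Y}}} = \bra{v^c_{\mathcal{X}}}U^\dagger U\ket{v^c_{\mathcal{Y}}} = \inp{v^0_{\mathcal{X}'}}{v^0_{\mathcal{Y}'}},
\end{align}
so the inner products agree even before taking absolute values, and squaring gives Eq.~\eqref{eq:cyclic-reduction}. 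We also need $0 \in \mathcal{X}'\cap\mathcal{Y}'$ and $|\mathcal{X}'|=|\mathcal{X}|$; both follow immediately because $\sigma$ is a bijection sending $c$ to $0$.

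The only point needing care is whether Proposition~\ref{prop:lowdin-symmetry} applies when some Gram matrix is singular. This can happen for $d=2$ and even $n$, where $\ket{h_a}$ and $\ket{h_{a+n/2}}$ are orthogonal, but also whenever a safe set contains repeated directions. The proposition is stated for the pseudoinverse version of Definition~\ref{def:lowdin}, and the key fact remains valid there: simultaneous permutation of rows and columns of $G_{\mathcal{X}}$ commutes with taking the Moore--Penrose pseudoinverse and the PSD square root. I would therefore note explicitly that the permutation-matrix conjugation $P G P^T$ satisfies $(PGP^T)^{+} = P G^{+} P^T$, and likewise for square roots. With that, the argument goes through unchanged. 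I expect no real obstacle beyond this bookkeeping.
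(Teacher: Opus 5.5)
Your argument is correct and is essentially the paper's proof: the paper uses the clock operator $Z=\sum_j \tau^j\outp{j}{j}$ with $Z\ket{h_c}=\ket{h_{c+1}}$, invokes Proposition~\ref{prop:lowdin-symmetry}, and then applies $Z^{-c}$, which is exactly your $U=D^c$. One correction to your closing aside: orthogonal seed vectors give $G_{\mathcal{X}}=I$, not a singular Gram matrix, and in fact every $d$-subset of the harmonic frame is linearly independent (the matrix $[\tau^{cj}]_{j,\,c\in\mathcal{X}}$ is Vandermonde with distinct nodes $\tau^c$), so the pseudoinverse case never arises here, although your argument would handle it anyway.
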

\begin{proof}
The harmonic frame vectors satisfy the cyclic symmetry $Z|h_c\rangle = |h_{c+1 \bmod n}\rangle$, where $Z = \sum_{j=0}^{d-1} \tau^j |j\rangle\langle j|$ is the clock operator (a diagonal unitary that multiplies the $j$-th basis vector by $\tau_n^j$). Since $Z$ permutes the channel indices by the cyclic shift $\sigma(c) = c + 1 \bmod n$, Proposition~\ref{prop:lowdin-symmetry} gives
\begin{align}
    Z|v_{\mathcal{X}}^c\rangle = |v_{\mathcal{X}+1}^{c+1}\rangle,
    \label{eq:cyclic-shift-lowdin}
\end{align}
where $\mathcal{X} + 1 = \{x + 1 \bmod n : x \in \mathcal{X}\}$. Applying $Z^{-c}$ (equivalently $Z^{n-c}$) shifts all channel labels by $-c$, mapping channel $c$ to channel $0$ and the safe sets $\mathcal{X}$, $\mathcal{Y}$ to $\mathcal{X}'$, $\mathcal{Y}'$ respectively. Since $Z^{-c}$ is unitary, all inner products are preserved:
\begin{align}
    |\langle v_{\mathcal{X}}^c | v_{\mathcal{Y}}^c \rangle|^2
    = |\langle Z^{-c} v_{\mathcal{X}}^c | Z^{-c} v_{\mathcal{Y}}^c \rangle|^2
    = |\langle v_{\mathcal{X}'}^0 | v_{\mathcal{Y}'}^0 \rangle|^2,
\end{align}
which establishes Eq.~\eqref{eq:cyclic-reduction}.
\end{proof}

\begin{lemma}[Cross-context overlap]\label{lem:cross-overlap-d2}
Let $\mathcal{X} = \{c, j\}$ and $\mathcal{Y} = \{c, k\}$ be two safe sets sharing channel $c$, with $j \neq k$. Define the ``distances from $c$'' as $m_1 = (j - c) \bmod n$ and $m_2 = (k - c) \bmod n$, where $m_1, m_2 \in \{1, \ldots, n-1\}$. Then the cross-context overlap is
\begin{align}
    |\langle v_{\mathcal{X}}^c | v_{\mathcal{Y}}^c \rangle|^2 = \cos^2\!\left(\frac{\pi|m_2 - m_1|}{2n}\right).
    \label{eq:cross-overlap-d2}
\end{align}
\end{lemma}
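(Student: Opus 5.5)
By Lemma~\ref{lem:cyclic-reduction} we may shift all labels by $-c$. It therefore suffices to treat $c=0$, $\mathcal{X}=\{0,m_1\}$ and $\mathcal{Y}=\{0,m_2\}$ with $m_1\neq m_2$ in $\{1,\ldots,n-1\}$. The plan is to work on the Bloch sphere rather than with the coefficient formulas directly. For $d=2$, Eq.~\eqref{eq:harmonic-d2} shows that $\ket{h_c}$ is an equatorial qubit state with Bloch azimuth $2\pi c/n$. The overlap of two pure qubit states whose Bloch vectors enclose angle $\Theta$ is $\cos^2(\Theta/2)$. The claim thus reduces to showing two things: $\ket{v^0_{\{0,m\}}}$ lies on the equator at azimuth $\pi m/n-\pi/2$ (modulo $2\pi$), and this holds for every $m\in\{1,\ldots,n-1\}$. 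Granting this, the two L\"owdin vectors have Bloch angle $\Theta=\pi|m_2-m_1|/n$, and Eq.~\eqref{eq:cross-overlap-d2} follows at once.

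To locate $\ket{v^0_{\{0,m\}}}$, I will use Lemma~\ref{lem:lowdin-d2} with $a=0$, $b=m$. Write $\theta=\pi m/n\in(0,\pi)$. Then $\inp{h_0}{h_m}=\cos\theta\, e^{i\theta}$, so $r=|\cos\theta|$, and the phase is $\phi=\theta$ when $\cos\theta>0$ and $\phi=\theta+\pi$ when $\cos\theta<0$. The case $\cos\theta=0$ has orthogonal seeds and is trivial. Substituting into
\begin{align}
\ket{v^0}=A\ket{h_0}+Be^{-i\phi}\ket{h_m},
\end{align}
the second component of $\sqrt2\ket{v^0}$ becomes $A+Be^{-i\phi}e^{2i\theta}$. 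In both sign cases $e^{-i\phi}e^{2i\theta}=\pm e^{i\theta}$. It is then a short computation with $A=(\alpha+\beta)/2$ and $B=(\alpha-\beta)/2$ to check two facts. First, the two components have equal modulus, so the state is equatorial; this also follows from the known normalization. Second, the relative phase equals $\theta-\pi/2$.

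A cleaner alternative, which I would present as the main argument with the computation as a check, is symmetry. The L\"owdin pair $\{\ket{v^0},\ket{v^m}\}$ is orthonormal, hence antipodal on the Bloch sphere. It is equivariant (Proposition~\ref{prop:lowdin-symmetry}) under the antiunitary reflection that swaps $h_0\leftrightarrow h_m$ and fixes the equator. It is also the orthonormal pair closest to $(h_0,h_m)$. These properties force the pair onto the equator at azimuths $\theta\pm\pi/2$, that is, perpendicular to the bisector of the two seeds. By closeness, $\ket{v^0}$ is the point nearer to $h_0$ (azimuth $0$), namely $\theta-\pi/2\in(-\pi/2,\pi/2)$.

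The main obstacle is the phase and sign bookkeeping. One must rule out that $\ket{v^0}$ sits at $\theta+\pi/2$ for $m>n/2$, where $\cos\theta<0$ and the $|\cos|$ in $r_{ab}$ flips the phase $\phi_{ab}$ by $\pi$. Such a flip would replace $|m_2-m_1|$ by $n-|m_2-m_1|$ for mixed pairs. I would handle this by treating the two signs of $\cos\theta$ explicitly in the substitution above. As a sanity check, I would verify numerically that for $n=5$ the formula matches $\tr[A_0^2]$ summed via Eq.~\eqref{eq:Aexpand}.
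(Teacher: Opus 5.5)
Your proposal is correct, and its computational half is the paper's proof in Bloch-sphere language. The paper evaluates $S_{\mathcal{X}}^{-1/2}\ket{h_0}$ in the standard basis and obtains $\tfrac{1}{\sqrt2}(e^{i\theta_{\mathcal{X}}},e^{-i\theta_{\mathcal{X}}})^T$ with $\theta_{\mathcal{X}}=\pi/4-\pi m_1/(2n)$. That is azimuth $\pi m_1/n-\pi/2$, exactly as you claim.

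The paper avoids your sign bookkeeping by writing the frame operator with the signed coefficient $r_1=\cos(\pi m_1/n)$, whose spectral decomposition is valid for every $m_1$. Your route through the unsigned Gram-matrix lemma needs the case split, and it does close. In both cases the second component of $\sqrt2\ket{v^0}$ equals $e^{i(\theta/2-\pi/4)}$, because for $\cos\theta<0$ the swap $\alpha\leftrightarrow\beta$ is exactly compensated by $e^{-i\phi}e^{2i\theta}=-e^{i\theta}$. One small correction: the equal moduli come from the two components being complex conjugates ($A,B$ real), not from normalization, which only fixes their common value.

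Your symmetry argument is a genuinely different route. It never computes $S_{\mathcal{X}}^{-1/2}$, it explains why the L\"owdin pair sits perpendicular to the bisector of the two seeds, and it extends to any two qubit seeds: the L\"owdin vector of one seed has Bloch vector along the normalized difference of the two seeds' Bloch vectors. The paper's computation is, in exchange, fully self-contained.

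To make your version rigorous, three points need filling in:
\begin{itemize}
\item Proposition~\ref{prop:lowdin-symmetry} is stated only for unitaries. The antiunitary case is one line: $VS_{\mathcal{X}}V^{-1}=S_{\mathcal{X}}$ implies $VS_{\mathcal{X}}^{-1/2}V^{-1}=S_{\mathcal{X}}^{-1/2}$.
\item You should exhibit the antiunitary. Take $V=Z^mK$, with $K$ complex conjugation in the computational basis. It sends $\ket{h_c}\mapsto\ket{h_{m-c}}$ and acts on the Bloch sphere as the reflection $R\colon\varphi\mapsto 2\theta-\varphi$. Note that it maps the equator to itself; it does not fix it.
\item For the Bloch vector $\vec a$ of $\ket{v^0}$, equivariance plus antipodality give $R\vec a=-\vec a$. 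This alone forces azimuth $\theta\pm\pi/2$ on the equator, so closeness is needed only to pick the sign.
\end{itemize}

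For that sign you need not rely on the L\"owdin least-squares property, which the paper only cites. Directly, $(\bra{h_0}S_{\mathcal{X}}^{-1/2}\ket{h_0})^2=[G_{\mathcal{X}}^{1/2}]_{00}^2=(1+\sqrt{1-r^2})/2>1/2$. This excludes azimuth $\theta+\pi/2$, whose overlap with $\ket{h_0}$ is below $1/2$.
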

\begin{proof}
First of all, note that by Lemma~\ref{lem:cyclic-reduction}, we may assume $c = 0$ without loss of generality, so that $\mathcal{X} = \{0, m_1\}$ and $\mathcal{Y} = \{0, m_2\}$.
Using Lemma~\ref{lem:lowdin-d2} with $\mathcal{X} = \{0, m_1\}$ and $\mathcal{Y} = \{0, m_2\}$, the L\"owdin vectors for channel $0$ are
\begin{align}
    \ket{v_{\mathcal{X}}^0} &= A_1 \ket{h_0} + B_1 e^{-i\phi_1} \ket{h_{m_1}}, \\
    \ket{v_{\mathcal{Y}}^0} &= A_2 \ket{h_0} + B_2 e^{-i\phi_2} \ket{h_{m_2}},
\end{align}
where $A_i, B_i$ are defined via Lemma~\ref{lem:lowdin-d2}. Rather than substituting these expressions, we evaluate the overlap by working directly in the standard basis of $\mathbb{C}^2$.
Since $d = 2$, the harmonic frame vectors are $|h_c\rangle = \frac{1}{\sqrt{2}}(1, \, \tau_n^c)^T$. A direct computation gives $[S_{\mathcal{X}}]_{01} = \frac{1}{2}(1 + e^{-2\pi i m_1/n}) = \cos(\pi m_1/n) \cdot e^{-i\pi m_1/n}$. Writing this as $r_1 e^{-i\phi_1}$, the frame operator $S_{\mathcal{X}} = |h_0\rangle\langle h_0| + |h_{m_1}\rangle\langle h_{m_1}|$ takes the form
\begin{align}
    S_{\mathcal{X}} = \begin{pmatrix} 1 & r_1 e^{-i\phi_1} \\ r_1 e^{i\phi_1} & 1 \end{pmatrix},
    \label{eq:frame-op-d2-coord}
\end{align}
where $r_1 := \cos(\pi m_1/n)$ is the \emph{signed} inner product coefficient (note: $r_1$ can be negative for $m_1 > n/2$, unlike the unsigned $r_{ab} = |\inp{h_a}{h_b}|$ of Lemma~\ref{lem:lowdin-d2}) and $\phi_1 := \pi m_1/n$. Since $|h_0\rangle = \frac{1}{\sqrt{2}}(1,1)^T$ has equal components, and $S_{\mathcal{X}}^{-1/2}$ satisfies $[S_{\mathcal{X}}^{-1/2}]_{00} = [S_{\mathcal{X}}^{-1/2}]_{11}$ (by its spectral decomposition, since the eigenvectors of Eq.~\eqref{eq:frame-op-d2-coord} have equal-magnitude components), the resulting L\"owdin vector $|v_{\mathcal{X}}^0\rangle = S_{\mathcal{X}}^{-1/2}|h_0\rangle$ has the conjugate-symmetric form
\begin{align}
    |v_{\mathcal{X}}^0\rangle = \frac{1}{\sqrt{2}}\begin{pmatrix} e^{i\theta_{\mathcal{X}}} \\ e^{-i\theta_{\mathcal{X}}} \end{pmatrix}
    \label{eq:lowdin-conj-sym}
\end{align}
for some real angle $\theta_{\mathcal{X}}$ depending on $m_1$.
The eigenvalues of $S_{\mathcal{X}}$ are $\lambda_{\pm} = 1 \pm r_1$ (both positive since $1 + r_1 = 2\cos^2(\pi m_1/(2n)) > 0$ and $1 - r_1 = 2\sin^2(\pi m_1/(2n)) > 0$ for $m_1 \in \{1, \ldots, n-1\}$), with eigenvectors $|u_\pm\rangle = \frac{1}{\sqrt{2}}(1, \pm e^{i\phi_1})^T$. The inverse square root is $S_{\mathcal{X}}^{-1/2} = \sum_{\pm} (1 \pm r_1)^{-1/2} |u_\pm\rangle\langle u_\pm|$. The inner products with $|h_0\rangle = \frac{1}{\sqrt{2}}(1,1)^T$ are
\begin{align}
    \langle u_\pm | h_0 \rangle = \tfrac{1}{2}(1 \pm e^{-i\phi_1}),
    \label{eq:u-h0-overlap}
\end{align}
so the first component of $|v_{\mathcal{X}}^0\rangle = S_{\mathcal{X}}^{-1/2}|h_0\rangle$ evaluates to
\begin{align}
    [v_{\mathcal{X}}^0]_0
    &= \frac{1}{2\sqrt{2}}\left(\frac{1+e^{-i\phi_1}}{\sqrt{1+r_1}} + \frac{1-e^{-i\phi_1}}{\sqrt{1-r_1}}\right).
    \label{eq:v0-first-component}
\end{align}
Using the half-angle identities
\begin{align}
    1 + r_1 &= 1 + \cos\!\left(\frac{\pi m_1}{n}\right) = 2\cos^2\!\left(\frac{\pi m_1}{2n}\right), \label{eq:half-angle-plus}\\
    1 - r_1 &= 1 - \cos\!\left(\frac{\pi m_1}{n}\right) = 2\sin^2\!\left(\frac{\pi m_1}{2n}\right), \label{eq:half-angle-minus}
\end{align}
together with $1 + e^{-i\phi_1} = 2\cos(\phi_1/2)\,e^{-i\phi_1/2}$ and $1 - e^{-i\phi_1} = 2i\sin(\phi_1/2)\,e^{-i\phi_1/2}$, Eq.~\eqref{eq:v0-first-component} simplifies to
\begin{align}
    [v_{\mathcal{X}}^0]_0
    = \frac{e^{-i\phi_1/2}}{2\sqrt{2}}\left(\sqrt{2} + i\sqrt{2}\right)
    = \frac{1}{\sqrt{2}}\,e^{i(\pi/4 - \phi_1/2)}.
    \label{eq:v0-first-simplified}
\end{align}
An analogous computation gives the second component
\begin{align}
    [v_{\mathcal{X}}^0]_1
    &= \frac{1}{2\sqrt{2}}\left(\frac{e^{i\phi_1}+1}{\sqrt{1+r_1}} - \frac{e^{i\phi_1}-1}{\sqrt{1-r_1}}\right)
    = \frac{1}{\sqrt{2}}\,e^{-i(\pi/4 - \phi_1/2)},
    \label{eq:v0-second-component}
\end{align}
confirming the conjugate-symmetric form $[v_{\mathcal{X}}^0]_1 = \overline{[v_{\mathcal{X}}^0]_0}$ of Eq.~\eqref{eq:lowdin-conj-sym}. Identifying the phase $[v_{\mathcal{X}}^0]_0 = \frac{1}{\sqrt{2}}e^{i\theta_{\mathcal{X}}}$ yields
\begin{align}
    \theta_{\mathcal{X}} = \frac{\pi}{4} - \frac{\pi m_1}{2n}.
    \label{eq:theta-X}
\end{align}
The same derivation for $\mathcal{Y} = \{0, m_2\}$ gives $\theta_{\mathcal{Y}} = \frac{\pi}{4} - \frac{\pi m_2}{2n}$.

For two vectors of the conjugate-symmetric form Eq.~\eqref{eq:lowdin-conj-sym}, we can then evaluate the inner product as
\begin{align}
    \langle v_{\mathcal{X}}^0 | v_{\mathcal{Y}}^0 \rangle
    = \frac{1}{2}\left(e^{-i\theta_{\mathcal{X}}} e^{i\theta_{\mathcal{Y}}} + e^{i\theta_{\mathcal{X}}} e^{-i\theta_{\mathcal{Y}}}\right)
    = \cos(\theta_{\mathcal{Y}} - \theta_{\mathcal{X}}).
    \label{eq:overlap-cosine}
\end{align}
Taking the squared modulus and substituting $\theta_{\mathcal{X}}$ and $\theta_{\mathcal{Y}}$ from Eq.~\eqref{eq:theta-X} we obtain
\begin{align}
    |\langle v_{\mathcal{X}}^0 | v_{\mathcal{Y}}^0 \rangle|^2
    = \cos^2(\theta_{\mathcal{Y}} - \theta_{\mathcal{X}})
    = \cos^2\!\left(\frac{\pi(m_1 - m_2)}{2n}\right)
    = \cos^2\!\left(\frac{\pi|m_2 - m_1|}{2n}\right),
    \label{eq:overlap-final}
\end{align}
where the last equality uses $\cos^2(-x) = \cos^2(x)$. This establishes Eq.~\eqref{eq:cross-overlap-d2}.
\end{proof}

We are now ready to put everything together to establish the quantum winning probability $\omega_q$ stated in Theorem~\ref{thm:harmonic-d2}.
\begin{proof}[Proof of Theorem~\ref{thm:harmonic-d2}]
	From~\eqref{eq:inTermsOfA} the quantum winning probability is
\begin{align}
    \omega_q = \frac{1}{d|\mathcal{S}|^2} \sum_{c \in [n]} \tr[A_c^2],
    \label{eq:omega-q-proof}
\end{align}
where $A_c = \sum_{\mathcal{X} \ni c} \ket{v_{\mathcal{X}}^c}\bra{v_{\mathcal{X}}^c}$ and $|\mathcal{S}| = \binom{n}{2} = n(n-1)/2$ for $d = 2$.
For each channel $c$, the safe sets containing $c$ are precisely the pairs $\{c, j\}$ for $j \neq c$, giving $n - 1$ such sets. We can now expand
\begin{align}
    \tr[A_c^2] = \sum_{j, k \neq c} |\inp{v_{\{c,j\}}^c}{v_{\{c,k\}}^c}|^2.
    \label{eq:trace-expand}
\end{align}
The diagonal terms ($j = k$) contribute $(n-1)$ terms of value $1$. For the off-diagonal terms, by Lemma~\ref{lem:cross-overlap-d2}, the overlap depends only on the difference $\Delta = |m_2 - m_1|$ where $m_i = (i - c) \mod n$. For each $\Delta \in \{1, \ldots, n-2\}$, there are exactly $2(n - 1 - \Delta)$ ordered pairs $(j, k)$ with $|(j-c) - (k-c)| \equiv \Delta \pmod{n}$.
We thus obtain
\begin{align}
    \tr[A_c^2] &= (n-1) + 2\sum_{\Delta=1}^{n-2} (n - 1 - \Delta) \cos^2\left(\frac{\pi\Delta}{2n}\right).
    \label{eq:trace-sum}
\end{align}
This expression is independent of $c$ by cyclic symmetry, so summing over all $n$ channels gives us
\begin{align}
    \sum_{c \in [n]} \tr[A_c^2] = n(n-1) + 2n \sum_{\Delta=1}^{n-2} (n-1-\Delta) \cos^2\left(\frac{\pi\Delta}{2n}\right).
    \label{eq:sum-channels}
\end{align}
Using the identity $\cos^2(x) = (1 + \cos(2x))/2$, we can write
\begin{align}
    \sum_{\Delta=1}^{n-2} (n-1-\Delta) \cos^2\left(\frac{\pi\Delta}{2n}\right) 
    = \frac{(n-2)(n-1)}{4} + \frac{T_n}{2},
    \label{eq:cos2-identity}
\end{align}
where $T_n$ is defined in Eq.~\eqref{eq:T-n-def}.
Substituting Eq.~\eqref{eq:cos2-identity} into Eq.~\eqref{eq:sum-channels} and then into Eq.~\eqref{eq:omega-q-proof} with $d = 2$ and $|\mathcal{S}|^2 = n^2(n-1)^2/4$ gives
\begin{align}
    \omega_q &= \frac{2}{n^2(n-1)^2} \left[ n(n-1) + \frac{n(n-2)(n-1)}{2} + nT_n \right] \notag \\
    &= \frac{2}{n(n-1)^2} \left[ (n-1) + \frac{(n-2)(n-1)}{2} + T_n \right] \notag \\
    &= \frac{2}{n(n-1)^2} \left[ \frac{n(n-1)}{2} + T_n \right] \notag \\
    &= \frac{1}{n-1} + \frac{2T_n}{n(n-1)^2}. 
\end{align}
\end{proof}

\begin{remark}[Connection to simplex frame]
For $n = 3$, the harmonic frame in $\Complex^2$ consists of three equiangular vectors with inner product magnitude $|\cos(\pi/3)| = 1/2$, which is precisely a regular simplex. The value $\omega_q^{\mathrm{harm}}(3) = 7/12$ matches the simplex result from Section~\ref{sec:simplex}.
\end{remark}

\subsubsection{Quantum Advantage}\label{sec:harmonic-advantage}

We now establish that the harmonic strategy with $d = 2$ achieves quantum advantage for all $n \geq 3$. The key ingredient is a closed-form evaluation of $T_n$, which reduces the advantage condition to an elementary inequality.

\begin{theorem}[Quantum advantage for harmonic $d = 2$]\label{thm:harmonic-advantage}
For all $n \geq 3$, the $(n, n-2)$-jamming game with harmonic seed frame satisfies
\begin{align}
    \omega_q^{\mathrm{harm}}(n) > \omega_c(n, n-2).
    \label{eq:harmonic-advantage}
\end{align}
Moreover, the quantum-to-classical ratio has the asymptotic limit
\begin{align}
    \lim_{n \to \infty} \frac{\omega_q^{\mathrm{harm}}(n)}{\omega_c(n, n-2)} = \frac{3}{4} + \frac{3}{\pi^2} \approx 1.054.
    \label{eq:harmonic-ratio-limit}
\end{align}
\end{theorem}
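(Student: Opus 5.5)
The plan is to reduce everything to elementary closed forms, starting from Theorem~\ref{thm:harmonic-d2}, which gives $\omega_q^{\mathrm{harm}}(n) = \frac{1}{n-1} + \frac{2T_n}{n(n-1)^2}$.

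\emph{Classical side.} For $k=n-2$, Theorem~\ref{thm:classical-value} (reindexed as in Eq.~\eqref{eq:omega-c-reindex} with $d=2$) gives
\begin{align}
\omega_c(n,n-2) = \binom{n}{2}^{-2}\sum_{j=1}^{n-1} j^2 = \frac{2(2n-1)}{3n(n-1)}.
\end{align}

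\emph{Closed form for $T_n$.} Put $\theta=\pi/n$ and $z=e^{i\theta}$, so that $z^n=-1$. Since the $\Delta=n-1$ term of the weighted sum vanishes and the $\Delta=0$ term equals $n-1$, we have $T_n = \mathrm{Re}\sum_{\Delta=0}^{n-1}(n-1-\Delta)z^\Delta - (n-1)$. The arithmetico-geometric sum evaluates to
\begin{align}
\sum_{\Delta=0}^{m}(m-\Delta)z^\Delta = \frac{m-(m+1)z+z^{m+1}}{(1-z)^2}.
\end{align}
With $m=n-1$ the numerator becomes $n-2-nz$. Using $(1-z)^2=-4\sin^2(\theta/2)\,z$ and taking real parts gives $\frac{n-(n-2)\cos\theta}{4\sin^2(\theta/2)}$. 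Writing $n-(n-2)\cos\theta = 2+2(n-2)\sin^2(\theta/2)$ then yields
\begin{align}
T_n = \frac{1}{2\sin^2(\pi/(2n))} - \frac{n}{2}.
\end{align}
I will sanity-check this at $n=3$, where both sides equal $1/2$.

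\emph{Advantage inequality.} Substituting the closed form, multiplying through by $n(n-1)^2$, and inserting $\omega_c$, the claim $\omega_q^{\mathrm{harm}}>\omega_c$ becomes
\begin{align}
n(n-1) - n + \csc^2\!\left(\frac{\pi}{2n}\right) > \frac{2(2n-1)(n-1)}{3}.
\end{align}
This is the only step needing a genuine estimate. I will use $\csc^2 x > x^{-2} + \tfrac13$ for $x\in(0,\pi)$. This follows from the Laurent expansion $\csc^2 x = \sum_{m\in\mathbb Z}(x-m\pi)^{-2}$: its expansion about $0$ has all higher coefficients positive, so it is $x^{-2}+\tfrac13+\tfrac{x^2}{15}+\dots$. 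Alternatively, $\sin x < x/\sqrt{1+x^2/3}$ can be checked directly. With $x=\pi/(2n)$, the difference of the two sides exceeds
\begin{align}
n^2\left(\frac{4}{\pi^2}-\frac13\right) - \frac13,
\end{align}
since the linear terms cancel. Because $4/\pi^2-1/3\approx 0.072$, this is positive already at $n=3$ ($0.648-0.333>0$) and increases with $n$. This proves the inequality for all $n\ge3$.

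\emph{Limit.} From the closed form, $2T_n/(n(n-1)^2) = \frac{4}{\pi^2 n}+O(n^{-2})$, so $n\,\omega_q^{\mathrm{harm}}\to 1+4/\pi^2$. Meanwhile $n\,\omega_c\to 4/3$. The ratio therefore tends to $\tfrac34(1+4/\pi^2) = \tfrac34+\tfrac{3}{\pi^2}$.

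The main obstacle is getting the exact closed form of $T_n$ right, since sign and phase slips are easy there. The sharp $\csc^2$ bound is less delicate: it only needs the constant $1/3$, because the crude bound $\csc^2 x > x^{-2}$ alone leaves a $-\tfrac13$ deficit.
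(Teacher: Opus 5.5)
Your proof is correct and follows the same route as the paper. Both derive the closed form $T_n = \frac{1}{2\sin^2(\pi/(2n))} - \frac{n}{2}$, reduce the advantage condition to $\csc^2(\pi/(2n)) > (n^2+2)/3$, apply an elementary sine bound, and use the same asymptotic expansion for the limit. The differences are minor: you evaluate $T_n$ with a single arithmetico-geometric sum rather than splitting it into $C_1$ and $C_2$, and your sharper bound $\csc^2 x > x^{-2} + \tfrac13$ covers all $n \ge 3$ at once, whereas the paper uses the cruder $\sin x < x$ and checks $n = 3$ directly via $T_3 = 1/2 > 1/3$. Note that the crude bound already suffices for every $n \ge 4$, so the extra $\tfrac13$ is only needed at $n=3$.
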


The proof proceeds in four steps. We first simplify the classical value $\omega_c(n, n-2)$. Second, we derive a closed-form expression for $T_n$ using character sum identities. Third, we reformulate the quantum advantage condition as an inequality on $T_n$. Finally, we establish this inequality analytically for all $n \geq 3$.

\begin{corollary}[Classical value for $d = 2$]\label{lem:classical-d2}
The classical value of the $(n, n-2)$-jamming game is
\begin{align}
    \omega_c(n, n-2) = \frac{2(2n-1)}{3n(n-1)}.
    \label{eq:classical-d2}
\end{align}
\end{corollary}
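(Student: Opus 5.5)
The plan is to specialize Theorem~\ref{thm:classical-value} to $k = n-2$ (so $d=2$) and evaluate the resulting finite sum in closed form. With $k=n-2$, the summand is $\binom{n-1-i}{n-2-i} = \binom{n-1-i}{1} = n-1-i$, for $i = 0,\dots,n-2$. Equivalently, via the reindexed form Eq.~\eqref{eq:omega-c-reindex} from Proposition~\ref{prop:classical-scaling} with $d=2$, the numerator is $\sum_{j=1}^{n-1}\binom{j}{1}^2 = \sum_{j=1}^{n-1} j^2$.

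Next I apply the standard sum-of-squares identity $\sum_{j=1}^{n-1} j^2 = (n-1)n(2n-1)/6$. For the denominator I use $\binom{n}{n-2} = \binom{n}{2} = n(n-1)/2$, so $\binom{n}{2}^2 = n^2(n-1)^2/4$. The quotient is then
\begin{align}
\omega_c(n,n-2) = \frac{(n-1)n(2n-1)/6}{n^2(n-1)^2/4} = \frac{2(2n-1)}{3n(n-1)},
\end{align}
which is Eq.~\eqref{eq:classical-d2}.

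As a sanity check I will compare against Table~\ref{tab:frame-comparison}: at $n=3$ the formula gives $10/18 = 5/9 \approx 0.5556$, and at $n=4$ it gives $14/36 \approx 0.3889$. Both match. It is also consistent with the leading term $d^2/((2d-1)n) = 4/(3n)$ of Proposition~\ref{prop:classical-scaling}.

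There is no real obstacle here; the only care needed is the index bookkeeping when reducing the binomial to $n-1-i$ and making sure the range $i=0,\dots,k$ corresponds to $j=1,\dots,n-1$.
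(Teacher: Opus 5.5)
Your proposal is correct and matches the paper's proof essentially step for step. Both specialize Theorem~\ref{thm:classical-value} to $k=n-2$, reduce $\binom{n-1-i}{n-2-i}$ to $n-1-i$, apply $\sum_{j=1}^{n-1} j^2 = (n-1)n(2n-1)/6$, and divide by $\binom{n}{2}^2 = n^2(n-1)^2/4$.
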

\begin{proof}
From Theorem~\ref{thm:classical-value} with $k = n - 2$:
\begin{align}
    \omega_c(n, n-2) = \frac{1}{\binom{n}{n-2}^2} \sum_{i=0}^{n-2} \binom{n-1-i}{n-2-i}^2.
    \label{eq:classical-from-thm}
\end{align}
Now $\binom{n}{n-2} = \binom{n}{2} = n(n-1)/2$, and $\binom{n-1-i}{n-2-i} = n - 1 - i$. The sum becomes
\begin{align}
    \sum_{i=0}^{n-2} (n-1-i)^2 = \sum_{j=1}^{n-1} j^2 = \frac{(n-1)n(2n-1)}{6},
    \label{eq:sum-squares}
\end{align}
where the first equality uses the substitution $j = n - 1 - i$, and the second is the standard formula for the sum of squares~\cite{GrahamKnuthPatashnik}. Substituting into Eq.~\eqref{eq:classical-from-thm}:
\begin{align}
    \omega_c(n, n-2) = \frac{(n-1)n(2n-1)/6}{[n(n-1)/2]^2} = \frac{(n-1)n(2n-1)/6}{n^2(n-1)^2/4} = \frac{2(2n-1)}{3n(n-1)}. 
\end{align}
\end{proof}

The next lemma provides an exact closed-form expression for $T_n$ that will be the basis of our analytical proof.

\begin{lemma}[Closed form for $T_n$]\label{lem:T-closed-form}
The sum $T_n = \sum_{\Delta=1}^{n-2} (n - 1 - \Delta)\cos(\pi\Delta/n)$ admits the closed-form expression
\begin{align}
    T_n = \frac{1}{2\sin^2(\pi/(2n))} - \frac{n}{2}.
    \label{eq:T-closed-form}
\end{align}
\end{lemma}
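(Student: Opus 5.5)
I will evaluate $T_n$ as the real part of an arithmetico-geometric sum. Set $\zeta := e^{i\pi/n}$, so that $\cos(\pi\Delta/n) = \mathrm{Re}\,\zeta^\Delta$ and
\begin{align}
T_n = \mathrm{Re} \sum_{\Delta=1}^{n-2} (n-1-\Delta)\,\zeta^\Delta .
\end{align}
The key simplification is that $\zeta^n = -1$. It lets the boundary terms of the standard formula collapse. I will also extend the sum to $\Delta = 0, \ldots, n-1$, which changes nothing except the $\Delta = 0$ term. That term contributes $(n-1)$ and must be subtracted afterwards. The $\Delta = n-1$ term has weight $0$.

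\textbf{Closed form for the extended sum.} With $m := n-1$, I will use the identity
\begin{align}
\sum_{\Delta=0}^{m} (m-\Delta) z^\Delta = \frac{m}{1-z} - \frac{z(1-z^{m})}{(1-z)^2}.
\end{align}
This follows by writing $(m-\Delta) = \sum_{t=\Delta+1}^{m} 1$ and swapping sums: one gets $\sum_{t=1}^m \sum_{\Delta<t} z^\Delta = \sum_{t=1}^{m}(1-z^t)/(1-z)$, which equals the right-hand side.

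Plugging in $z = \zeta$ gives $z^m = \zeta^{n-1} = -\zeta^{-1}$. Hence $z(1-z^m) = \zeta + 1$, and the extended sum equals
\begin{align}
\frac{n-1}{1-\zeta} - \frac{1+\zeta}{(1-\zeta)^2}.
\end{align}

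\textbf{Taking real parts.} I will use the half-angle factorizations with $\theta = \pi/(2n)$:
\begin{align}
1-\zeta = -2i\sin\theta\, e^{i\theta}, \qquad 1+\zeta = 2\cos\theta\, e^{i\theta}.
\end{align}
These give $1/(1-\zeta) = \tfrac12 + \tfrac{i}{2}\cot\theta$, with real part $1/2$. They also give
\begin{align}
\frac{1+\zeta}{(1-\zeta)^2} = \frac{2\cos\theta\, e^{i\theta}}{-4\sin^2\theta\, e^{2i\theta}} = -\frac{\cos\theta\, e^{-i\theta}}{2\sin^2\theta},
\end{align}
whose real part is $-\cos^2\theta/(2\sin^2\theta)$.

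The real part of the extended sum is therefore
\begin{align}
\frac{n-1}{2} + \frac{\cos^2\theta}{2\sin^2\theta} = \frac{n-1}{2} + \frac{1}{2\sin^2\theta} - \frac12 .
\end{align}
Subtracting the $\Delta = 0$ contribution $n-1$ yields
\begin{align}
T_n = \frac{1}{2\sin^2(\pi/(2n))} - \frac{n}{2},
\end{align}
as claimed.

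\textbf{Where the difficulty lies.} There is no real conceptual obstacle here. The step most prone to error is the bookkeeping: the index shift, the sign of $\zeta^{n-1} = -\zeta^{-1}$, and the $\Delta = 0$ correction. As a sanity check, I will verify $n = 3$. Directly, $T_3 = \cos(\pi/3) = 1/2$. From the formula, $1/(2\cdot\tfrac14) - \tfrac32 = \tfrac12$, so the two agree.
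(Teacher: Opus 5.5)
Your proof is correct. The swap-of-sums identity for $\sum_{\Delta=0}^{m}(m-\Delta)z^\Delta$ holds, $\zeta(1-\zeta^{n-1}) = 1+\zeta$ is right, and the real parts $\tfrac12$ and $-\cos^2\theta/(2\sin^2\theta)$ combine to the claimed formula once the $\Delta=0$ term is removed.

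You use the same toolkit as the paper: $e^{i\pi/n}$ with $(e^{i\pi/n})^n=-1$, the half-angle factorization $1-e^{i\pi/n} = -2i\sin(\pi/(2n))\,e^{i\pi/(2n)}$, and $\mathrm{Re}[1/(1-e^{i\pi/n})]=\tfrac12$. The organization differs, though:
\begin{itemize}
\item The paper splits $T_n=(n-1)C_1-C_2$, with $C_1=\sum_{\Delta=1}^{n-2}\cos(\pi\Delta/n)$ and $C_2=\sum_{\Delta=1}^{n-2}\Delta\cos(\pi\Delta/n)$. It gets $C_1=\cos(\pi/n)$ from the plain geometric sum and $C_2$ from the differentiated geometric series $\sum_\Delta \Delta z^\Delta$, cited from Graham--Knuth--Patashnik. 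It removes the $\Delta=n-1$ boundary term from both sums, and the resulting $(n-1)\cos(\pi/n)$ contributions cancel when the two are combined.
\item You handle the triangular weight $(n-1-\Delta)$ in one step. The $\Delta=n-1$ term then disappears automatically because its weight is zero, so only the $\Delta=0$ correction remains. Your identity also follows in one line by double counting, with no differentiation and no external reference.
\end{itemize}
Your route is therefore slightly shorter, with less boundary bookkeeping, and fully self-contained. The paper's route instead relies on a standard textbook formula and produces $C_1$ and $C_2$ separately as by-products, though neither is needed elsewhere.
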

\begin{proof}
We split $T_n = (n-1)C_1 - C_2$, where
\begin{align}
    C_1 := \sum_{\Delta=1}^{n-2} \cos\!\left(\frac{\pi\Delta}{n}\right), \qquad
    C_2 := \sum_{\Delta=1}^{n-2} \Delta \cos\!\left(\frac{\pi\Delta}{n}\right).
    \label{eq:C1-C2-def}
\end{align}
We evaluate each using the complex exponential $\tau := e^{i\pi/n}$, noting that $\tau^n = e^{i\pi} = -1$.

Evaluation of $C_1$: The complete geometric sum satisfies
\begin{align}
    \sum_{\Delta=0}^{n-1} \tau^\Delta = \frac{1 - \tau^n}{1 - \tau} = \frac{2}{1 - \tau}.
    \label{eq:geometric-sum}
\end{align}
Writing $1 - \tau = -2i\sin(\pi/(2n))\,e^{i\pi/(2n)}$, one finds $\operatorname{Re}\!\left[\frac{2}{1-\tau}\right] = 1$, so $\sum_{\Delta=0}^{n-1} \cos(\pi\Delta/n) = 1$. Removing the $\Delta = 0$ term (which equals $1$) and the $\Delta = n-1$ term (which equals $\cos(\pi(n-1)/n) = -\cos(\pi/n)$):
\begin{align}
    C_1 = 1 - 1 + \cos(\pi/n) = \cos(\pi/n).
    \label{eq:C1-result}
\end{align}

Evaluation of $C_2$: Differentiating the geometric sum $\sum_{\Delta=0}^{N-1} z^\Delta = (1 - z^N)/(1 - z)$ with respect to $z$ and multiplying by $z$ gives the standard identity (see e.g.~\cite[Eq.~(2.31)]{GrahamKnuthPatashnik})
\begin{align}
    \sum_{\Delta=0}^{N-1} \Delta\, z^\Delta = \frac{z\bigl[1 - Nz^{N-1} + (N-1)z^N\bigr]}{(1-z)^2}.
    \label{eq:weighted-geometric}
\end{align}
Setting $N = n$ and $z = \tau = e^{i\pi/n}$, we use $\tau^n = -1$ and $\tau^{n-1} = -\tau^{-1} = -e^{-i\pi/n}$ to simplify the numerator:
\begin{align}
    \tau\bigl[1 + ne^{-i\pi/n} - (n-1)\bigr]
    &= \tau\bigl[2 - n + ne^{-i\pi/n}\bigr]
    = \tau\bigl[2 - n(1 - e^{-i\pi/n})\bigr].
    \label{eq:numerator-simplify}
\end{align}
Since $\tau\,(1 - \bar\tau) = \tau - |\tau|^2 = \tau - 1 = -(1 - \tau)$, this becomes $2\tau + n(1-\tau)$, giving
\begin{align}
    \sum_{\Delta=0}^{n-1} \Delta\,\tau^\Delta = \frac{2\tau}{(1-\tau)^2} + \frac{n}{1-\tau}.
    \label{eq:weighted-sum-split}
\end{align}
For the first term, using $(1-\tau)^2 = -4\sin^2(\pi/(2n))\,e^{i\pi/n}$:
\begin{align}
    \frac{2\tau}{(1-\tau)^2} = \frac{2e^{i\pi/n}}{-4\sin^2(\pi/(2n))\,e^{i\pi/n}} = \frac{-1}{2\sin^2(\pi/(2n))},
    \label{eq:first-term-real}
\end{align}
which is real. For the second term, a direct calculation gives $\operatorname{Re}\!\left[\frac{1}{1-\tau}\right] = \frac{1}{2}$ (see the evaluation of $C_1$). Combining:
\begin{align}
    \sum_{\Delta=0}^{n-1} \Delta\cos\!\left(\frac{\pi\Delta}{n}\right) = \frac{-1}{2\sin^2(\pi/(2n))} + \frac{n}{2}.
    \label{eq:weighted-cos-full}
\end{align}
Since the $\Delta = 0$ term vanishes, removing only the $\Delta = n-1$ term yields
\begin{align}
    C_2 = \frac{-1}{2\sin^2(\pi/(2n))} + \frac{n}{2} - (n-1)\cos\!\left(\frac{\pi(n-1)}{n}\right) = \frac{-1}{2\sin^2(\pi/(2n))} + \frac{n}{2} + (n-1)\cos\!\left(\frac{\pi}{n}\right),
    \label{eq:C2-result}
\end{align}
where we used $\cos(\pi(n-1)/n) = -\cos(\pi/n)$.

Combining: Substituting Eqs.~\eqref{eq:C1-result} and~\eqref{eq:C2-result} into $T_n = (n-1)C_1 - C_2$:
\begin{align}
    T_n &= (n-1)\cos(\pi/n) - \left[\frac{-1}{2\sin^2(\pi/(2n))} + \frac{n}{2} + (n-1)\cos(\pi/n)\right] \notag \\
    &= \frac{1}{2\sin^2(\pi/(2n))} - \frac{n}{2}. 
\end{align}
\end{proof}

It is now convenient to rewrite the condition for obtaining a quantum advantage in terms of $T_n$.
\begin{lemma}[Equivalent condition for quantum advantage]\label{lem:advantage-equiv}
For $n \geq 3$, the inequality $\omega_q^{\mathrm{harm}}(n) > \omega_c(n, n-2)$ holds if and only if
\begin{align}
    T_n > \frac{(n-1)(n-2)}{6}
    \label{eq:T-threshold}
\end{align}
\end{lemma}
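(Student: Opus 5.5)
The plan is pure algebra: substitute the closed forms already available and clear positive denominators. From Theorem~\ref{thm:harmonic-d2} we have $\omega_q^{\mathrm{harm}}(n) = \frac{1}{n-1} + \frac{2T_n}{n(n-1)^2}$, and from Corollary~\ref{lem:classical-d2} we have $\omega_c(n,n-2) = \frac{2(2n-1)}{3n(n-1)}$. For $n \geq 3$ the quantity $n(n-1)^2$ is strictly positive. Multiplying both sides of $\omega_q^{\mathrm{harm}}(n) > \omega_c(n,n-2)$ by it is therefore an equivalence, which is what gives the ``if and only if''.

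After multiplying, the inequality becomes
\begin{align}
n(n-1) + 2T_n > \frac{2(2n-1)(n-1)}{3}.
\end{align}
Next I isolate $T_n$:
\begin{align}
2T_n > \frac{2(2n-1)(n-1) - 3n(n-1)}{3} = \frac{(n-1)(4n-2-3n)}{3} = \frac{(n-1)(n-2)}{3}.
\end{align}
Dividing by $2$ gives exactly $T_n > (n-1)(n-2)/6$, which is Eq.~\eqref{eq:T-threshold}. Every step is a reversible operation: multiplication by a positive quantity, or addition and subtraction of the same term on both sides. So the chain of equivalences establishes both directions at once.

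There is no genuine obstacle. The only point needing care is the sign of the multiplier, which must be positive so that the inequality direction is preserved and the equivalence holds. This is guaranteed by $n\geq 3$; in fact $n\ge 2$ suffices for positivity, and $n\ge 3$ is needed only for the game to be nontrivial. I would also double-check the factorization $2(2n-1)(n-1)-3n(n-1)=(n-1)(n-2)$, since an arithmetic slip there would shift the threshold.
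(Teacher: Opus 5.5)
Your proposal is correct and follows essentially the same route as the paper's proof: you substitute the closed forms from Theorem~\ref{thm:harmonic-d2} and Corollary~\ref{lem:classical-d2}, multiply through by the positive quantity $n(n-1)^2$, and rearrange to $2T_n > (n-1)(n-2)/3$. Your explicit note that each step is reversible because the multiplier is positive justifies the ``if and only if''; the paper leaves this implicit.
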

\begin{proof}
Using Theorem~\ref{thm:harmonic-d2} and Corollary~\ref{lem:classical-d2}, the condition $\omega_q > \omega_c$ becomes
\begin{align}
    \frac{1}{n-1} + \frac{2T_n}{n(n-1)^2} > \frac{2(2n-1)}{3n(n-1)}.
\end{align}
Multiplying both sides by $n(n-1)^2$:
\begin{align}
    n(n-1) + 2T_n > \frac{2(2n-1)(n-1)}{3}.
\end{align}
Rearranging:
\begin{align}
    2T_n > \frac{2(2n-1)(n-1)}{3} - n(n-1) = \frac{(n-1)\bigl[2(2n-1) - 3n\bigr]}{3} = \frac{(n-1)(n-2)}{3}.
\end{align}
Dividing by $2$ yields Eq.~\eqref{eq:T-threshold}.
\end{proof}

We are now ready to give a fully analytical proof of Theorem~\ref{thm:harmonic-advantage}.

\begin{proof}[Proof of Theorem~\ref{thm:harmonic-advantage}]
By Lemma~\ref{lem:advantage-equiv}, quantum advantage holds if and only if $T_n > (n-1)(n-2)/6$. Using the closed form from Lemma~\ref{lem:T-closed-form}, this is equivalent to
\begin{align}
    \frac{1}{2\sin^2(\pi/(2n))} - \frac{n}{2} > \frac{(n-1)(n-2)}{6},
    \label{eq:advantage-closed}
\end{align}
which rearranges to
\begin{align}
    \frac{1}{2\sin^2(\pi/(2n))} > \frac{n}{2} + \frac{(n-1)(n-2)}{6} = \frac{3n + n^2 - 3n + 2}{6} = \frac{n^2 + 2}{6}.
    \label{eq:advantage-rearranged}
\end{align}

Case $n \geq 4$: For $x > 0$, the elementary inequality $\sin x < x$ gives $\sin^2(\pi/(2n)) < \pi^2/(4n^2)$, and hence
\begin{align}
    \frac{1}{2\sin^2(\pi/(2n))} > \frac{2n^2}{\pi^2}.
    \label{eq:sin-lower-bound}
\end{align}
Thus Eq.~\eqref{eq:advantage-rearranged} holds whenever $2n^2/\pi^2 > (n^2 + 2)/6$, i.e.,
\begin{align}
    12n^2 > \pi^2(n^2 + 2) \quad \iff \quad n^2(12 - \pi^2) > 2\pi^2 \quad \iff \quad n^2 > \frac{2\pi^2}{12 - \pi^2}.
    \label{eq:n-threshold}
\end{align}
Since $12 - \pi^2 > 2.13$, the right-hand side satisfies $2\pi^2/(12 - \pi^2) < 2 \times 9.87/2.13 < 9.27$. Therefore Eq.~\eqref{eq:n-threshold} holds for all $n \geq 4$.

Case $n = 3$: By Lemma~\ref{lem:T-closed-form}, $T_3 = 1/(2\sin^2(\pi/6)) - 3/2 = 1/(2 \cdot 1/4) - 3/2 = 1/2$. The threshold is $(2 \cdot 1)/6 = 1/3$. Since $1/2 > 1/3$, Eq.~\eqref{eq:T-threshold} holds for $n = 3$.

Asymptotic ratio: For the limiting ratio, we use $\sin(\pi/(2n)) = \pi/(2n) + O(n^{-3})$ to expand the quantum value via Theorem~\ref{thm:harmonic-d2} and Lemma~\ref{lem:T-closed-form}:
\begin{align}
    T_n = \frac{2n^2}{\pi^2} - \frac{n}{2} + O(1), \qquad \omega_q = \frac{1}{n}\left(1 + \frac{4}{\pi^2}\right) + O(n^{-2}).
\end{align}
Since $\omega_c = 4/(3n) + O(n^{-2})$ by Corollary~\ref{lem:classical-d2}:
\begin{align}
    \frac{\omega_q}{\omega_c} \to \frac{1 + 4/\pi^2}{4/3} = \frac{3}{4} + \frac{3}{\pi^2} \approx 1.054. 
\end{align}
\end{proof}

\subsubsection{\texorpdfstring{Numerical evaluation for general $d$}{Numerical evaluation for general d}}

For $d \geq 3$, the cross-context overlaps $|\inp{v_{\mathcal{X}}^c}{v_{\mathcal{Y}}^c}|^2$ no longer have the simple closed form of Lemma~\ref{lem:cross-overlap-d2}, and we evaluate $\omega_q$ by direct numerical computation via Eq.~\eqref{eq:inTermsOfA} using our numerical package~\cite{JammingNumerics}. Table~\ref{tab:harmonic-advantage} reports selected results for $d = 2$ through $d = 6$.

\begin{table}[ht]
\centering
\begin{tabular}{cc|ccc}
\toprule
$d$ & $n$ & $\omega_q^{\mathrm{harm}}$ & $\omega_c$ & $\omega_q/\omega_c$ \\
\midrule
2 & 3 & 0.5833 & 0.5556 & 1.050 \\
2 & 6 & 0.2595 & 0.2444 & 1.062 \\
2 & 10 & 0.1492 & 0.1407 & 1.060 \\
2 & 20 & 0.0724 & 0.0684 & 1.058 \\
\midrule
3 & 4 & 0.6451 & 0.6250 & 1.032 \\
3 & 6 & 0.3783 & 0.3650 & 1.036 \\
3 & 12 & 0.1692 & 0.1641 & 1.031 \\
3 & 20 & 0.0974 & 0.0948 & 1.028 \\
\midrule
4 & 5 & 0.6909 & 0.6800 & 1.016 \\
4 & 6 & 0.5290 & 0.5200 & 1.017 \\
4 & 12 & 0.2211 & 0.2186 & 1.012 \\
4 & 20 & 0.1246 & 0.1237 & 1.007 \\
\midrule
5 & 6 & 0.7260 & 0.7222 & 1.005 \\
5 & 10 & 0.3508 & 0.3504 & 1.001 \\
5 & 15 & 0.2137 & 0.2144 & 0.997 \\
5 & 20 & 0.1537 & 0.1546 & 0.994 \\
\midrule
6 & 7 & 0.7537 & 0.7551 & 0.998 \\
6 & 10 & 0.4388 & 0.4420 & 0.993 \\
6 & 15 & 0.2599 & 0.2630 & 0.988 \\
6 & 20 & 0.1847 & 0.1874 & 0.985 \\
\bottomrule
\end{tabular}
\caption{Quantum winning probability for the harmonic strategy across different game parameters. The $d = 2$ values agree with the analytical formula of Theorem~\ref{thm:harmonic-d2}. The advantage ratio is persistent for $d \leq 4$, crosses over for $d = 5$, and is absent for $d = 6$.}
\label{tab:harmonic-advantage}
\end{table}

Several patterns emerge beyond the $d = 2$ case analyzed above. For $d = 3$ and $d = 4$, the harmonic strategy achieves $\omega_q > \omega_c$ for all values of $n$ tested (up to $n = 20$), and the advantage ratio stabilizes as $n$ increases: approximately $1.028$ for $d = 3$ and $1.007$ for $d = 4$. Like the simplex (Proposition~\ref{prop:simplex-crossover}), the harmonic strategy exhibits a crossover for $d = 5$, where the advantage disappears near $n \approx 11$. At $d = 6$, the harmonic strategy yields $\omega_q < \omega_c$ for all $n$ tested.

Comparing with the random frame predictions of Theorem~\ref{thm:random-winning}, the asymptotic advantage ratios $(2d-1)\mathcal{L}_1/d \approx 1.083, 1.075, 1.069$ for $d = 2, 3, 4$ exceed those of the harmonic frame. This is consistent with the fact that harmonic vectors are confined to a specific algebraic structure---the discrete Fourier subspace---while random frames exploit the full state space geometry. Nevertheless, the harmonic frame provides a concrete, deterministic strategy achieving quantum advantage for all $d \leq 4$ regardless of $n$.

%% file: simplexStrategy.tex
\subsection{Simplex Strategy}\label{sec:simplex}

We now consider seed vectors that form a regular simplex, the simplest example of an equiangular tight frame~\cite{strohmer2003grassmannian,tropp2005complex}.

\subsubsection{Construction}
A \emph{regular simplex} in $\mathbb{C}^d$ consists of $n = d+1$ unit vectors with constant pairwise inner product magnitude:
\begin{align}
    |\langle s_i | s_j \rangle|^2 = \frac{1}{d^2} \quad \text{for all } i \neq j.
    \label{eq:simplex_condition}
\end{align}
This means using the simplex corresponds to a jamming game with $n = d+1$ channels where the adversary blocks only $k = 1$ channels, leaving safe sets of size $|x| = d$. While limited to this specific regime, the simplex seed frame has the advantage of existing in all dimensions (unlike MUBs and SIC-POVMs)~\cite{fickus2018equiangular} and admits closed-form analysis.
An explicit construction projects the $d+1$ standard basis vectors of $\mathbb{R}^{d+1}$ onto the orthogonal complement of the all-ones vector and normalizes:
\begin{align}
    |s_j\rangle = \sqrt{\frac{d+1}{d}} \left( |e_j\rangle - \frac{1}{d+1} \sum_{k=0}^{d} |e_k\rangle \right), \quad j \in \{0, \ldots, d\}.
    \label{eq:simplex_construction}
\end{align}
The coherence is
\begin{align}
    \delta_{\mathrm{simplex}} = \frac{1}{d},
    \label{eq:simplex_coherence}
\end{align}
which achieves the Welch bound~\cite{welch1974} for $d+1$ vectors in $\mathbb{C}^d$.

\subsubsection{Exact Analysis}

We now derive an exact analytical expression for the quantum winning probability using simplex seed vectors.

\begin{theorem}[Simplex quantum value]\label{thm:simplex-exact}
For the $(d+1, 1)$-jamming game using the simplex seed frame, the quantum winning probability is
\begin{align}
    \omega_q^{\mathrm{simplex}} = \frac{1}{d+1} + \frac{d-1}{d+1} \cdot \mu^2,
\end{align}
where the cross-context overlap $\mu$ is given by
\begin{align}
    \mu = \frac{d^3 - 3d - 2 + 2(d+1)^{3/2}}{d^2(d+1)}.
\end{align}
\end{theorem}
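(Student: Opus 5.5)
The plan is to specialize the Decomposition theorem (Theorem~\ref{thm:general-bound}) to $n=d+1$, $k=1$. Here every safe set is the complement of a single channel, $\mathcal{X}_m=[d+1]\setminus\{m\}$. Two safe sets are therefore either equal (intersection size $d$) or intersect in exactly $d-1$ channels. From Eq.~\eqref{eq:hypergeometric-def}:
\begin{itemize}
\item $p(d)=\binom{d-1}{d-1}\binom{1}{0}/\binom{d}{d-1}=1/d$,
\item $p(d-1)=\binom{d-1}{d-2}\binom{1}{1}/\binom{d}{d-1}=(d-1)/d$,
\item all other $p(j)=0$.
\end{itemize}
The simplex has coherence $1/d<1/(d-1)$, so by Lemma~\ref{lem:gram-pd} every Gram matrix is positive definite and the L\"owdin vectors are orthonormal. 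This gives $L_d=1$ exactly.

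If all cross-context overlaps for $|\mathcal{X}\cap\mathcal{Y}|=d-1$ equal a common value $\mu^2$, the decomposition holds with equality:
\begin{align}
\omega_q=\frac{d}{d+1}\Big[\frac1d+\frac{d-1}{d}\mu^2\Big]=\frac{1}{d+1}+\frac{d-1}{d+1}\mu^2 .
\end{align}

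To justify that the overlap is constant, I would invoke Proposition~\ref{prop:lowdin-symmetry}. The simplex vectors of Eq.~\eqref{eq:simplex_construction} are permuted by the coordinate-permutation matrices of $\mathbb{R}^{d+1}$. These fix the all-ones vector, so they restrict to unitaries on its $d$-dimensional orthogonal complement $\cong\Complex^d$, realizing the full symmetric group $S_{d+1}$ on channel labels. $S_{d+1}$ acts transitively on triples $(c,m,m')$ of distinct channels, which index all pairs $(\mathcal{X}_m,\mathcal{X}_{m'})$ with a common channel $c$. Hence $|\inp{v^c_{\mathcal{X}_m}}{v^c_{\mathcal{X}_{m'}}}|^2$ is the same for all such triples.

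For the value of $\mu$, I would use the frame-operator form of Remark~\ref{rem:frame-operator} rather than inverting Gram matrices. The full simplex is a tight frame of $d+1$ unit vectors in $\Complex^d$, so $\sum_{c=0}^{d}\proj{s_c}=\frac{d+1}{d}I$ and hence
\begin{align}
S_{\mathcal{X}_m}=\frac{d+1}{d}I-\proj{s_m}.
\end{align}
This operator has eigenvalue $1/d$ on $\ket{s_m}$ and $(d+1)/d$ on its complement. Writing $a=\sqrt{d/(d+1)}$ and $b=\sqrt d-a$, we get $S_{\mathcal{X}_m}^{-1/2}=aI+b\proj{s_m}$, and therefore
\begin{align}
\ket{v^c_{\mathcal{X}_m}}=a\ket{s_c}+bg\ket{s_m},\qquad g=\inp{s_m}{s_c}=-\tfrac1d,
\end{align}
where the inner product is real by the explicit construction. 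Expanding the overlap with $\ket{v^c_{\mathcal{X}_{m'}}}$, and using that all distinct inner products equal $g$, gives
\begin{align}
\mu=a^2+2abg^2+b^2g^3 .
\end{align}

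The remaining step, which I expect to be the only real obstacle, is algebraic: simplifying $a^2+2ab/d^2-b^2/d^3$ into the stated form $\big(d^3-3d-2+2(d+1)^{3/2}\big)/\big(d^2(d+1)\big)$. The plan is to substitute $ab=d/\sqrt{d+1}-d/(d+1)$ and $b^2=d-2d/\sqrt{d+1}+d/(d+1)$, place everything over the common denominator $d^2(d+1)$, and collect the rational and $\sqrt{d+1}$ parts. The $\sqrt{d+1}$ terms should combine into $2(d+1)^{3/2}$. As a sanity check, $d=2$ should reproduce the value $7/12$ of the harmonic/simplex remark.
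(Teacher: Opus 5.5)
Your proposal is correct and follows essentially the paper's proof. Both arguments reduce each L\"owdin vector to the two-term form $\sqrt{d/(d+1)}\,\ket{s_c}-\gamma\ket{s_m}$ (your $bg=-b/d$ is the paper's $-\gamma$) and then count $d$ diagonal and $d(d-1)$ off-diagonal pairs per channel. The differences are minor: you get $S_{\mathcal{X}_m}^{-1/2}$ from tightness $\sum_j\proj{s_j}=\frac{d+1}{d}I$ rather than from $G_{\mathcal{X}}^{-1/2}$ and the centering identity $\sum_j\ket{s_j}=0$, you do the counting through the Decomposition theorem's weights rather than directly in $\tr[A_c^2]$, and your $S_{d+1}$ symmetry step is unnecessary because $\mu=a^2+2abg^2+b^2g^3$ already depends only on $d$.
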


The proof proceeds in four steps: we first establish the structure of the Gram matrix, then derive a closed form for the L\"owdin vectors, use this to compute the operator $A_c$, and finally evaluate the trace.

\begin{lemma}[Gram matrix structure]\label{lem:simplex-gram}
For the simplex seed frame $\{|s_c\rangle\}_{c=0}^d$ in $\mathbb{C}^d$, the Gram matrix of any $d$-subset $X$ takes the form
\begin{align}
    G_X = \frac{d+1}{d} I - \frac{1}{d} \mathbf{1}\mathbf{1}^T, \label{eq:simplex-gram}
\end{align}
where $\mathbf{1} = (1, \ldots, 1)^T \in \mathbb{R}^d$. The eigenvalues are $\frac{d+1}{d}$ with multiplicity $d-1$ and $\frac{1}{d}$ with multiplicity $1$.
\end{lemma}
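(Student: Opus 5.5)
The plan is to compute the Gram matrix entries directly from the explicit construction in Eq.~\eqref{eq:simplex_construction}, and then read off the spectrum from the rank-one perturbation structure.

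\emph{Entries.} Write $\mathbf{j} := \sum_{k=0}^{d}|e_k\rangle$, so that $|s_j\rangle = \sqrt{(d+1)/d}\,\bigl(|e_j\rangle - \mathbf{j}/(d+1)\bigr)$. Using $\langle e_i|e_j\rangle=\delta_{ij}$, $\langle e_i|\mathbf{j}\rangle = 1$ and $\langle \mathbf{j}|\mathbf{j}\rangle = d+1$, expanding the inner product gives
\begin{align}
\langle s_i|s_j\rangle = \frac{d+1}{d}\left(\delta_{ij} - \frac{2}{d+1} + \frac{1}{d+1}\right) = \frac{d+1}{d}\,\delta_{ij} - \frac{1}{d}.
\end{align}
For $i=j$ this equals $1$, so the vectors are unit vectors. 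For $i\neq j$ it equals $-1/d$, which is consistent with Eq.~\eqref{eq:simplex_condition}. These vectors live in the $d$-dimensional complement of $\mathbf{j}$ in $\mathbb{R}^{d+1}\subset\Complex^{d+1}$, which we identify with $\Complex^d$; inner products are unchanged by this identification.

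\emph{Restriction to $X$.} For any $d$-subset $X$ of the $d+1$ indices, every diagonal entry is $1$ and every off-diagonal entry is $-1/d$. Hence $G_X = \frac{d+1}{d}I - \frac1d\mathbf{1}\mathbf{1}^T$, and this holds independently of which index is omitted.

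\emph{Spectrum.} The matrix $\mathbf{1}\mathbf{1}^T$ has eigenvalue $d$ on $\mathbf{1}$ and eigenvalue $0$ on $\mathbf{1}^\perp$, which has dimension $d-1$. It follows that $G_X$ has eigenvalue $\frac{d+1}{d}-1=\frac1d$ on $\mathbf{1}$, and eigenvalue $\frac{d+1}{d}$ with multiplicity $d-1$ on $\mathbf{1}^\perp$. Since all eigenvalues are positive, $G_X$ is positive definite, so the L\"owdin construction of Eq.~\eqref{eq:lowdin-def-pd} applies.

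There is no real obstacle in this argument. The only points needing care are the sign of the off-diagonal entry ($-1/d$, not $+1/d^2$) and the identification of the ambient span with $\Complex^d$.
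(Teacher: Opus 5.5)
Your proof is correct and follows the paper's argument. The paper simply asserts $\langle s_c|s_c\rangle = 1$ and $\langle s_c|s_{c'}\rangle = -1/d$, reads off $G_X$, and then obtains the spectrum from the eigenvalues $d$ and $0$ of $\mathbf{1}\mathbf{1}^T$ exactly as you do; your explicit derivation of those inner products from Eq.~\eqref{eq:simplex_construction}, together with the positive-definiteness remark, fills in a step the paper leaves implicit.
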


\begin{proof}
The simplex vectors satisfy $\langle s_c | s_c \rangle = 1$ and $\langle s_c | s_{c'} \rangle = -1/d$ for $c \neq c'$. Thus $[G_X]_{c,c'} = \delta_{c,c'} - (1-\delta_{c,c'})/d = (1 + 1/d)\delta_{c,c'} - 1/d$, which can be written as Eq.~\eqref{eq:simplex-gram}.

For the eigenvalues, note that $\mathbf{1}$ is an eigenvector of $\mathbf{1}\mathbf{1}^T$ with eigenvalue $d$, and vectors orthogonal to $\mathbf{1}$ have eigenvalue $0$. Therefore $G_X$ has eigenvalue $\frac{d+1}{d} - \frac{d}{d} = \frac{1}{d}$ for the $\mathbf{1}$ direction, and eigenvalue $\frac{d+1}{d}$ with multiplicity $d-1$ for the orthogonal complement.
\end{proof}

\begin{lemma}[L\"owdin vectors for simplex]\label{lem:simplex-lowdin}
Let $x_a = [n] \setminus \{a\}$ denote the safe set excluding channel $a$. For any $c \in x_a$ (i.e., $c \neq a$), the L\"owdin orthonormalized vector is
\begin{align}
    |v_c^{x_a}\rangle = \alpha |s_c\rangle - \gamma |s_a\rangle, \label{eq:simplex-lowdin-form}
\end{align}
where
\begin{align}
    \alpha = \sqrt{\frac{d}{d+1}}, \qquad \gamma = \frac{\sqrt{d+1} - 1}{\sqrt{d(d+1)}}. \label{eq:alpha-gamma-def}
\end{align}
\end{lemma}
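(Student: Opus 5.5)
We compute $G_X^{-1/2}$ from the spectral decomposition in Lemma~\ref{lem:simplex-gram} and then simplify $\ket{v_{x_a}^c}$ using the one linear relation the simplex vectors satisfy.

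\textbf{Step 1: the inverse square root.} Write $P = \mathbf{1}\mathbf{1}^T/d$ for the projector onto $\mathbf{1}$. By Lemma~\ref{lem:simplex-gram},
\begin{align}
G_X^{-1/2} = \sqrt{\tfrac{d}{d+1}}\,(I-P) + \sqrt{d}\,P .
\end{align}
Its entries are therefore $[G_X^{-1/2}]_{c',c} = \sqrt{d/(d+1)}\,\delta_{c',c} + \beta$, with
\begin{align}
\beta = \frac{\sqrt{d} - \sqrt{d/(d+1)}}{d} .
\end{align}

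\textbf{Step 2: expand the L\"owdin vector.} Inserting these entries into Eq.~\eqref{eq:lowdin-def-pd} gives
\begin{align}
\ket{v_{x_a}^c} = \sqrt{\tfrac{d}{d+1}}\,\ket{s_c} + \beta\sum_{c'\neq a}\ket{s_{c'}} .
\end{align}

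\textbf{Step 3: eliminate the sum.} The construction in Eq.~\eqref{eq:simplex_construction} projects onto the complement of the all-ones vector, so $\sum_{j=0}^{d}\ket{s_j}=0$. Hence $\sum_{c'\neq a}\ket{s_{c'}} = -\ket{s_a}$, and
\begin{align}
\ket{v_{x_a}^c} = \alpha\ket{s_c} - \beta\ket{s_a}, \qquad \alpha = \sqrt{d/(d+1)} .
\end{align}
This identity is the crucial step. It is what collapses the $d$-term expansion to two terms and makes the form~\eqref{eq:simplex-lowdin-form} independent of everything except $c$ and $a$. It holds for this explicit real realization. For any other simplex one first checks that the Gram data force the vectors to sum to zero: the norm squared of the sum is $(d+1) - (d+1)d/d = 0$.

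\textbf{Step 4: simplify the coefficient.} It remains to check that $\beta=\gamma$:
\begin{align}
\beta = \frac{1}{\sqrt d}\Bigl(1 - \frac{1}{\sqrt{d+1}}\Bigr) = \frac{\sqrt{d+1}-1}{\sqrt{d}\sqrt{d+1}} = \gamma .
\end{align}

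\textbf{Sanity check.} Using $\langle s_c|s_a\rangle = -1/d$, the norm of the result is
\begin{align}
\alpha^2 + \gamma^2 + \tfrac{2\alpha\gamma}{d} = 1 .
\end{align}
This agrees with the orthonormality guaranteed by Eq.~\eqref{eq:lowdin-orthonormality}.

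The main obstacle is only bookkeeping: keeping the projector normalization ($P=\mathbf{1}\mathbf{1}^T/d$, eigenvalue $1/d$ on $\mathbf{1}$) straight so that $\beta$ comes out exactly as $\gamma$.
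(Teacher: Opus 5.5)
Your proof is correct and follows the paper's argument essentially step for step. You use the same spectral form $G_X^{-1/2} = \sqrt{d/(d+1)}\,(I-P) + \sqrt{d}\,P$, expand the L\"owdin vector, collapse the sum using $\sum_j \ket{s_j} = 0$, and then identify the constant coefficient with $\gamma$; your additional remarks (that the Gram data alone force the sum to vanish, and the unit-norm sanity check) are valid but not needed for the lemma.
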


\begin{proof}
From Lemma~\ref{lem:simplex-gram}, the inverse square root of the Gram matrix is
\begin{align}
    G_{x_a}^{-1/2} = \sqrt{\frac{d}{d+1}} I + \frac{1}{d}\left(\sqrt{d} - \sqrt{\frac{d}{d+1}}\right) \mathbf{1}\mathbf{1}^T. \label{eq:gram-inv-sqrt}
\end{align}
This follows from the spectral decomposition: if $P = \mathbf{1}\mathbf{1}^T/d$ projects onto the all-ones direction, then $G_X = \frac{d+1}{d}(I - P) + \frac{1}{d}P$, and taking the inverse square root of each eigenvalue gives Eq.~\eqref{eq:gram-inv-sqrt}.

The matrix elements are $[G_{x_a}^{-1/2}]_{c,c'} = \alpha \delta_{c,c'} + \gamma$ where $\alpha = \sqrt{d/(d+1)}$ and $\gamma = \frac{1}{d}(\sqrt{d} - \sqrt{d/(d+1)})$. By Definition~\ref{def:lowdin}, the L\"owdin vector is
\begin{align}
    |v_c^{x_a}\rangle &= \sum_{c' \in x_a} [G_{x_a}^{-1/2}]_{c',c} |s_{c'}\rangle \\
    &= \alpha |s_c\rangle + \gamma \sum_{c' \neq a} |s_{c'}\rangle.
\end{align}
The simplex vectors satisfy $\sum_{j=0}^d |s_j\rangle = 0$ (since they form a centered regular simplex), so $\sum_{c' \neq a} |s_{c'}\rangle = -|s_a\rangle$. Simplifying yields Eq.~\eqref{eq:simplex-lowdin-form}.

To verify Eq.~\eqref{eq:alpha-gamma-def}, we compute
\begin{align}
    \gamma = \frac{1}{d}\left(\sqrt{d} - \sqrt{\frac{d}{d+1}}\right) = \frac{1}{\sqrt{d}}\left(1 - \frac{1}{\sqrt{d+1}}\right) = \frac{\sqrt{d+1} - 1}{\sqrt{d(d+1)}}.
\end{align}
\end{proof}

The elegant form of Eq.~\eqref{eq:simplex-lowdin-form} is the key to the exact analysis: each L\"owdin vector is a linear combination of just two simplex vectors---the one for channel $c$, and the one for the excluded channel $a$.

\begin{lemma}[Cross-context overlap]\label{lem:simplex-overlap}
For distinct safe sets $x_a$ and $x_b$ with $a \neq b$, and any channel $c \in x_a \cap x_b$ (i.e., $c \notin \{a, b\}$), the overlap between L\"owdin vectors is
\begin{align}
    \mu \;:=\; \langle v_c^{x_a} | v_c^{x_b} \rangle = \alpha^2 + \frac{2\alpha\gamma}{d} - \frac{\gamma^2}{d}. \label{eq:mu-def}
\end{align}
This overlap depends only on the dimension $d$, not on the specific channels $a$, $b$, $c$.
\end{lemma}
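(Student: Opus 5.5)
The proof is a direct computation from the closed form of Lemma~\ref{lem:simplex-lowdin}. For the safe sets $x_a$ and $x_b$ with $c\notin\{a,b\}$, that lemma gives
\begin{align}
|v_c^{x_a}\rangle = \alpha|s_c\rangle - \gamma|s_a\rangle, \qquad |v_c^{x_b}\rangle = \alpha|s_c\rangle - \gamma|s_b\rangle,
\end{align}
with the same real coefficients $\alpha,\gamma$ from Eq.~\eqref{eq:alpha-gamma-def}. These coefficients do not depend on the excluded channel. I will expand the inner product bilinearly. Since $\alpha,\gamma$ are real, no conjugation issues arise:
\begin{align}
\langle v_c^{x_a}|v_c^{x_b}\rangle = \alpha^2\langle s_c|s_c\rangle - \alpha\gamma\langle s_c|s_b\rangle - \alpha\gamma\langle s_a|s_c\rangle + \gamma^2\langle s_a|s_b\rangle .
\end{align}

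Next I substitute the simplex inner products established in the proof of Lemma~\ref{lem:simplex-gram}. These are $\langle s_i|s_i\rangle=1$ and $\langle s_i|s_j\rangle=-1/d$ for $i\neq j$; they are real because the construction in Eq.~\eqref{eq:simplex_construction} uses real vectors. The hypotheses $a\neq b$ and $c\notin\{a,b\}$ guarantee that $a,b,c$ are pairwise distinct, so each of the three cross terms equals $-1/d$. This gives $\alpha^2 + 2\alpha\gamma/d - \gamma^2/d$, which is Eq.~\eqref{eq:mu-def}.

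Independence from $a,b,c$ follows at once. The expression uses only $\alpha$ and $\gamma$, which depend only on $d$, and the value $-1/d$, which is common to every distinct pair of simplex vectors. I could also invoke the $S_{d+1}$ permutation symmetry via Proposition~\ref{prop:lowdin-symmetry}, but the explicit formula makes this unnecessary.

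There is no real obstacle here. The only points needing care are, first, checking that all three indices are distinct so that every cross inner product equals $-1/d$ and none equals $1$. Second, the signs: two factors of $-\gamma$ meeting a negative inner product give $+\alpha\gamma/d$ twice, while $\gamma^2$ meets $-1/d$ and gives $-\gamma^2/d$. As an optional sanity check on the later closed form for $\mu$, I would substitute Eq.~\eqref{eq:alpha-gamma-def} and confirm that the result simplifies to $\bigl(d^3-3d-2+2(d+1)^{3/2}\bigr)/\bigl(d^2(d+1)\bigr)$.
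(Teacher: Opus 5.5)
Your proposal is correct and is essentially the paper's proof. Both substitute the closed forms $\alpha|s_c\rangle-\gamma|s_a\rangle$ and $\alpha|s_c\rangle-\gamma|s_b\rangle$ from Lemma~\ref{lem:simplex-lowdin}, expand the inner product bilinearly, and use the fact that $a,b,c$ are pairwise distinct, so each of the three cross terms equals $-1/d$; this gives $\alpha^2+2\alpha\gamma/d-\gamma^2/d$, which depends only on $d$.
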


\begin{proof}
Using Eq.~\eqref{eq:simplex-lowdin-form}:
\begin{align}
    \langle v_c^{x_a} | v_c^{x_b} \rangle &= (\alpha \langle s_c| - \gamma \langle s_a|)(\alpha |s_c\rangle - \gamma |s_b\rangle) \\
    &= \alpha^2 \langle s_c|s_c\rangle - \alpha\gamma \langle s_c|s_b\rangle - \alpha\gamma \langle s_a|s_c\rangle + \gamma^2 \langle s_a|s_b\rangle.
\end{align}
Since $c \notin \{a, b\}$ and $a \neq b$, all pairs are distinct, so $\langle s_c|s_c\rangle = 1$ and $\langle s_i|s_j\rangle = -1/d$ for the remaining three inner products. Substituting:
\begin{align}
    \langle v_c^{x_a} | v_c^{x_b} \rangle = \alpha^2 - \alpha\gamma\left(-\frac{1}{d}\right) - \alpha\gamma\left(-\frac{1}{d}\right) + \gamma^2\left(-\frac{1}{d}\right) = \alpha^2 + \frac{2\alpha\gamma}{d} - \frac{\gamma^2}{d}.
\end{align}
\end{proof}

\begin{lemma}[Operator structure]\label{lem:simplex-Ac}
For any channel $c \in [n]$, the operator $A_c = \sum_{a \neq c} |v_c^{x_a}\rangle\langle v_c^{x_a}|$ satisfies
\begin{align}
    \mathrm{tr}[A_c^2] = d + d(d-1)\mu^2, \label{eq:trace-Ac-squared}
\end{align}
where $\mu$ is defined in Eq.~\eqref{eq:mu-def}.
\end{lemma}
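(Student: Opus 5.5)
We expand $\mathrm{tr}[A_c^2]$ directly as a double sum of squared overlaps, exactly as in Eq.~\eqref{eq:Aexpand}:
\begin{align}
\mathrm{tr}[A_c^2] = \sum_{a \neq c}\sum_{b \neq c} \left|\langle v_c^{x_a} | v_c^{x_b}\rangle\right|^2 .
\end{align}
The safe sets containing $c$ are precisely $x_a = [n]\setminus\{a\}$ with $a \neq c$. Since $n = d+1$, there are exactly $d$ such sets, so the index set $\{a : a \neq c\}$ has $d$ elements.

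We split the double sum into two parts.

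\textbf{Diagonal terms.} For $a = b$, Lemma~\ref{lem:simplex-lowdin} shows that $|v_c^{x_a}\rangle$ is a L\"owdin vector. The simplex Gram matrix is positive definite by Lemma~\ref{lem:simplex-gram} (its smallest eigenvalue is $1/d > 0$), so Eq.~\eqref{eq:lowdin-orthonormality} applies. Each such vector is therefore a unit vector, and the $d$ diagonal terms contribute exactly $d$.

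\textbf{Off-diagonal terms.} For the $d(d-1)$ ordered pairs $(a,b)$ with $a \neq b$, both different from $c$, the channels $a$, $b$, $c$ are pairwise distinct. Lemma~\ref{lem:simplex-overlap} then gives $\langle v_c^{x_a} | v_c^{x_b}\rangle = \mu$, independent of $a$ and $b$. This value is real, because $\alpha$, $\gamma$ and the simplex inner products $-1/d$ are all real. Hence each squared modulus equals $\mu^2$, and these terms contribute $d(d-1)\mu^2$.

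Adding the two parts gives Eq.~\eqref{eq:trace-Ac-squared}.

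The only point requiring care is that the index sets are counted correctly. In particular, the off-diagonal pairs all satisfy the distinctness hypothesis of Lemma~\ref{lem:simplex-overlap}; this holds because $x_a$ and $x_b$ both contain $c$ exactly when $a, b \neq c$. No further obstacle is expected, since all the geometric content has already been established in the preceding lemmas.
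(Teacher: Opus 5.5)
Your proposal is correct and follows the same route as the paper. Both arguments expand $\mathrm{tr}[A_c^2]$ as the double sum over $a,b\neq c$: the $d$ diagonal terms are unit overlaps, and the $d(d-1)$ off-diagonal terms each equal $\mu^2$ by Lemma~\ref{lem:simplex-overlap}. Your extra checks are small but welcome additions to the paper's terser argument: that $G_{x_a}$ is positive definite, so the L\"owdin vectors are unit vectors, and that $\mu$ is real, so $|\mu|^2=\mu^2$.
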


\begin{proof}
Expanding the trace:
\begin{align}
    \mathrm{tr}[A_c^2] = \sum_{a, b \neq c} |\langle v_c^{x_a} | v_c^{x_b} \rangle|^2.
\end{align}
This sum has $d^2$ terms (since there are $d$ choices for each of $a$ and $b$ from $[n] \setminus \{c\}$). When $a = b$, we have $|\langle v_c^{x_a} | v_c^{x_a} \rangle|^2 = 1$ by orthonormality, contributing $d$ terms. When $a \neq b$, Lemma~\ref{lem:simplex-overlap} gives $|\langle v_c^{x_a} | v_c^{x_b} \rangle|^2 = \mu^2$, contributing $d(d-1)$ terms. Thus Eq.~\eqref{eq:trace-Ac-squared} follows.
\end{proof}

\begin{proof}[Proof of Theorem~\ref{thm:simplex-exact}]
From Eq.~\eqref{eq:omega-from-Ac}, the quantum winning probability is
\begin{align}
    \omega_q = \frac{1}{d|S|^2} \sum_{c \in [n]} \mathrm{tr}[A_c^2].
\end{align}
For the simplex, $|S| = \binom{d+1}{1} = d+1$ and $n = d+1$. By symmetry, $\mathrm{tr}[A_c^2]$ is independent of $c$, so using Lemma~\ref{lem:simplex-Ac}:
\begin{align}
    \omega_q &= \frac{(d+1) \cdot (d + d(d-1)\mu^2)}{d(d+1)^2} = \frac{d(1 + (d-1)\mu^2)}{d(d+1)} = \frac{1 + (d-1)\mu^2}{d+1}.
\end{align}

It remains to express $\mu$ explicitly. Using Eq.~\eqref{eq:alpha-gamma-def}:
\begin{align}
    \alpha^2 &= \frac{d}{d+1}, \\
    \alpha\gamma &= \sqrt{\frac{d}{d+1}} \cdot \frac{\sqrt{d+1} - 1}{\sqrt{d(d+1)}} = \frac{\sqrt{d+1} - 1}{d+1}, \\
    \gamma^2 &= \frac{(\sqrt{d+1} - 1)^2}{d(d+1)} = \frac{d + 2 - 2\sqrt{d+1}}{d(d+1)}.
\end{align}
Substituting into Eq.~\eqref{eq:mu-def} and simplifying:
\begin{align}
    \mu = \frac{d^3 - 3d - 2 + 2(d+1)^{3/2}}{d^2(d+1)}.
\end{align}
\end{proof}

\subsubsection{Asymptotic behavior and the crossover phenomenon}

A natural question is how the quantum-to-classical ratio $\omega_q^{\mathrm{simplex}}/\omega_c$ behaves as the dimension increases. Remarkably, the simplex strategy exhibits a \emph{crossover}: it provides quantum advantage only for small dimensions.

\begin{proposition}[Crossover dimension]\label{prop:simplex-crossover}
The simplex strategy satisfies $\omega_q^{\mathrm{simplex}} > \omega_c$ if and only if $d \leq 5$. The crossover occurs at $d^* \approx 5.70$, where $\omega_q^{\mathrm{simplex}} = \omega_c$.
\end{proposition}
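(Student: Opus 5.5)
The plan is to reduce the comparison to a polynomial inequality in the single variable $s:=\sqrt{d+1}$, and then locate its unique root.

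\emph{Step 1: the classical value.} By Theorem~\ref{thm:classical-value} with $k=1$ and $n=d+1$, the sum has only two terms:
\begin{align}
\omega_c(d+1,1)=\frac{d^2+1}{(d+1)^2}.
\end{align}

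\emph{Step 2: reduce to a condition on $\mu$.} Take $\omega_q^{\mathrm{simplex}}=(1+(d-1)\mu^2)/(d+1)$ from Theorem~\ref{thm:simplex-exact}. Multiply $\omega_q^{\mathrm{simplex}}>\omega_c$ through by $(d+1)^2$ and cancel the common factor $d-1>0$ (we take $d\ge 2$). The condition becomes
\begin{align}
\mu^2>\frac{d}{d+1}.
\end{align}

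\emph{Step 3: simplify $\mu$.} Two observations do this.
\begin{itemize}
\item The cubic in the numerator factors as $d^3-3d-2=(d+1)^2(d-2)$.
\item Substituting $d=s^2-1$, the numerator becomes $s^3(s^3-3s+2)=s^3(s-1)^2(s+2)$, and the denominator becomes $(s-1)^2(s+1)^2s^2$.
\end{itemize}
After cancellation,
\begin{align}
\mu=\frac{s(s+2)}{(s+1)^2}.
\end{align}
This is positive, so squaring the inequality loses nothing.

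\emph{Step 4: the polynomial inequality.} With $d/(d+1)=(s^2-1)/s^2$, the condition reads $s^4(s+2)^2>(s-1)(s+1)^5$. Expanding $(s-1)(s+1)^5=s^6+4s^5+5s^4-5s^2-4s-1$, the degree-six and degree-five terms cancel. What remains is $q(s)<0$, where
\begin{align}
q(s):=s^4-5s^2-4s-1 .
\end{align}

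\emph{Step 5: locate the root.} The main work is showing that $q$ has exactly one positive root and that it lies between $s=\sqrt6$ and $s=\sqrt7$. The integers $d\ge2$ correspond to $s\ge\sqrt3$. The argument uses three facts.
\begin{itemize}
\item $q(0)=-1<0$.
\item $q''(s)=12s^2-10$ is positive for $s>\sqrt{5/6}$, and $q'(s)<0$ near $0$ (where $q'(0)=-4$). Hence $q$ first decreases, then increases, and crosses zero exactly once on $(0,\infty)$.
\item Direct evaluation gives $q(\sqrt6)=5-4\sqrt6<0$, which is $d=5$, and $q(\sqrt7)=13-4\sqrt7>0$, which is $d=6$.
\end{itemize}
Therefore advantage holds exactly for $2\le d\le5$. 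Solving $q(s)=0$ numerically gives $s^*\approx2.588$, so $d^*=s^{*2}-1\approx5.70$, matching the stated crossover.

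The one step I expect to need care is the algebraic simplification of $\mu$ in Step 3. Once it is in that form, the remaining steps are routine.
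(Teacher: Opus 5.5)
Your proof is correct, and it takes a different and more rigorous route than the paper's. Both arguments start from the same reduction to $(d+1)\mu^2>d$. After that the paper works numerically and asymptotically. It tabulates $(d+1)\mu^2-d$ for $d=2,\dots,7$, and it uses the expansion $-1+4/\sqrt d+O(d^{-1})$ to conclude failure for $d\ge 17$. This leaves $d=8,\dots,16$ formally uncovered, it rests on a remainder term whose constant is never given, and it obtains $d^*\approx 5.70$ only by numerical root-finding. Your substitution $s=\sqrt{d+1}$ makes the numerator factor as $s^3(s-1)^2(s+2)$, so $\mu=s(s+2)/(s+1)^2=1-(s+1)^{-2}$. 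The condition then becomes the single quartic inequality $s^4-5s^2-4s-1<0$. Its unique positive root lies in $(\sqrt6,\sqrt7)$, which settles every integer $d\ge2$ at once with no error terms and characterises $d^*=s^{*2}-1$ exactly. I checked both factorizations and the expansion of $(s-1)(s+1)^5$, and they are right. Your formula also reproduces the paper's values $\mu=\sqrt3/2,\ 8/9,\ 0.9045,\dots$ and its asymptotic $-1+4/\sqrt d$. So the paper's heuristic that $\omega_q\approx 1-3/d$ falls behind $\omega_c\approx 1-2/d$ follows directly from your closed form. There is one small tightening in Step~5. You need $q'<0$ on all of $[0,\sqrt{5/6}]$, not just near $0$; this holds because $q''<0$ there. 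Alternatively, Descartes' rule of signs gives the unique positive root immediately, since the coefficients of $q$ change sign only once. As a side benefit, your closed form shows that the paper's tabulated values of $(d+1)\mu^2-d$ for $d\ge3$ are numerically wrong. For example, $\mu=8/9$ gives $13/81$ at $d=3$, not $47/729$. The signs in that table, which are all the argument uses, are nevertheless correct.
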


\begin{proof}
The condition $\omega_q^{\mathrm{simplex}} > \omega_c$ reduces to $(d+1)\mu^2 > d$, which we verify by direct computation for small $d$ and asymptotic analysis for large $d$.

\emph{Small dimensions:} Substituting the exact formula $\mu = (d^3 - 3d - 2 + 2(d+1)^{3/2})/(d^2(d+1))$ and evaluating $(d+1)\mu^2 - d$ for $d = 2, \ldots, 7$:
\begin{center}
\begin{tabular}{c|cccccc}
$d$ & 2 & 3 & 4 & 5 & 6 & 7 \\
\hline
$(d+1)\mu^2 - d$ & $1/4$ & $47/729$ & $0.013$ & $0.0017$ & $-0.0029$ & $-0.0055$
\end{tabular}
\end{center}
The sign changes between $d = 5$ (positive) and $d = 6$ (negative).

\emph{Large dimensions:} From the exact formula for $\mu$, expanding $(d+1)^{3/2} = d^{3/2}(1 + \frac{3}{2d} + O(d^{-2}))$ yields $\mu = 1 - 1/d + 2/d^{3/2} + O(d^{-2})$ and hence $\mu^2 = 1 - 2/d + 4/d^{3/2} + O(d^{-2})$, so
\begin{align}
(d+1)\mu^2 - d = d\mu^2 + \mu^2 - d = -1 + 4/\sqrt{d} + O(d^{-1}).
\end{align}
For $d \geq 17$, we have $4/\sqrt{d} < 1$, so $(d+1)\mu^2 - d < 0$. Combined with the numerical evaluation above, the condition fails for all integers $d \geq 6$.

Treating $d$ as a continuous variable, the crossover $d^*$ satisfies $(d^*+1)\mu(d^*)^2 = d^*$. Numerical root-finding gives $d^* \approx 5.70$.
\end{proof}

Table~\ref{tab:simplex-ratio} summarizes the quantum-to-classical ratio for small dimensions.

\begin{table}[ht]
\centering
\begin{tabular}{c|cccc}
$d$ & $\mu$ & $\omega_q$ & $\omega_c$ & $\omega_q/\omega_c$ \\
\hline
2 & $\sqrt{3}/2$ & $7/12$ & $5/9$ & $21/20 = 1.050$ \\
3 & $8/9$ & $209/324$ & $5/8$ & $1.032$ \\
4 & $0.9045$ & $0.6909$ & $0.68$ & $1.016$ \\
5 & $0.9160$ & $0.7260$ & $0.7222$ & $1.005$ \\
6 & $0.9248$ & $0.7537$ & $0.7551$ & $0.998$ \\
\end{tabular}
\caption{Quantum-to-classical ratio for the simplex strategy in $(d+1, 1)$-jamming games.}
\label{tab:simplex-ratio}
\end{table}

To understand this crossover, we analyze the asymptotic behavior of both quantities. For the classical value, Theorem~\ref{thm:classical-value} with $n = d+1$ and $k = 1$ gives
\begin{align}
    \omega_c(d+1, 1) = \frac{d^2 + 1}{(d+1)^2} = 1 - \frac{2d}{(d+1)^2} = 1 - \frac{2}{d} + O(d^{-2}). \label{eq:omega-c-asymp}
\end{align}

For the quantum value, we first expand the cross-context overlap $\mu$. Using $(d+1)^{3/2} = d^{3/2}(1 + \frac{3}{2d} + O(d^{-2}))$:
\begin{align}
    \mu &= \frac{d^3 - 3d - 2 + 2d^{3/2} + 3d^{1/2} + O(d^{-1/2})}{d^3 + d^2} = 1 - \frac{1}{d} + \frac{2}{d^{3/2}} + O(d^{-2}). \label{eq:mu-asymp}
\end{align}
Squaring and substituting into the expression for $\omega_q$:
\begin{align}
    \mu^2 &= 1 - \frac{2}{d} + \frac{4}{d^{3/2}} + O(d^{-2}), \\
    (d-1)\mu^2 &= d - 3 + \frac{4}{\sqrt{d}} + O(d^{-1/2}), \\
    \omega_q^{\mathrm{simplex}} &= \frac{d - 2 + \frac{4}{\sqrt{d}} + O(d^{-1/2})}{d+1} = 1 - \frac{3}{d} + \frac{4}{d^{3/2}} + O(d^{-2}). \label{eq:omega-q-asymp}
\end{align}

Comparing Eq.~\eqref{eq:omega-c-asymp} and Eq.~\eqref{eq:omega-q-asymp}:
\begin{align}
    \frac{\omega_q^{\mathrm{simplex}}}{\omega_c} = \frac{1 - \frac{3}{d} + O(d^{-3/2})}{1 - \frac{2}{d} + O(d^{-2})} = 1 - \frac{1}{d} + O(d^{-3/2}) < 1 \quad \text{for large } d. \label{eq:ratio-asymp}
\end{align}

The key insight is that while both $\omega_q$ and $\omega_c$ approach 1 as $d \to \infty$, the classical value approaches 1 \emph{faster}. The simplex strategy's quantum winning probability has a larger deficit from unity ($3/d$ versus $2/d$), leading to a ratio below 1.

Physically, this reflects the fact that the $(d+1, 1)$-jamming game becomes easy classically when $d$ is large: with only one channel jammed out of $d+1$, Alice and Bob's safe sets overlap in $d-1$ channels, and the greedy classical strategy coordinates efficiently. The simplex seed frame, constrained by its rigid geometric structure, cannot adapt to exploit this regime.

\begin{remark}[Implications for frame selection]
The crossover phenomenon highlights that optimal frame selection depends on the game parameters. The simplex, while providing the simplest analytical example, is not universally optimal. For $(d+1, 1)$-games with $d \geq 6$, one should either use a different seed frame or simply employ the classical strategy. This motivates the study of other frame constructions (harmonic, MUB, SIC-POVM) that may provide advantage in different parameter regimes.
\end{remark}

%% file: otherStrategies.tex
\subsection{Other Strategies}

We now consider three other possible choices of seed vectors: SIC-POVMS (Section~\ref{sec:sic}), Mutually Unbiased Bases (MUBs) (Section~\ref{sec:mub}), and AllTop Frames (Section~\ref{sec:alltop}). 
We numerically compare their performance, including to Harmonic and Simplex seed vectors, in Section~\ref{sec:numericalCompare}.

\subsubsection{A SIC-POVM Strategy}
\label{sec:sic}

We now consider seed vectors based on Symmetric Informationally Complete POVMs
(SIC-POVMs)~\cite{Zauner2011, Renes2004}. A SIC-POVM in dimension $d$ consists
of $d^2$ unit vectors $\{|\phi_c\rangle\}_{c=0}^{d^2-1}$ in $\mathbb{C}^d$
satisfying the equiangular property
\begin{align}
    |\langle\phi_a|\phi_b\rangle|^2 = \frac{1}{d+1} \quad \text{for all } a \neq b.
    \label{eq:sic-equiangular}
\end{align}
This is precisely the Welch bound~\cite{welch1974} for $n = d^2$ vectors in
$\mathbb{C}^d$, making SIC-POVMs equiangular tight frames~\cite{Waldron2018}.
Exact SIC-POVM constructions are known in all dimensions up to~53, as well as in
several higher dimensions~\cite{Scott2010, Appleby2018, Fuchs2017}; they are
conjectured to exist in every dimension $d \geq 2$~\cite{Zauner2011}.

Intuititvely, one would expect SIC-POVMs to perform very well in dimensions where they can be found: Since SIC-POVMs saturate the Welch bound, their coherence
$\delta_{\mathrm{SIC}} = 1/\sqrt{d+1}$ is the smallest achievable by any frame of
$d^2$ vectors in $\mathbb{C}^d$. Low coherence keeps the Gram matrices $G_X$ close
to the identity, so the L\"owdin orthonormalization acts as a mild perturbation of
the seed vectors. The orthonormalized vectors $|v^c_X\rangle$ therefore retain
significant overlap with the original $|\phi_c\rangle$, preserving the cross-context
alignment that drives large $\operatorname{tr}[A_c^2]$ in
Eq.~\eqref{eq:inTermsOfA}. Moreover, the equiangular property ensures this
favorable conditioning is uniform across all safe sets---unlike, say, harmonic frames
(Section~\ref{sec:harmonic}), where the overlap magnitude depends on the channel
separation.

Mapping to our jamming game, SIC-POVMs yield $n = d^2$ channels with safe set size
$d$, corresponding to $k = d(d-1)$ jammed channels. More generally, a SIC-POVM can
serve as the seed frame for any $(n,k)$-jamming game with $n \leq d^2$ and $k = n - d$.
In our numerical studies below, we restrict ourselves to the case $n = d^2$.

\paragraph{Numerical quantum advantage.}
We numerically evaluate the quantum winning probability for SIC-POVM seed frames with $n = d^2$ channels. Table~\ref{tab:sic-advantage} reports the results for $d = 2, 3, 4$.

\begin{table}[ht]
\centering
\begin{tabular}{ccc|ccc}
\toprule
$d$ & $n$ & $k$ & $\omega_q^{\mathrm{SIC}}$ & $\omega_c$ & $\omega_q/\omega_c$ \\
\midrule
2 & 4 & 2 & 0.4167 & 0.3889 & 1.071 \\
3 & 9 & 6 & 0.2343 & 0.2262 & 1.036 \\
4 & 16 & 12 & 0.1675 & 0.1580 & 1.060 \\
\bottomrule
\end{tabular}
\caption{Quantum winning probability for SIC-POVM seed frames with $n = d^2$ channels.}
\label{tab:sic-advantage}
\end{table}

In all cases tested, the SIC-POVM strategy achieves quantum advantage, with improvement ranging from $3.6\%$ ($d = 3$) to $7.1\%$ ($d = 2$). At $(n, d) = (4, 2)$, the SIC-POVM value $\omega_q = 5/12$ matches the NPA hierarchy bound (Section~\ref{sec:numericalCompare}), certifying this as the true quantum value of the $(4, 2)$-jamming game.

\subsubsection{A MUB Strategy}
\label{sec:mub}

We now consider seed vectors constructed from mutually unbiased bases
(MUBs)~\cite{Schwinger1960, WoottersFields1989}. Two orthonormal bases
$\mathcal{B}_1 = \{|e_i\rangle\}_{i=1}^{d}$ and
$\mathcal{B}_2 = \{|f_j\rangle\}_{j=1}^{d}$ in $\mathbb{C}^d$ are called
\emph{mutually unbiased} if
\begin{align}
    |\langle e_i | f_j \rangle|^2 = \frac{1}{d} \quad \text{for all } i, j.
    \label{eq:mub_condition}
\end{align}
This concept was introduced by Schwinger~\cite{Schwinger1960} and later formalized
by Wootters and Fields~\cite{WoottersFields1989} (see Durt
et al.~\cite{DurtEnglertBengtssonZyczkowski2010} for a comprehensive review). The
maximum number of pairwise mutually unbiased bases in dimension~$d$ is at most
$d + 1$, and this bound is achieved whenever $d = p^r$ is a prime
power~\cite{WoottersFields1989, BandyopadhyayBoykinRoychowdhuryVatan2002}. We restrict ourselves to this case, and consider a full set of MUBs.
We remark an interesting connection to classical communications: discrete chirp
sequences, used since the 1960s for jamming-resistant spread-spectrum
communication~\cite{WinklerChirp1962, SpringerSpread2000}, are instances of MUB
constructions in prime dimensions~\cite{BandyopadhyayBoykinRoychowdhuryVatan2002,
KlappeneckerRoetteler2004}.

When a complete set of $d+1$ MUBs exists, we can construct our seed frame by taking the
union of all bases:
\begin{align}
    \mathcal{F}_{\mathrm{MUB}} = \bigcup_{m=0}^{d} \mathcal{B}_m, \quad
    \text{yielding } n = d(d + 1) \text{ seed vectors}.
    \label{eq:mub_seed_frame}
\end{align}
The coherence structure of this frame has two levels: vectors within the same basis are
orthogonal, while vectors from different bases satisfy
Eq.~\eqref{eq:mub_condition}. The coherence is therefore
$\delta_{\mathrm{MUB}} = 1/\sqrt{d}$, which is close to, but does not exactly achieve, the Welch
bound~\cite{welch1974} $\delta_{\mathrm{Welch}} = \sqrt{d/(d^2+d-1)}$ for $d(d+1)$ vectors in $\mathbb{C}^d$
(the ratio $\delta_{\mathrm{MUB}}/\delta_{\mathrm{Welch}} = \sqrt{1 + (d-1)/d^2} \to 1$ as $d \to \infty$).
Nonetheless, the low coherence keeps the Gram matrices $G_X$
well-conditioned, so the L\"owdin orthonormalization preserves cross-context alignment.

Mapping to our jamming game, the full MUB frame yields $n = d(d+1)$ channels with
safe set size~$d$, corresponding to $k = d^2$ jammed channels. More generally, a
subset of the MUB vectors can serve as the seed frame for any $(n,k)$-jamming game
with $n \leq d(d+1)$ and $k = n - d$. In our numerical studies below, we restrict
ourselves to the case $n = d(d+1)$, i.e., we use the full set of MUBs in
dimension~$d$.

\paragraph{Numerical quantum advantage.}
We numerically evaluate the quantum winning probability for MUB seed frames with $n = d(d+1)$ channels. Table~\ref{tab:mub-advantage} reports the results for $d = 2, 3, 4$.

\begin{table}[ht]
\centering
\begin{tabular}{ccc|ccc}
\toprule
$d$ & $n$ & $k$ & $\omega_q^{\mathrm{MUB}}$ & $\omega_c$ & $\omega_q/\omega_c$ \\
\midrule
2 & 6 & 4 & 0.2644 & 0.2444 & 1.082 \\
3 & 12 & 9 & 0.1698 & 0.1641 & 1.035 \\
4 & 20 & 16 & 0.1296 & 0.1237 & 1.048 \\
\bottomrule
\end{tabular}
\caption{Quantum winning probability for MUB seed frames with $n = d(d+1)$ channels.}
\label{tab:mub-advantage}
\end{table}

The MUB strategy achieves the largest quantum advantage among all explicit frame constructions tested: an $8.2\%$ improvement over the classical value at $(n, d) = (6, 2)$. In all cases, the MUB strategy provides quantum advantage, and numerical optimization (Section~\ref{sec:numericalCompare}) confirms that the MUB frame is optimal at $(n, d) = (6, 2)$.

\subsubsection{An AllTop Strategy}
\label{sec:alltop}

Finally, we consider seed vectors constructed from AllTop sequences~\cite{Alltop1980},
which were originally introduced in the context of radar and spread-spectrum
communications. The AllTop construction exploits the arithmetic structure of finite
fields to produce vectors with controlled inner products. Unlike the MUB construction
of Section~\ref{sec:mub}, which organizes vectors into mutually unbiased \emph{bases}
with orthogonality within each basis, AllTop vectors form a frame with no orthogonal
pairs but with uniformly bounded pairwise overlaps. AllTop frames are known to exist
whenever $n \geq 5$ is prime, with generalizations to prime powers~\cite{Ding2007}.

Let $n \geq 5$ be prime and let $d \leq n$ be a positive integer. Let
$\omega = e^{2\pi i/n}$ denote a primitive $n$-th root of unity. For each channel
$c \in \{0, 1, \ldots, n-1\}$, define the seed vector
$|\phi_c\rangle \in \mathbb{C}^d$ by
\begin{align}\label{eq:alltop-def}
    |\phi_c\rangle = \frac{1}{\sqrt{d}} \sum_{j=0}^{d-1} \omega^{c j^3} \ket{j}.
\end{align}
This construction can be used in an $(n, n-d)$-jamming game for any prime $n \geq 5$ and any
safe set size $d \leq n$. The key advantage is universality: unlike simplex frames
(restricted to $k = 1$), MUB frames (restricted to $n = d(d+1)$ with $d$ a prime
power), or SIC-POVM frames (restricted to $n = d^2$), AllTop frames exist for all
prime $n \geq 5$ with arbitrary $d \leq n$, imposing no arithmetic constraints between
the game parameters $n$ and~$d$.

\paragraph{Numerical quantum advantage.}
Unlike the equiangular frames above, the AllTop frame exhibits highly variable performance. For $d = 2$, it achieves a consistent $\sim\!6\%$ advantage across all prime $n$ tested (e.g., $\omega_q/\omega_c \approx 1.061$ at $(n, d) = (5, 2)$). However, for $d \geq 3$ the performance is erratic: certain $(n, d)$ combinations yield substantial quantum advantage (e.g., $+6.9\%$ at $(n, d) = (19, 3)$), while others lead to winning probabilities far below the classical value (e.g., $\omega_q/\omega_c \approx 0.28$ at $(n, d) = (7, 6)$). Of 28 $(n, d)$ cases tested, 17 show quantum advantage, but 11 show the classical strategy outperforming the AllTop quantum strategy. This instability arises because, unlike SIC-POVMs and MUBs, the AllTop inner product magnitudes depend on the channel separation $c_2 - c_1 \pmod{n}$ through the cubic phase structure, and can cause near-linear dependence in certain safe sets, severely distorting the L\"owdin orthonormalization.

\subsubsection{Numerical Comparison}
\label{sec:numericalCompare}

We now compare the quantum winning probabilities achieved by all explicit frame constructions, and relate them to upper bounds from the SDP hierarchies~\cite{NPA2008, DLTW2008} and the non-signaling (NS) value. Software for the numerical
investigations can be found at~\cite{JammingNumerics}.

\paragraph{Comparison of explicit frames.}
Table~\ref{tab:frame-comparison-SM} compares the quantum winning probabilities for all frame constructions and the numerically optimized seed frame across small $(n, d)$ cases. The optimized column reports the result of numerical optimization over all seed frames in $\mathbb{C}^d$ using L-BFGS-B with multiple random restarts (see Section~\ref{sec:seedAnsatz} for details on the methodology).

\begin{table}[ht]
\centering
\small
\begin{tabular}{ccc|c|ccccc|c}
\toprule
$n$ & $d$ & $k$ & $\omega_c$ & Simplex & Harmonic & MUB & SIC & AllTop & Opt. \\
\midrule
3 & 2 & 1 & 0.5556 & \textbf{0.5833} & 0.5833 & -- & -- & -- & 0.5833 \\
4 & 2 & 2 & 0.3889 & -- & 0.4119 & -- & \textbf{0.4167} & -- & 0.4167 \\
5 & 2 & 3 & 0.3000 & -- & 0.3184 & -- & -- & 0.3184 & \textbf{0.3230} \\
6 & 2 & 4 & 0.2444 & -- & 0.2595 & \textbf{0.2644} & -- & -- & 0.2644 \\
\midrule
4 & 3 & 1 & 0.6250 & \textbf{0.6451} & 0.6451 & -- & -- & -- & 0.6451 \\
5 & 3 & 2 & 0.4600 & -- & 0.4767 & -- & -- & 0.4767 & \textbf{0.4808} \\
6 & 3 & 3 & 0.3650 & -- & 0.3783 & -- & -- & -- & \textbf{0.3852} \\
\midrule
5 & 4 & 1 & 0.6800 & \textbf{0.6909} & 0.6909 & -- & -- & 0.6909 & 0.6909 \\
6 & 4 & 2 & 0.5200 & -- & 0.5290 & -- & -- & -- & \textbf{0.5333} \\
\midrule
6 & 5 & 1 & 0.7222 & \textbf{0.7260} & 0.7260 & -- & -- & -- & 0.7260 \\
7 & 6 & 1 & 0.7551 & 0.7537 & 0.7537 & -- & -- & 0.2097 & \textbf{0.7537} \\
\bottomrule
\end{tabular}
\caption{Comparison of quantum winning probabilities for different seed frame constructions. Bold indicates the best value for each case. A dash (--) indicates the frame does not exist for those parameters. The last column (Opt.)\ reports the result of numerical optimization over all seed frames. At $(n, d) = (7, 6)$, even the optimized quantum value falls below $\omega_c$.}
\label{tab:frame-comparison-SM}
\end{table}

Figure~\ref{fig:strategy-grid} visualizes the best-performing explicit frame construction at each $(n, d)$ point.

\begin{figure}[ht]
\centering
\includegraphics[width=0.85\textwidth]{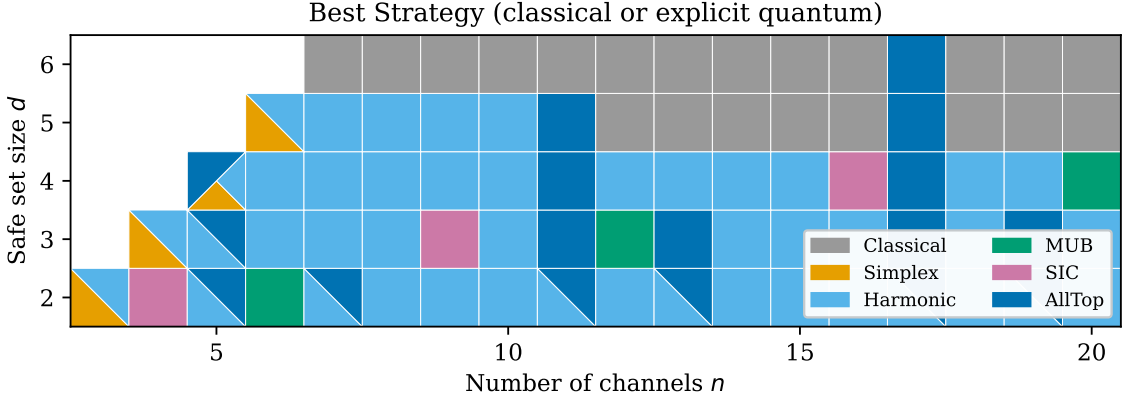}
\caption{Best explicit quantum strategy at each $(n, d)$ point. Each cell is colored by the frame achieving the highest quantum winning probability: Simplex (available only at $n = d+1$, i.e.\ $k = 1$), Harmonic (available for all $n \geq d$), MUB (available at $n = d(d+1)$ when $d$ is a prime power), SIC-POVM (available at $n = d^2$ for known dimensions), and AllTop (available for prime $n \geq 5$ with any $d \leq n$). The harmonic frame dominates the interior of the grid due to its universality, while the structured frames (MUB, SIC, simplex) are optimal at their special parameter points.}
\label{fig:strategy-grid}
\end{figure}

Several patterns are apparent from Table~\ref{tab:frame-comparison-SM}:
\begin{itemize}
\item The simplex, MUB, and SIC frames are optimal wherever they exist: numerical optimization finds no improvement over these frames at their respective parameter values $(n, d) = (d+1, d)$, $(d(d+1), d)$, and $(d^2, d)$.
\item The harmonic frame, while available for all $(n, d)$, is suboptimal in most cases---typically by $0.3\text{--}1.8\%$ compared to the optimized seed frame. Its main value lies in universality: it provides quantum advantage for all $d \leq 4$ without requiring special values of $n$.
\item The AllTop frame performs well for $d = 2$ but is unreliable for $d \geq 3$. At $(n, d) = (7, 6)$, the AllTop strategy is dramatically worse than classical ($\omega_q/\omega_c \approx 0.28$).
\item At $d = 6$ with $k = 1$, neither the simplex nor harmonic frame achieves quantum advantage, confirming the crossover phenomenon of Proposition~\ref{prop:simplex-crossover}. Even numerical optimization over all seed frames yields $\omega_q^{\mathrm{opt}} = 0.7537 < 0.7551 = \omega_c$.
\end{itemize}

\paragraph{Upper bounds: SDP hierarchy and non-signaling value.}
To assess how close our strategies come to the true quantum value, we computed upper bounds from the first level of the Navascu\'es--Pironio--Ac\'in (NPA) hierarchy~\cite{NPA2008} (or its dual put forward by Doherty, Liang, Toner and Wehner~\cite{DLTW2008}) and the non-signaling (NS) value via linear programming. Table~\ref{tab:npa-comparison} reports these bounds for cases where computation was feasible.

\begin{table}[ht]
\centering
\begin{tabular}{ccc|cccc}
\toprule
$n$ & $d$ & $k$ & $\omega_c$ & $\omega_q^{\mathrm{opt}}$ & NPA$_1$ & $\omega_{\mathrm{NS}}$ \\
\midrule
3 & 2 & 1 & 0.5556 & 0.5833 & 0.5833 & 0.6667 \\
4 & 2 & 2 & 0.3889 & 0.4167 & 0.4167 & 0.5000 \\
5 & 2 & 3 & 0.3000 & 0.3230 & 0.3250 & 0.4000 \\
4 & 3 & 1 & 0.6250 & 0.6451 & 0.6944 & 0.7500 \\
5 & 3 & 2 & 0.4600 & 0.4808 & 0.5333 & 0.6000 \\
5 & 4 & 1 & 0.6800 & 0.6909 & 0.7625 & 0.8000 \\
\bottomrule
\end{tabular}
\caption{Comparison of the optimized quantum value with the NPA level-1 upper bound and the non-signaling value $\omega_{\mathrm{NS}}$.}
\label{tab:npa-comparison}
\end{table}

For the two smallest cases---$(n, d) = (3, 2)$ and $(4, 2)$---the optimized quantum value matches the NPA$_1$ bound to machine precision, certifying these as the exact quantum values of the respective jamming games. The non-signaling value significantly exceeds the quantum value in all cases (by $14\text{--}16\%$), indicating that the jamming game exhibits a gap between quantum and non-signaling correlations analogous to what is observed in other nonlocal games.

For $d \geq 3$, the NPA$_1$ bound becomes loose: the gap between $\omega_q^{\mathrm{opt}}$ and NPA$_1$ ranges from $5\text{--}10\%$, suggesting that higher levels of the NPA hierarchy or tailored relaxations would be needed to obtain tight bounds. This is consistent with the general observation that NPA level~1 provides tight bounds primarily for games with small output alphabets.

\subsubsection{Is the seed vector Ansatz restrictive?}
\label{sec:seedAnsatz}

Our quantum strategies are constructed by choosing $n$ seed vectors in $\mathbb{C}^d$ and applying L\"owdin orthonormalization within each safe set (Section~\ref{sec:lowdin}). A natural question is whether this Ansatz is restrictive: could one achieve higher winning probabilities by optimizing directly over all rank-1 measurement operators (vectors), without the constraint that they arise from a common seed frame?

To investigate this, we performed two parallel numerical optimizations for each $(n, d)$ pair with $|\mathcal{S}| \leq 30$:
\begin{itemize}
\item Seed optimization: Optimize over $n$ unit vectors $\{|\phi_c\rangle\}_{c \in [n]}$ in $\mathbb{C}^d$, with measurements obtained by L\"owdin orthonormalization. The search space has $n(2d - 1)$ real parameters.

\item General rank-1 optimization: Optimize directly over all measurement vectors $\{|v_x^c\rangle\}$, subject to orthonormality within each safe set and the synchronicity constraint (Bob uses conjugated bases). This search space is significantly larger.
\end{itemize}

Both optimizations use L-BFGS-B with 10 random restarts, convergence tolerance $10^{-10}$, and 1000 maximum iterations per restart. Table~\ref{tab:ansatz-comparison} reports the results.

\begin{table}[ht]
\centering
\begin{tabular}{cc|cc|c}
\toprule
$n$ & $d$ & $\omega_q^{\mathrm{seed}}$ & $\omega_q^{\mathrm{general}}$ & Gap \\
\midrule
3 & 2 & 0.583333 & 0.583333 & $<\!10^{-9}$ \\
4 & 2 & 0.416667 & 0.416667 & $<\!10^{-9}$ \\
5 & 2 & 0.322956 & 0.322958 & $2 \times 10^{-6}$ \\
6 & 2 & 0.264379 & 0.264379 & $<\!10^{-9}$ \\
4 & 3 & 0.645062 & 0.645062 & $<\!10^{-9}$ \\
5 & 3 & 0.480808 & 0.480854 & $5 \times 10^{-5}$ \\
6 & 3 & 0.385211 & 0.385216 & $5 \times 10^{-6}$ \\
5 & 4 & 0.690881 & 0.690881 & $<\!10^{-9}$ \\
6 & 4 & 0.533262 & 0.533275 & $1 \times 10^{-5}$ \\
6 & 5 & 0.725987 & 0.725987 & $<\!10^{-9}$ \\
7 & 6 & 0.753706 & 0.753706 & $<\!10^{-9}$ \\
\bottomrule
\end{tabular}
\caption{Comparison of seed-based (L\"owdin) and general rank-1 optimization. The gap is $\omega_q^{\mathrm{general}} - \omega_q^{\mathrm{seed}}$. In 7 of 11 cases the gap is below $10^{-9}$; in the remaining cases the discrepancy is at most $5 \times 10^{-5}$, within the tolerance expected from nonconvex optimization with finite restarts.}
\label{tab:ansatz-comparison}
\end{table}

For the parameter ranges studied, the seed vector Ansatz appears to incur no loss relative to unconstrained rank-1 optimization. In 7 of 11 cases, the gap between the two optimizations is below $10^{-9}$, indicating identical optima to numerical precision. In the remaining cases, the discrepancies are at most $5 \times 10^{-5}$, well within the tolerance expected from nonconvex optimization with finite restarts.

This observation is noteworthy but should be interpreted with caution. A priori, the seed Ansatz constrains measurements across different safe sets to be related through a common set of seed vectors, while the general rank-1 approach allows each safe set's measurement basis to be chosen independently. For the \emph{small} $(n, d)$ values accessible to numerical optimization, the Ansatz matches unconstrained optimization, and for the two cases where the SDP bound is tight ($(n, d) = (3, 2)$ and $(4, 2)$) the seed optimization achieves the certified quantum optimum. Whether this pattern persists for larger parameters remains an open question.